\newcommand{\labeltext}[2]{%
  \@bsphack
  \csname phantomsection\endcsname 
  \def\@currentlabel{#1}{\label{#2}}%
  \@esphack
}
\begin{document}
%
%
%
%
%
\definecolor{pcolor}{RGB}{156,90,107}
\definecolor{tcolor}{RGB}{0,102,204}
\definecolor{turquoise}{RGB}{0,153,153}
\definecolor{light-blue}{RGB}{51,153,255}
\definecolor{dark-blue}{RGB}{0,0,204}
\definecolor{purple}{RGB}{76,0,153}
\definecolor{red}{RGB}{212,1,1}

\newcommand{\hide}[1]{}
\newif\ifprettysyntax
\prettysyntaxtrue
\ifprettysyntax
\newcommand{\zero}{{\textcolor{pcolor}{\mathsf 0}}}
\newcommand{\fwd}[2]{\textcolor{pcolor}{\mathsf{fwd} }\ #1 \ #2} 
\newcommand{\fwdA}[3]{\textcolor{pcolor}{\mathsf{fwd}}_{#1} \ #2 \ #3} 
\newcommand{\mix}[2]{#1 \ \textcolor{pcolor}{\mathsf{||}} \ #2}
\newcommand{\shareP}{\textcolor{pcolor}{\mathsf{share}}}
\newcommand{\cutP}{\textcolor{pcolor}{\mathsf{cut}}}
\newcommand{\cutPc}{\textcolor{pcolor}{\mathsf{pcut}}}
\newcommand{\cutPi}{\textcolor{pcolor}{\mathsf{cut!}}}
\newcommand{\mixP}{\textcolor{pcolor}{\mathsf{par}}}
\newcommand{\mixD}{\textcolor{pcolor}{\mathsf{||}}}
\newcommand{\srecc}[3]{#1\left< #2 \right> #3}

\newcommand{\cutD}[1]{\textcolor{pcolor}{\mathsf{|}} #1 \textcolor{pcolor}{\mathsf{|}}}
\newcommand{\cut}[3]{#2   \ \textcolor{pcolor}{\mathsf{|}} #1 \textcolor{pcolor}{\mathsf{|}}   \ #3}
\newcommand{\cutB}[4]{#2.#3 \  \textcolor{pcolor}{\mathsf{|}}!#1\textcolor{pcolor}{\mathsf{|}} \ #4}
\newcommand{\cutp}[3]{ \cutPc \;\{ #2   \ \textcolor{pcolor}{\mathsf{|}} #1 \textcolor{pcolor}{\mathsf{|}}   \ #3 \}}
\newcommand{\cuti}[3]{ \cutP \;\{ #2   \ \textcolor{pcolor}{\mathsf{|}} #1 \textcolor{pcolor}{\mathsf{|}}   \ #3 \}}
\newcommand{\cutBi}[4]{\cutPi \;\{ #2.#3 \  \textcolor{pcolor}{\mathsf{|}}!#1\textcolor{pcolor}{\mathsf{|}} \ #4  \}}
\newcommand{\share}[3]{\textcolor{pcolor}{\mathsf{share}} \ #1 \ \{  #2 \ \textcolor{pcolor}{||} \ #3 \}}
\newcommand{\Blang}[0]{\mathsf{CLLB}}
\newcommand{\withm}[3]{\with_{#1\in #2}{#3}}
\newcommand{\oplusm}[3]{\oplus_{#1\in #2}{#3}}
\newcommand{\labl}[1]{\# \mathsf{#1}}
\newcommand{\pcasem}[4]{\textcolor{pcolor}{\mathsf{case}} \ #1 \  \{ 
|{\labl{#2}\in #3}{:} #4 \}} 
\newcommand{\nsum}[2]{#1 \  \textcolor{pcolor}{+} \ #2}
\newcommand{\pone}[1]{\textcolor{pcolor}{\mathsf{close}} \ #1}
\newcommand{\pbot}[2]{\textcolor{pcolor}{\mathsf{wait}} \ #1;#2}
\newcommand{\pparl}[3]{\textcolor{pcolor}{\mathsf{recv}} \ #1(#2);#3}
\newcommand{\potimes}[4]{\textcolor{pcolor}{\mathsf{send}} \ #1(#2.#3);#4}  
\newcommand{\fout}[3]{\textcolor{pcolor}{\mathsf{send}} \ #1(#2);#3}
\newcommand{\pcase}[3]{\textcolor{pcolor}{\mathsf{case}} \ #1 \  \{ |\#\mathsf{inl}{:} #2 \ | \#\mathsf{inr}{:} #3\}} 

\newcommand{\ass}{{:}}
\newcommand{\E}{\mathcal{E}}
\newcommand{\B}{\mathcal{B}}
\newcommand{\G}{\mathcal{G}}
\newcommand{\F}{\mathcal{F}}
\newcommand{\C}{\mathcal{C}}
\newcommand{\D}{\mathcal{D}}
\newcommand{\gcase}[2]{\textcolor{pcolor}{\mathsf{case}} \ #1 \  \{ #2\}} 
\newcommand{\cleft}[2]{\#\textcolor{pcolor}{\mathsf{inl}}\ #1;#2}
\newcommand{\clab}[3]{\#\textcolor{pcolor}{\mathsf{#3}}\ #1;#2}
\newcommand{\cright}[2]{\#\textcolor{pcolor}{\mathsf{inr}}\ #1;#2}
\newcommand{\choice}[3]{\mathsf{#1} \ #2; #3}
\newcommand{\pbang}[3]{\textcolor{pcolor}{\mathsf{!}}#1(#2);#3} 
\newcommand{\pwhy}[2]{\textcolor{pcolor}{\mathsf{?}}#1; #2}
\newcommand{\pcopy}[3]{\textcolor{pcolor}{\mathsf{call}} \ #1(#2); #3} 
\newcommand{\ncell}[3]{\textcolor{pcolor}{\mathsf{cell}} \ #1(#2.#3)}
\newcommand{\nfree}[1]{\textcolor{pcolor}{\mathsf{release}} \ #1} 
\newcommand{\rd}[3]{\textcolor{pcolor}{\mathsf{read}} \ #1(#2);#3}
\newcommand{\nwrt}[4]{\textcolor{pcolor}{\mathsf{wrt}} \ #1(#2.#3);#4} 
\newcommand{\fwrt}[3]{\textcolor{pcolor}{\mathsf{wrt}} \ #1 \ #2;#3}
\newcommand{\fwdE}[2]{\textcolor{pcolor}{\mathsf{fwd}^!} \ #1 \ #2}
\newcommand{\fwdEA}[3]{\textcolor{pcolor}{\mathsf{fwd}^!}_{#1} \ #2 \ #3}
\newcommand{\cellE}[3]{\textcolor{pcolor}{\mathsf{cell}_!} \ #1 \ #2 \ #3 }
\newcommand{\pexists}[3]{\textcolor{pcolor}{\mathsf{sendty}} \ #1(#2); #3}
\newcommand{\pforall}[3]{\textcolor{pcolor}{\mathsf{recvty}} \ #1(#2);#3} 
\newcommand{\lock}[2]{\textcolor{pcolor}{\mathsf{lock}} \ #1;#2} 
\newcommand{\unlock}[2]{\textcolor{pcolor}{\mathsf{unlock}} \ #1;#2} 
\newcommand{\shareL}[3]{\textcolor{pcolor}{\mathsf{shareL}} \ #1 \ \{  #2 \ \textcolor{pcolor}{||} \ #3  \}}
\newcommand{\shareR}[3]{\textcolor{pcolor}{\mathsf{shareR}} \ #1 \ \{   #2 \ \textcolor{pcolor}{||} \ #3  \}}
\newcommand{\mlet}[2]{\textcolor{pcolor}{\mathsf{let}} \ #1 \ #2}
\newcommand{\mletB}[2]{\textcolor{pcolor}{\mathsf{let!}} \ #1 \ #2}
\newcommand{\mif}[3]{\textcolor{pcolor}{\mathsf{if}} \ (#1)\{#2\}\{#3\}}
\else 
\newcommand{\zero}{{\mathbf 0}}
\newcommand{\fwd}[2]{#1 \leftrightarrow #2} 
\newcommand{\mix}[2]{#1  \parallel   #2}
\newcommand{\cut}[3]{#2  \parallel_{#1}  #3}
\newcommand{\cutB}[4]{#2.#3   \parallel_{!#1} #4}
\newcommand{\share}[3]{{#1.\mathsf{share}}[#2\;|\;#3]}
\newcommand{\nsum}[2]{#1  + #2}
\newcommand{\pone}[1]{\dual{#1}.\mathsf{close}}
\newcommand{\pbot}[2]{#1.\mathsf{close};#2} 
\newcommand{\pparl}[3]{#1(#2){;}#3}
\newcommand{\potimes}[4]{\dual{#1}(#2.#3);#4}
\newcommand{\fout}[3]{\dual{#1}\langle #2 \rangle{;} #3}
\newcommand{\pcase}[3]{#1.\mathsf{case}[#2, #3]} 
\newcommand{\cleft}[2]{\dual{#1}.\mathsf{inl};#2}
\newcommand{\cright}[2]{\dual{#1}.\mathsf{inr};#2} 
\newcommand{\pbang}[3]{!#1(#2.#3)} 
\newcommand{\pwhy}[2]{?#1;#2}
\newcommand{\pcopy}[3]{\dual{#1}?(#2);#3} 
\newcommand{\ncell}[3]{#1.\mathsf{cell}(#2. #3)}
\newcommand{\rd}[3]{#1.\mathsf{rd}(#2){;}#3}
\newcommand{\nwrt}[4]{#1.\mathsf{wr}(#2.#3){;}#4} 
\newcommand{\fwrt}[3]{#1.\mathsf{wr}\langle #2 \rangle; #3}
\newcommand{\fwdE}[2]{#1 \leftrightarrow_! #2}
\newcommand{\cellE}[3]{#1.\mathsf{cell}(\fwdE{#2}{#3})}
\newcommand{\pexists}[3]{\dual{#1}\langle #2 \rangle; #3}
\newcommand{\pforall}[3]{#1(#2);#3} 
\newcommand{\lock}[2]{#1.\mathsf{lock};#2} 
\newcommand{\unlock}[2]{#1.\mathsf{unlk};#2} 
\newcommand{\shareL}[3]{#1.\mathsf{shareL}[#2 \;|\;  #3]} 
\newcommand{\shareR}[3]{#1.\mathsf{shareR}[#2 \;|\;  #3]} 
\fi
\newcommand{\deff}{\triangleq}
\newcommand{\LC}[1]{{\color{red}#1}}
\newcommand{\PR}[1]{{\color{blue}#1}}
\newcommand{\tto}[0]{\Rightarrow}
\newcommand{\sep}[0]{\;|\;}
\newcommand{\dual}[1]{\overline{#1}}
\newcommand{\subs}[3]{\{#1/#2\}#3} 
\newcommand{\texists}[2]{\exists #1. #2}
\newcommand{\tforall}[2]{\forall #1.#2} 
\newcommand{\with}{\mathbin{\binampersand}}
\colorlet{dgreen}{green!40!black}
\newcommand{\cboxwhy}[1]{{\textcolor{dgreen}{\boxwhy}} #1} 
\newcommand{\clboxwhy}[1]{{\textcolor{red}{\lboxwhy}} #1} 
\newcommand{\cboxbang}[1]{{\textcolor{dgreen}{\boxbang}} #1}
\newcommand{\clboxbang}[1]{{\textcolor{red}{\lboxbang}} #1}
\newcommand{\lboxbang}{\makebox[0pt][l]{$\boxslash$}\raisebox{0ex}{\hspace{0.27em}{\scriptsize $!$}}\;}
\newcommand{\lboxwhy} {\makebox[0pt][l]{$\boxslash$}\raisebox{0ex}{\hspace{0.21em}{\scriptsize $?$}}\;}
\newcommand{\boxbang}{\makebox[0pt][l]{$\boxempty$}\raisebox{0ex}{\hspace{0.27em}{\scriptsize $!$}}\;}
\newcommand{\boxwhy} {\makebox[0pt][l]{$\boxempty$}\raisebox{0ex}{\hspace{0.21em}{\scriptsize $?$}}\;}
\newcommand{\inter}[3]{ \mathcal I_{#1}(#2, #3)}
\newcommand{\ptypes}[0]{PaT}
%
\lstset{	basicstyle=\linespread{0.5}\sffamily\small,
columns=fullflexible,
    literate=*{?}{{\textcolor{pcolor}{?}}}{1}
    		 {!}{{\textcolor{pcolor}{!}}}{1}
    		{~}{{\fontfamily{ptm}\selectfont \textcolor{pcolor}{\textasciitilde} }}1,
  }
\lstdefinelanguage{CLASS} {
morekeywords={[1] type}, sensitive=true,
morekeywords ={[2] proc}, sensitive = true, 
morekeywords={[3] close, wait, case, of, offer, choice, recv, send, state, statel, usage, usagel, affine, coaffine, call,
take, put, release, lint, colint, let, print, println, 
sendty, recvty, cell, usage, free, rd, wrt, lock, unlk, par, cut, fwd, share}, sensitive=false,
morekeywords={[4] \~{}}, sensitive=true,
keywordstyle={[1] \color{pcolor}\bfseries},
keywordstyle={[2] \color{pcolor}\bfseries},
keywordstyle={[3] \color{pcolor}\bfseries},
keywordstyle={[4] \color{pcolor}},
morecomment=[l]{--},
morecomment=[s]{/*}{*/},
morestring=[b]", }
\lstset{language=CLASS}
%
\newcommand{\nf}[1]{\llbracket #1 \rrbracket} 
\newcommand{\nfwhy}[1]{\llbracket #1 \rrbracket_?}
\newcommand{\obs}[2]{#1\downarrow_{#2}}
\newcommand{\nshare}[2]{\text{nshare}(#1, #2)}
\newcommand{\offers}{\vartriangleright} 
\newcommand{\linctxt}[1]{\lpred{#1}}
\newcommand{\unrctxt}[1]{{\lpred{#1}^!}}
\newcommand{\true}{\text{true}}
\newcommand{\false}{\text{false}}
\newcommand{\cuteval}[1]{\lto{#1}} 
\newcommand{\ndexp}[1]{\lto{#1}_\epsilon}
\newcommand{\shexp}{\to_\epsilon}
\newcommand{\dzero}{T\zero}
\newcommand{\dfwd}[2]{T\text{fwd} \ #1#2}
\newcommand{\dmix}[2]{T\text{mix} \ #1 #2}
\newcommand{\dcut}[3]{T\text{cut} \ (#1)#2#3}
\newcommand{\dshare}[3]{Tshare \ #1#2#3}
\newcommand{\dsum}[2]{T\text{sum} \ #1#2} 
\newcommand{\dcutB}[4]{Tcut! \ #1(#2)#3#4}
\newcommand{\dcopy}[3]{Tcopy \ #1(#2)#3}
\newcommand{\done}[1]{T\one \ #1}
\newcommand{\dbot}[2]{T\bot \ #1#2}
\newcommand{\dotimes}[4]{T\otimes \ #1(#2)#3#4}
\newcommand{\dparl}[3]{T\parl \ #1(#2)#3}
\newcommand{\dbang}[3]{T! \ #1(#2)#3}
\newcommand{\dwhy}[2]{T? \ #1#2}
\newcommand{\dcell}[3]{Tcell \ #1(#2)(#3)}
\newcommand{\dfree}[2]{Tfree \ #1#2} 
\newcommand{\drd}[3]{Tread \ #1(#2)#3} 
\newcommand{\dwrt}[4]{Twrt \ #1(#2)#3#4}
\newcommand{\doplusl}[2]{T\oplus_l \ #1#2}
\newcommand{\doplusr}[2]{T\oplus_r \ #1#2}
\newcommand{\dand}[3]{T\& \ #1#2#3}
\newcommand{\tot}{\xrightarrow[]{*}}
\newcommand{\toc}{\to_{\mathsf{c}}}
\newcommand{\toct}{\xrightarrow[]{*}_{\mathsf{c}}}
\newcommand{\equivc}{\equiv_{\mathsf{c}}}
\newcommand{\toS}{\mathrel{\mathrlap{\rightarrow}\mkern1mu\rightarrow}}
\newcommand{\toSt}{\mathrel{\mathrlap{\xrightarrow[]{*}}\mkern1mu\rightarrow}}
\newcommand{\fto}{\rightarrowdbl}
\newcommand{\ftoS}{\xrightarrowdbl[]{*}}
\newcommand{\lto}[1]{\xrightarrow[]{#1}}
\newcommand{\llto}[1]{\xRightarrow[]{#1}}
\newcommand{\ltos}[1]{\xrightarrow[s]{#1}}
\newcommand{\ltoS}[1]{\xRightarrow[]{#1}}
\newcommand{\lpred}[1]{\llbracket #1 \rrbracket} 
\newcommand{\lpreds}[1]{\lpred{#1}_s}
\newcommand{\lpredv}[1]{\lpred{#1}_{\to}}
\newcommand{\primeD}[1]{\mathcal P_r(#1)} 
\newcommand{\interE}[5]{\mathcal I^{\rightleftarrows}(#1,#2,#3,#4,#5)}
\newcommand{\rightarrowdbl}{\rightarrow\mathrel{\mkern-14mu}\rightarrow}
\newcommand{\xrightarrowdbl}[2][]{%
  \xrightarrow[#1]{#2}\mathrel{\mkern-14mu}\rightarrow
}
\newcommand{\one}{{\mathbf1}}
\newcommand{\parl}{\mathbin{\bindnasrepma}}
\newcommand{\aotimes}[3]{\dual{#1}(#2.#3)} 
\newcommand{\aclose}[1]{#1.\text{close}}
\newcommand{\ain}[2]{#1\langle #2 \rangle}
\newcommand{\arin}[2]{#1!\langle #2 \rangle}
\newcommand{\awhy}[1]{?#1}
\newcommand{\afree}[1]{#1.\text{free}}
\newcommand{\ard}[2]{#1.\text{read}\langle #2 \rangle}
\newcommand{\awrt}[2]{#1.\text{write}\langle #2 \rangle}
\newcommand{\aleft}[1]{#1.\text{inl}} 
\newcommand{\aright}[1]{#1.\text{inr}} 
%
\newcommand{\predclose}{\mathcal{P}_\text{close}}
\newcommand{\predcom}[2]{\mathcal{P}_\text{com}(#1, #2)} 
\newcommand{\predcomb}[3]{\mathcal{P}_\text{com!}(#1, #2, #3)}
\newcommand{\predfree}{\mathcal{P}_\text{free}}
\newcommand{\predread}[3]{\mathcal{P}_\text{read}(#1, #2, #3)}
\newcommand{\predwrite}[3]{\mathcal{P}_\text{write}(#1, #2, #3)}
\newcommand{\enlto}[1]{\xrightarrow[+]{#1}} 
\newenvironment{centermath}
 {\begin{center}$\displaystyle}
 {$\end{center}}
\newcommand{\muCLL}{$\mu\mathsf{CLL}$}
\newcommand{\piCLL}{$\mu\mathsf{CLL}$}
\newcommand{\CLL}{\mathsf{CLL}}
\newcommand{\LSS}{$\pi$SS} 
\newcommand{\LSSm}{$\pi$SS$_{\exists, \forall}$} 
\newcommand{\LLocks}{$\pi$SSL}
\newcommand{\emp}[1]{#1.\mathsf{emp}}
\newcommand{\acq}[3]{#1.\mathsf{acq}(#2);#3} 
\newcommand{\rel}[4]{#1.\mathsf{rel}(#2. #3);#4} 
\newcommand{\frel}[3]{#1.\mathsf{rel}\langle #2 \rangle;#3}
\newcommand{\res}[3]{#1.\mathsf{res}(#2.#3)} 
\newcommand{\fres}[2]{#1.\mathsf{res}\langle #2 \rangle} 
\newcommand{\nres}[5]{#1.\mathsf{res}(#2. #3) \mapsto #4.#5} 
\newcommand{\nemp}[3]{#1.\mathsf{empty} \mapsto #2.#3} 
\newcommand{\cboxwhyf}[1]{{\textcolor{dgreen}{\boxwhy_f}} #1} 
\newcommand{\clboxwhyf}[1]{{\textcolor{red}{\lboxwhy_f}} #1} 
\newcommand{\cboxbangf}[1]{{\textcolor{dgreen}{\boxbang_f}} #1}
\newcommand{\clboxbangf}[1]{{\textcolor{red}{\lboxbang_f}} #1}
%
\newcommand{\minus}{\scalebox{0.75}[1.0]{$-$}}
%
%
\newcommand{\fn}[1]{\mathsf{fn}(#1)}
\newcommand{\fv}[1]{\mathsf{fv}(#1)}
%
\newcommand{\CLASS}{$\mathsf{CLASS}$} 
\newcommand{\CLASSs}{ \CLASS{}$\setminus\exists \mu$}
\newcommand{\CLASSc}{$\mathsf{CLASS}_{\mathsf{c}}$} 
\newcommand{\CLLSj}{$\mathsf{CLLSj}$} 
%
\newcommand{\piSSL}{$\pi\mathsf{SSL}$} 
%
\newcommand{\aff}[1]{\wedge #1}
\newcommand{\coaff}[1]{\vee #1}
\newcommand{\cstateE}[1]{\textcolor{red}{\mathsf{S}_\circ} #1}
\newcommand{\cstateF}[1]{\textcolor{dark-blue}{\mathsf{S}_\bullet} #1}
\newcommand{\cusageE}[1]{\textcolor{red}{\mathsf{U}_\circ}  #1}
\newcommand{\cusageF}[1]{\textcolor{dark-blue}{\mathsf{U}_\bullet} #1} 
\newcommand{\stateP}[2]{\textbf{S}_#1 \ #2}
\newcommand{\usageP}[2]{\textbf{U}_#1 \ #2}
\newcommand{\emptyf}{\textcolor{red}{e}} 
\newcommand{\fullf}{\textcolor{dark-blue}{f}} 
%
\newcommand{\affine}[2]{\textcolor{pcolor}{\mathsf{affine}} \ #1; #2} 
\newcommand{\daffine}[4]{\textcolor{pcolor}{\mathsf{affine}}_{#3, #4} \ #1; #2} 
\newcommand{\discard}[1]{\textcolor{pcolor}{\mathsf{discard}} \ #1} 
\newcommand{\use}[2]{\textcolor{pcolor}{\mathsf{use}} \ #1; #2} 
\newcommand{\cell}[3]{\textcolor{pcolor}{\mathsf{cell}} \ #1(#2.#3)} 
\newcommand{\fcell}[2]{\textcolor{pcolor}{\mathsf{cell}} \ #1(#2)} 
\newcommand{\mempty}[1]{\textcolor{pcolor}{\mathsf{empty}} \ #1} 
\newcommand{\free}[1]{\textcolor{pcolor}{\mathsf{release}} \ #1} 
\newcommand{\take}[3]{\textcolor{pcolor}{\mathsf{take}} \ #1(#2);#3} 
\newcommand{\mput}[4]{\textcolor{pcolor}{\mathsf{put}} \ #1(#2.#3);#4}
\newcommand{\fput}[3]{\textcolor{pcolor}{\mathsf{put}} \ #1(#2);#3}
%
%
\newcommand{\ind}{\textcolor{pcolor}{\mathsf{rec}}} 
\newcommand{\type}{\textcolor{pcolor}{\mathsf{type}}} 
\newcommand{\coind}{\textcolor{pcolor}{\mathsf{corec}}} 
\newcommand{\trec}[2]{\mu #1. \ #2} 
\newcommand{\tcorec}[2]{\nu #1. \ #2} 
\newcommand{\unfold}[2]{\textcolor{pcolor}{\mathsf{unfold}_\mu} \ #1; #2} 
\newcommand{\unfoldn}[2]{\textcolor{pcolor}{\mathsf{unfold}_\nu} \ #1; #2} 
\newcommand{\ncorec}[5]{\textcolor{pcolor}{\mathsf{corec}} \ #2(#1, #3) ; #5 \ [#4]} 
\newcommand{\corec}[4]{\textcolor{pcolor}{\mathsf{corec}} \ #2(#1, #3); #4} 
\newcommand{\mloops}[3]{\textcolor{pcolor}{\mathsf{loop}} \ #2(#1); #3} 
\newcommand{\rvar}[2]{#1(#2)} 
\newcommand{\println}[2]{\textcolor{pcolor}{\mathsf{println}} \ #1; #2}
\newcommand{\enc}[1]{\llbracket #1 \rrbracket}
%
%
\newcommand{\impl}[1]{\emph{\text{#1}}}
%
\newcommand{\Lcell}[3]{\textcolor{pcolor}{\mathsf{cell}} \ #1(#2.#3)} 
\newcommand{\Lempty}[3]{\textcolor{pcolor}{\mathsf{empty}} \ #1 (#2.#3)} 
\newcommand{\lpredE}[1]{\mathcal L \llbracket #1 \rrbracket}
\newcommand{\cred}[1]{R[#1]}
\newcommand{\srec}[5]{#1\!\left<#2,#3,#4\right>\!#5} 
\newcommand{\srecs}[4]{#1\!\left<#2,#3\right>\!#4} 
\newcommand{\srecss}[4]{#1\!\left<#2\right>^s\!#4} 
\newcommand{\sreccs}[4]{#1\!\left<#2\right>^s\!#3} 
\newcommand{\ou}[1]{\overline{#1}} 

\newcommand{\vdashB}[0]{\vdash^{\mathsf{B}}}
\newcommand{\tob}[0]{\Rightarrow^\mathsf{B}}
\newcommand{\tor}[0]{\to^\mathsf{Br}}
\newcommand{\tobp}[0]{\Rightarrow^\mathsf{B}_p}
\newcommand{\tobn}[0]{{\stackrel{}{\to}{\!\!\!\;}^{\mathsf{B}n}}\;}
\newcommand{\toba}[0]{{\stackrel{}{\to}{\!\!\!\;}^{\mathsf{B}a}}\;}
\newcommand{\tobaxm}[1]{{\stackrel{#1}{\Rightarrow}{\!\!\!\;}^{\mathsf{B}a}}\;}
\newcommand{\tobax}[1]{{\stackrel{#1}{\to}{\!\!\!\;}^{\mathsf{B}a}}\;}
\newcommand{\tobanm}[0]{{\stackrel{}{\Rightarrow}{\!\!\!\;}^{\mathsf{Bap}}}\;}
\newcommand{\toban}[0]{{\stackrel{}{\to}{\!\!\!\;}^{\mathsf{Bap}}}\;}
\newcommand{\tobb}[0]{{\to^\mathsf{B}}}
\newcommand{\tobz}[1]{{\stackrel{#1}{\to}{\!\!\!\;}^{\mathsf{B}}\;}}
\newcommand{\tobx}[1]{{\stackrel{#1}{\to}{\!\!\!\;}^{\mathsf{B}p}\;}}
\newcommand{\tobnx}[1]{{\stackrel{#1}{\to}{\!\!\!\;}^{\mathsf{B}n}\;}}
\newcommand{\tobxz}[1]{{\stackrel{#1}{\Rightarrow}{\!\!\!\;}^{\mathsf{B}p}}}
\newcommand{\clos}[0]{\textcolor{blue}{\mathsf{clos}}}
\newcommand{\closB}[0]{\textcolor{blue}{\mathsf{clos!}}}
\newcommand{\checkm}[0]{\textcolor{blue}{\checkmark}}
\newcommand{\nilm}[0]{\textcolor{blue}{\mathsf{nil}}}

\newcommand{\mypara}[1]{\noindent\textbf{#1.}}
%
\title{The Session Abstract Machine \\ (Extended Version)
  \vspace{-10pt}}
%
%

\author{Luís Caires\inst{1} \and Bernardo Toninho\inst{2}}
\institute{Técnico Lisboa / INESC-ID 
  \quad \email{luis.caires@tecnico.ulisboa.pt} \and NOVA FCT / NOVA
  LINCS \quad \email{btoninho@fct.unl.pt}}

\authorrunning{L. Caires, B. Toninho}

\maketitle              
\begin{quote}\small
  {\bf Abstract.}
  We build on a fine-grained analysis of session-based interaction as
  provided by the linear logic typing disciplines to introduce the
  SAM, an abstract machine for mechanically executing
  session-typed processes.   A remarkable feature of
  the SAM's design is its ability to naturally segregate and coordinate
  sequential with concurrent session behaviours.
  In particular, implicitly sequential parts of session programs may
  be efficiently executed by deterministic sequential application of SAM
  transitions, amenable to compilation, and without concurrent synchronisation
  mechanisms.
  We provide an intuitive discussion of the SAM structure and its
  underlying design, and state and prove its correctness for executing
  programs in a session calculus corresponding to full classical
  linear logic $\CLL$. We also discuss extensions and applications of
  the SAM to the execution of linear and session-based programming
  languages.
\end{quote}
\keywords{Abstract machine \and Session Types \and Linear Logic}


\section{Introduction}
\vspace{-4pt}

In this work, we build on the linear logic based foundation for
session
types~\cite{DBLP:conf/concur/CairesP10,cairesmscs16,wadler2014propositions}
to construct SAM, an abstract machine specially designed 
for executing session processes typed by (classical) linear logic $\CLL$.   
Although motivated by the session type discipline, which originally emerged in the realm of concurrency and distribution~\cite{DBLP:conf/concur/Honda93,DBLP:conf/esop/HondaVK98,DBLP:journals/acta/GayH05,DBLP:journals/csur/HuttelLVCCDMPRT16}, a basic motivation for designing
the SAM was to provide an efficient deterministic execution model for the implicitly
sequential session-typed program idioms that often proliferate  
in concurrent session-based programming.
It is well-known that in a world of fine-grained concurrency, building on many process-based encodings of concepts such as (abstract) data types, functions, continuations, and effects \cite{DBLP:books/daglib/0098267,DBLP:journals/jfp/Vasconcelos05,Toninho12fossacs,DBLP:conf/esop/ToninhoCP13,DBLP:conf/esop/CairesP17,DBLP:conf/esop/ToninhoY18,DBLP:conf/coordination/PfenningP23}, large parts of the code turn out to be inherently sequential, further justifying the foundational and practical relevance of our results. A remarkable feature of the
SAM's design is therefore its
potential to efficiently coordinate sequential with full-fledged
concurrent behaviours in session-based programming.

Leveraging early work relating linear logic with the semantics of
linear and concurrent
computation~\cite{DBLP:journals/tcs/Abramsky93,bellin.scott:linear-logic,abramsky1996interaction},
the proposition-as-types (PaT)
interpretation~\cite{DBLP:journals/cacm/Wadler15} of linear logic
proofs as a form of well-behaved session-typed nominal calculus has
motivated many developments since its inception
~\cite{DBLP:conf/esop/CairesPPT13,balzer2017manifest,
  DBLP:conf/esop/ToninhoY18,DBLP:conf/ppdp/ToninhoCP21}. We believe
that, much how the $\lambda$-calculus is deemed a canonical typed
model for functional (sequential) computation with pure values, the
session calculus can be accepted as a fairly canonical typed model for
stateful concurrent computation with linear resources, well-rooted in
the trunk of ``classical'' Type Theory.  The PaT interpretation of
session processes also establishes a bridge between more classical
theories of computation and process algebra via logic. It also
reinstates Robin Milner's view of computation as
interaction~\cite{milner1993elements},
``data-as-processes''~\cite{DBLP:books/daglib/0098267} and
``functions-as-processes''~\cite{DBLP:journals/mscs/Milner92}, now in
the setting of a tightly typed world, based on linear logic, where
types may statically ensure key properties like deadlock-freedom,
termination, and correct resource usage in stateful
programs.  Session calculi are
motivating novel programming language design, bringing up new insights
on typeful programming~\cite{Cardelli91} with linear and behavioral
types, e.g.,~\cite{DBLP:journals/lmcs/DasP22,DBLP:conf/esop/RochaC23,DBLP:conf/ecoop/ChenBT22,balzer2017manifest}.
Most systems of typed session calculi have been formulated in process
algebraic form~\cite{DBLP:conf/concur/Honda93,DBLP:conf/esop/HondaVK98,DBLP:journals/acta/GayH05}, or on top of concurrent
$\lambda$-calculi with an extra layer of communication channels
(e.g.,~\cite{gay2010linear}), logically inspired systems such as the
those discussed in this
paper (e.g., ~\cite{DBLP:conf/concur/CairesP10,cairesmscs16,wadler2014propositions,DBLP:conf/fossacs/DardhaG18,DBLP:journals/pacmpl/KokkeMP19,qian2020client,DBLP:journals/pacmpl/FruminDHP22,DBLP:conf/esop/RochaC23})
are defined by a logical proof / type system where proof rules are
seen as witnesses for the typing of process terms, proofs are read as
processes, structural equivalence is proof conversion and computation
corresponds to cut reduction.  These formulations provide a
fundamental semantic foundation to study the model's expressiveness
and meta-theory, but of course do not directly support the concrete
implementation of programming languages based on them.

Although several programming language implementations of
nominal calculi based languages have been proposed for some time
(e.g.~\cite{DBLP:conf/birthday/PierceT00}), with some introducing
abstract machines as the underlying technology
(e.g.,~\cite{DBLP:phd/ethos/Turner96,DBLP:conf/ppdp/LopesSV99}), we
are not aware of any prior design proposal for an abstract machine for
reducing session processes exploiting deep properties of a source
session calculus, as e.g., the
CAM~\cite{DBLP:journals/scp/CousineauCM87}  the
LAM~\cite{DBLP:journals/tcs/Lafont88}, or the
KM~\cite{DBLP:journals/lisp/Krivine07}, which also explore the
Curry-Howard correspondences, may reclaim to be, respectively for
call-by-value cartesian-closed structures, linear logic, and the
call-by-name $\lambda$-calculus.

The SAM reduction strategy explores a form of ``asynchronous''
interaction that essentially expresses that, for processes typed by
the logical discipline, sessions are always pairwise causally
independent, in the sense that immediate communication on some session
is never blocked by communication on a different session. This
property is captured syntactically by prefix commutation equations,
valid commuting conversions in the underlying logic: adding equations
for such laws explicitly to process structural congruence keeps
observational equivalence of $\CLL$ processes untouched~\cite{Perez12esop}.
Combined with insights related to focalisation and polarisation in
linear logic~\cite{DBLP:journals/logcom/Andreoli92,DBLP:conf/fossacs/PfenningG15,DBLP:conf/tlca/Laurent99}, we realize that all communication in any
session may be operationally structured as the exchange
of bundles of positive actions from sender to receiver,
where the roles sender/receiver flip whenever the session type swaps
polarity. Communication may then be mediated by message buffers, first
filled up by the sender (``write-biased'' scheduling), and at a later
time emptied by the receiver.
Building on these observations and on key properties of linear logic
proofs leveraged in well-known purely structural proofs of
progress~\cite{DBLP:conf/concur/CairesP10,cairesmscs16,DBLP:conf/esop/RochaC23}, we identify a sequential and deterministic
reduction strategy for $\CLL$ typed processes, based on a form of
co-routining where continuations are associated to session queues, and
``context switching'' occurs whenever polarity flips.  That such
strategy works at all, preserving all the required correctness
properties of the $\CLL$ language does not seem immediately obvious,
given that each processes may sequentially perform multiple actions on
many different sessions, meaning that multiple context switches must
be interleaved. The bulk of our paper is then devoted to establishing
all such properties in a precise technical sense.
We believe that the SAM may provide a principled foundation
for safe execution environments for programming
languages combining functional, imperative and concurrent idioms based on session and linear types, as witnessed in practice
for Rust~\cite{rust-lang}, (Linear) Haskell 
\cite{DBLP:conf/haskell/LindleyM16,DBLP:journals/pacmpl/BernardyBNJS18,DBLP:conf/haskell/KokkeD21}, Move~\cite{move-lang},
and in research languages~\cite{rocha2021propositions,DBLP:journals/pacmpl/JacobsB23,DBLP:journals/lmcs/DasP22}. To
further substantiate these views we have developed an
implementation of the SAM, integrated in a language for realistic session-based
shared-state programs~\cite{artifactesop24}.


\mypara{Outline and Contributions}
In Section~\ref{sec:language} we briefly review the session-typed
calculus $\CLL$, which exactly corresponds to (classical) Linear Logic
with mix. In Section~\ref{sec:coreSAM} we discuss the motivation and
design principles of the core SAM, gradually presenting its structure
for the language fragment corresponding to session types without the
exponentials, which will be introduced later.  Even if the core SAM
structure and transition rules are fairly simple, the proofs of
correctness are more technically involved, and require progressive
build up. Therefore, we first bridge between $\CLL$ and SAM via a
intermediate logical language $\Blang$, introducing explicit queues in
cuts, presented in Section~\ref{sec:blang}. We show preservation
(Theorem~\ref{theorem:type-preservation}) and progress
(Theorem~\ref{theorem:progress}) for $\Blang$, and prove that there is
two way simulation between $\Blang$ and $\CLL$ via a strong
operational correspondence (Theorem~\ref{teo:cll-b}). Given this
correspondence, in Section~\ref{sec:soundcore} we state and prove the
adequacy of the SAM for executing $\CLL$ processes, showing soundness
wrt. $\Blang$ (Theorem~\ref{soundcllb}) and $\CLL$
(Theorem~\ref{soundcll}), and progress / deadlock absense
(Theorem~\ref{samprogress}).  In Section~\ref{sec:sam-mix} modularly
extend the previous results to the exponentials and mix, and revise
the core SAM by introducing explicit environments, stating the
associated adequacy results (Theorem~\ref{fulsamsound} and
Theorem~\ref{fullsamprogress}). We also discuss how to accommodate
concurrency, and other extensions in the SAM. We conclude by a
discussion of related work and additional remarks.
Appendix~\ref{appendix} includes additional lemmas and proofs.



\vspace{-10pt}
\section{Background on $\CLL$, the core language and type system }\label{sec:language}

We start by revisiting the language and type system of $\CLL{}$, and its operational semantics. The system is based on a PaT interpretation of classical linear logic (we follow the presentations of~\cite{cairesmscs16,caires2017linearity,rocha2021propositions}).
\begin{definition}[Types] Types $A,B$ are defined by 
\vspace{6pt}

	$
	\begin{array}{lllllllllllll}
	A, B ::= 
	 &\one  &\sep \bot    &\sep  A\parl B   & \sep A \otimes B &\sep 
	 \withm{\ell}{L}{A_\ell}  &\sep  \oplusm{\ell}{L}{A_\ell}   &\sep !A & \sep  ?A \\
	\end{array}
	$
\end{definition}

Types  
comprise of the units ($\one$, $\bot$), multiplicatives ($\otimes$,
$\parl$), additives ($\oplusm{\ell}{L}{A_\ell}$,
$\withm{\ell}{L}{A_\ell}$) and  exponentials ($!$, $?$).  
We adopt here a labeled version of the additives, where 
the linear logic sum type $A_{\labl{inl}}\oplus A_{\labl{inr}}$ is defined by $\oplusm{\ell}{
\{\labl{inl},\labl{inr}\}}{A_\ell}$.
The \emph{positive types} are $\one$, $\otimes$, $\oplus$, and $!$,
while the \emph{negative types} are $\bot$, $\parl$, $\with$ and $?$.
We abbreviate $\dual A\parl B$ by $A\multimap B$.
We write $A^{+}$ (resp. $A^{-}$) to assert that $A$ is a positive (resp. negative) type.
%
Type \emph{duality} $\dual{A}$ corresponds to negation: 
\[
\begin{array}{llllllllllllllll}
\dual{\one} &=&\bot  &&&\hspace{15pt} \dual{A \otimes B} &=&\dual{A} \parl \dual{B}   &&&\hspace{15pt} \dual{\oplusm{\ell}{L}{A_\ell}} &=& \withm{\ell}{L}{\dual{A}_\ell} &\hspace{15pt}
\dual{!A} &=&?\dual{B}  
\end{array}
\]
Duality captures the symmetry of behaviour in binary process
interaction, as manifest in the cut rule.

\begin{definition}[Processes]
  The syntax of processes $P,Q$ is given by:
  $$
  \begin{array}{lcl}
  P, Q & ::= & \zero \mid \mix{P}{Q} \mid \fwd{x}{y} \mid
               \cuti{x{:}A}{P}{Q} \mid \pone{x} \mid \pbot{x}{P}\\
        &       \mid &
  \pcasem{x}{\ell}{L}{P_\ell}
 \mid \cright{x}{P} \mid
                       \potimes{x}{y}{P}{Q} \mid \pparl{x}{z}{P}\\
    &\mid & \pbang{x}{y}{P}
    \mid \pwhy{x}{P} \mid \cutBi{x:A}{y}{P}{Q} \mid \pcopy{x}{z}{Q}
    \end{array}
  $$

\end{definition}
\emph{Typing judgements} have the form $P \vdash \Delta; \Gamma$, where $P$ is a process and
the \emph{typing context} $\Delta;\Gamma$ is
dyadic~\cite{DBLP:journals/logcom/Andreoli92,Benton94amixed,DBLP:conf/lics/Pfenning95,DBLP:conf/concur/CairesP10}:
both $\Delta$ and $\Gamma$ assign types to names, the context $\Delta$
is handled linearly (no implicit contraction or weakening) while the
exponential context $\Gamma$ is unrestricted.
The type system exactly corresponds, via a propositions-as-types 
correspondence, to the canonical proof
system of Classical Linear Logic with Mix.
When a cut type annotation is easily inferred, we may omit it and
write $\cuti{x}{P}{Q}$.  
	The typing rules of $\CLL{}$ are given in
        Fig.~\ref{fig:typing-muCLL}.

        The process $\zero$ denotes the inactive process, typed in the empty
linear context (rule [T$\zero$]). $\mix{P}{Q}$ denotes independent
parallel composition of processes $P$ and $Q$ (rule [Tmix]),
whereas $\cuti{x{:}A}{P}{Q}$ denotes interfering parallel composition of $P$ and
$Q$, where $P$ and $Q$ share exactly one channel name $x$, typed as
$A$ in $P$ and $\dual{A}$ in $Q$ (rule [Tcut]).
The
construct $\fwd{x}{y}$ captures forwarding between dually typed names
$x$ and $y$ (rule [Tfwd]), which operationally consists in (globally) renaming
$x$ for $y$.

Processes $\pone{x}$ and $\pbot{x}{P}$ denote session termination and
the dual action of waiting for session termination, respectively
(rules [T$\one$] and [T$\bot$]).
The constructs $\pcasem{x}{\ell}{L}{P_\ell}$ and $\labl{l}\ x;P$
denote label input and output, respectively, where the input construct
pattern matches on the received label to select the
process continuation that is to run.
Process $\potimes{x}{y}{P_1}{P_2}$ and $\pparl{x}{z}{Q}$ codify
the output of (fresh) name $y$ on channel $x$ and the corresponding
input action, where the received name will be substituted for $z$ in
$Q$ (rules [T$\otimes$] and [T$\parl$]).
Typing ensures that the names used in $P_1$ and $P_2$ are disjoint.

Processes $\pbang{x}{y}{P}$, $\pwhy{x}{Q}$ and $\pcopy{x}{z}{Q}$
embody replicated servers and client processes. Process
$\pbang{x}{y}{P}$ consists of a process that waits for inputs on $x$,
spawning a replica of $P$ (depending on no linear sessions -- rule [T$!$]). Process $\pwhy{x}{Q}$ and $\pcopy{x}{z}{Q}$
allow for replicated servers to be activated and subsequently used as (fresh)
linear sessions (rules [T$?$] and [Tcall]).
Composition of exponentials is achieved by the $\cutBi{x:A}{y}{P}{Q}$
process, where $P$ cannot depend on linear sessions and so may be
safely replicated.

\begin{figure}[t]
{\small$$
\begin{array}{c}
\begin{prooftree}
	\infer[no rule]0{\labeltext{[T$\zero$]}{[Tzero]}}
	\infer1[\text{[T$\zero$]} ]{\zero \vdash \emptyset;\Gamma}
\end{prooftree}
\quad 
\begin{prooftree}
	\infer[no rule]0{\labeltext{[Tmix]}{[Tmix]}}
	\infer[no rule]1{P \vdash \Delta'; \Gamma \quad Q \vdash \Delta; \Gamma}
	\infer1[\text{[Tmix]}]{\mix{P}{Q} \vdash \Delta', \Delta; \Gamma} 
\end{prooftree} 
\\ 
\begin{prooftree}
	\infer[no rule]0{\labeltext{[Tfwd]}{[Tfwd]}}
\infer1[\text{[Tfwd]}]{\fwd{x}{y} \vdash x:\dual A, y:A; \Gamma}
\end{prooftree} 
\quad
\begin{prooftree}
	\infer[no rule]0{\labeltext{[Tcut]}{[Tcut]}}
	\infer[no rule]1{P \vdash \Delta', x:A; \Gamma \quad Q \vdash \Delta, x:\dual A; \Gamma} 
	\infer1{\cuti{x:A}{P}{Q} \vdash \Delta', \Delta; \Gamma} 
\end{prooftree} \text{[Tcut]}
\vspace{4pt}\\
\begin{prooftree}
	\infer[no rule]0{\;}
	\infer1[\text{[T$\one$]}]{\pone{x} \vdash x:\one; \Gamma} 
\end{prooftree} 
\quad 
\begin{prooftree}
	\infer[no rule]0{Q \vdash \Delta; \Gamma}
	\infer1[\text{[T$\bot$]}]{\pbot{x}{Q} \vdash \Delta, x:\bot; \Gamma} 
\end{prooftree} 
\vspace{4pt}\\
\begin{prooftree}
	\infer[no rule]0{P_\ell \vdash \Delta, x: A_\ell 
	;\Gamma \quad (\mathit{all}\ \ell\in L)}
	\infer1[\text{[T$\with$]}]{
	\pcasem{x}{\ell}{L}{P_\ell} \vdash \Delta, x: \withm{\ell}{L}{A_\ell}; \Gamma} 
\end{prooftree} 
\quad
\begin{prooftree}
	\infer[no rule]0{Q \vdash \Delta', x:A_{\labl{l}}; \Gamma\quad \labl{l}\in L} 
	\infer1[\text{[T$\oplus$]}]{\labl{l}\ x;Q \vdash \Delta', 
	x:\oplusm{\ell}{L}{A_\ell} ; \Gamma} 
\end{prooftree}
\hide{ 
\hspace{0.3cm}
\begin{prooftree}
	\infer[no rule]0{Q_2 \vdash \Delta', x:B; \Gamma}
	\infer1[\text{[T$\oplus_r$]}]{\cright{x}{Q_2} \vdash \Delta', x:A \oplus B; \Gamma} 
\end{prooftree}} 
\vspace{4pt} \\
\begin{prooftree}
	\infer[no rule]0{P_1 \vdash \Delta_1, y: A; \Gamma \quad P_2 \vdash \Delta_2, x:  B; \Gamma}
	\infer1[\text{[T$\otimes$]}]{\potimes{x}{y}{P_1}{P_2} \vdash \Delta_1, \Delta_2, x: A \otimes  B; \Gamma} 
\end{prooftree} 
\vspace{4pt}\\
\begin{prooftree}
	\infer[no rule]0{Q \vdash \Delta, z:A, x: B; \Gamma} 
	\infer1[\text{[T$\parl$]}]{\pparl{x}{z}{Q} \vdash \Delta, x:A \parl B; \Gamma} 
\end{prooftree}
\vspace{4pt}\\
\begin{prooftree}
	\infer[no rule]0{P \vdash y: A; \Gamma} 
	\infer1[\text{[T!]}]{\pbang{x}{y}{P} \vdash x:! A; \Gamma}
\end{prooftree} 
\quad 
\begin{prooftree}
	\infer[no rule]0{Q \vdash \Delta; \Gamma, x:A} 
	\infer1[\text{[T?]}]{\pwhy{x}{Q} \vdash
          \Delta, x:?A; \Gamma} 
\end{prooftree} 
\vspace{4pt}\\
\begin{prooftree}
	\infer[no rule]0{\labeltext{[Tcut!]}{[Tcut!]}}
	\infer[no rule]1{P \vdash y:A; \Gamma \quad  Q \vdash\Delta; \Gamma, x:\dual A} 
	\infer1[\text{[Tcut!]}]{\cutBi{x:A}{y}{P}{Q} \vdash \Delta; \Gamma} 
\end{prooftree} 
\quad 
\begin{prooftree}
	\infer[no rule]0{Q \vdash \Delta, z:A; \Gamma, x:A} 
	\infer1[\text{[Tcall]}]{\pcopy{x}{z}{Q} \vdash \Delta; \Gamma, x:A} 
\end{prooftree}
\end{array} 
$$}
\vspace{-10pt}
\caption{\label{fig:typing-muCLL}Typing Rules of $\CLL$.}
\end{figure}

%


We call {\em action} any process that is either a forwarder or realizes an  
introduction rule, and denote by $\mathcal{A}$ the set of all actions, by $\mathcal{A}(x)$ the set of action with subject $x$ (the subject of an action is the
channel name in which it interacts~\cite{DBLP:books/daglib/0098267}).
An action is deemed \emph{positive} (resp. \emph{negative}) if its
associated type is positive (resp. negative) in the sense of \emph{focusing}. The set of positive (resp. negative) actions is denoted by $\mathcal{A}^+$ (resp. $\mathcal{A}^-$). We sometimes use, e.g., $\mathcal A$ or 
$\mathcal{A}^+(x)$ to denote a process in the set.
%
The $\CLL$ operational semantics is given by a \emph{structural congruence} relation $\equiv$ that captures static identities on processes, corresponding to commuting conversions in the logic, and a \emph{reduction} relation $\to$ that captures  process interaction, and corresponds to cut-elimination steps.
 \begin{figure}[t]
{\small$$
\begin{array}{ll}
 \fwd{x}{y} \equiv \fwd{y}{x} & \text{[fwd]} 
 \vspace{6pt}\\
  \cuti{x:A}{P}{Q} \equiv \cuti{x:\dual A}{Q}{P}  & \text{[com]} 
\vspace{6pt}\\
\mix{P}{\zero} \equiv P \quad \mix{P}{Q} \equiv \mix{Q}{P}   \quad
 \mix{P}{(\mix{Q}{R})} \equiv \mix{(\mix{P}{Q})}{R}   \quad  \ & \text{[par]} 
\vspace{6pt}\\
\hide{
\mix{P}{(\mix{Q}{R})} \equiv \mix{(\mix{P}{Q})}{R} & \text{[MM]} 
\vspace{6pt}\\
}
\cuti{x}{P}{(\mix{Q}{R})} \equiv \mix{(\cuti{x}{P}{Q})}{R}& \text{[CM]} 
\vspace{6pt}\\
  \cuti{x}{P}{(\cuti{y}{Q}{R})} \equiv \cuti{y}{(\cuti{x}{P}{Q})}{R}& \text{[CC]}
\vspace{6pt}\\
 \cuti{z}{P}{(\cutBi{x}{y}{Q}{R})} \equiv \cutBi{x}{y}{Q}{(\cuti{z}{P}{R})}&  \text{[CC!]} 
\vspace{6pt}\\
 \cutBi{x}{y}{Q}{(\mix{P}{R})} \equiv \mix{P}{(\cutBi{x}{y}{Q}{R})}& \text{[C!M]} 
\vspace{6pt}\\
 \cutBi{x}{y}{P}{(\cutBi{z}{w}{Q}{R})} \equiv \cutBi{z}{w}{Q}{(\cutBi{x}{y}{P}{R})} & \text{[C!C!]} 
\vspace{6pt}\\
    \cutBi{x}{y}{P}{(\cut{*}{Q}{R})} \equiv \cut{*}{\cutBi{x}{y}{P}{Q}}{\cutBi{x}{y}{P}{R}}  & \text{[C!*]} 
 \vspace{8pt}\\
a(x);\cut{*}{Q}{R} \equiv a(x);{(\cut{*}{Q}{R})} & \text{[C$+$*]} 
 \vspace{6pt}\\
a_1(x);a_2(y);P \equiv  a_2(y);a_1(x);P
 & \text{[C$i$]} 
\hide{
\potimes{x}{y}{P}{Q} * R \equiv \potimes{x}{y}{P}{(Q * R)} & \text{[C$\otimes$*]} 
 \vspace{6pt}\\

\potimes{x}{y}{P}{\potimes{z}{w}{Q}{R}} \equiv  
\potimes{z}{w}{Q}{\potimes{x}{y}{P}{R}} & \text{[C$\otimes$]} 
}
\end{array}
$$}
{\small Provisos: in [CM] 
$x \in \fn{Q}$; in [CC] 
$x,y \in \fn{Q}$; in [CC!], [C!M] 
$x \notin \fn{P}$; in [C!C!], $x \notin \fn{Q}$ and $z \notin \fn{P}$.
In [C$i$], $x\neq y$ and $bn(a_1(x))\cap bn(a_2(y))=\emptyset$}
 \caption{\label{fig:equiv} Structural congruence $P \equiv Q$. }
\end{figure}
\begin{definition}[$P \equiv Q$]\label{def:equiv}
Structural congruence $\equiv$ is the least congruence on processes closed under $\alpha$-conversion and the $\equiv$-rules in Fig.~\ref{fig:equiv}. 
%
\end{definition}
The definition of $\equiv$ reflects expected static laws, along the lines of the structural congruences / conversions in \cite{DBLP:conf/concur/CairesP10,Wadler12icfp}. 
The binary operators forwarder, cut, and mix are commutative. The set of processes modulo $\equiv$ is a commutative monoid with operation the parallel composition $(\mix{-}{-})$ and identity given by inaction $\zero$ ([par]). 
Any static constructs commute, as expressed by the laws [CM]-[C!sC!]. The unrestricted cut distributes over all the static constructs  by law [C*], where 
$\cut{*}{-}{-}$ stands for either a mix, linear or unrestricted cut. 
The laws [C$+ *$] and [C$+$] denote sound proof equivalences in linear logic and  bring explicit the independence of 
linear actions (
 %
 noted $a(x)$), in different sessions $x$~\cite{Perez12esop}. These
 conversions are not required to obtain deadlock freedom. However,
 they are necessary for full cut elimination (e.g., see~\cite{Wadler12icfp}), and expose more redexes, thus
more non-determinism in the choice of possible reductions. Perhaps surprisingly, this extra flexibility is important to allow the deterministic sequential evaluation strategy for $\CLL$ programs adopted by the SAM to be expressed.

\begin{definition}[Reduction $\to$]\label{def:to} 
	Reduction $\to$ is defined by the rules of Fig.~\ref{fig:reduction}. 
\end{definition}

We denote by $\Rightarrow$ the reflexive-transitive closure of $\to$. Reduction includes the set of principal cut conversions, i.e. the 
 redexes for each pair of interacting constructs.
It is closed by structural congruence ([$\equiv$]),  in rule [cong] we consider that $\mathcal C$ is a static context, i.e. a process context in which the single hole is covered only by the static constructs mix or cut. 
The forwarding behaviour is implemented by name substitution [fwd]~\cite{TLDI12}. All the other reductions act on a principal cut between two dual actions, and eliminate it on behalf of cuts involving their subprocesses.
\begin{figure}[t]
{\small$$
\begin{array}{ll}
\labeltext{[fwd]}{[to-fwd]} 
\cuti{y}{\fwd{x}{y}}{P} \to \subs{x}{y}{P} &  \text{[fwd]}
\vspace{6pt}\\
\labeltext{[$\one\bot$]}{[to-close]} 
\cuti{x}{\pone{x}}{\pbot{x}{P}} \to P  & \text{[$\one\bot$]} 
\vspace{6pt}\\
\labeltext{[$\otimes \parl$]}{[to-send]} 
\cuti{x}{\potimes{x}{y}{P}{Q}}{\pparl{x}{z}{R}} \to \cut{x}{Q}{(\cut{y}{P}{\subs{y}{z}{R}})}  & \text{[$\otimes \parl$]} 
\vspace{6pt}\\
\labeltext{[$\with \oplus_l$]}{[to-left]} 
\cuti{x}{\pcasem{x}{\ell}{L}{P_{\labl{\ell}}}}{\labl{l}\ {x};{R}} \to \cuti{x}{P_{\labl{l}}}{R} \ & \text{[$\with \oplus_l$]} 
\vspace{8pt}\\
\hide{\labeltext{[$\with \oplus_r$]}{[to-right]} 
\cuti{x}{\pcase{x}{P}{Q}}{\cright{x}{R}} \to \cuti{x}{Q}{R}  & \text{[$\with \oplus_r$]}
\vspace{6pt}\\}
\labeltext{[!?]}{[to-act]} 
\cuti{x}{\pbang{x}{y}{P}}{\pwhy{x}{Q}} \to \cutBi{x}{y}{P}{Q} & \text{[!?]} \vspace{6pt}\\
\labeltext{[call]}{[to-call]} 
\cutBi{x}{y}{P}{\pcopy{x}{z}{Q}} \to \cuti{z}{\subs{z}{y}{P}}{(\cutBi{x}{y}{P}{Q})} & \text{[call]}
\end{array}
\vspace{-10pt}
$$}
 \caption{\label{fig:reduction} Reduction $P \to Q$. }
\end{figure}
\hide{Reduction rules for the basic session constructs that interpret
Second Order Linear Logic and recursion are the expected ones~\cite{DBLP:conf/concur/CairesP10,cairesmscs16,wadler2014propositions}, along predictable
lines. }
$\CLL$ satisfies basic safety properties~\cite{DBLP:conf/concur/CairesP10} listed below, and
also confluence, and termination~\cite{rocha2021propositions,DBLP:conf/esop/RochaC23}. In particular we have:
\begin{theorem}[Type Preservation]\label{theorem:type-preservationCLL} 
	Let $P \vdash \Delta; \Gamma$. (1) If $P \equiv Q$, then $Q \vdash \Delta; \Gamma$. (2) If $P \to Q$, then $Q \vdash \Delta; \Gamma$. 
\end{theorem}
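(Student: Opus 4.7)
The plan is to prove the two parts of the theorem by induction on the derivation of $P \equiv Q$ (for (1)) and on the derivation of $P \to Q$ (for (2)), leveraging the propositions-as-types correspondence: each typing rule is a $\CLL$ proof rule, so showing preservation amounts to showing that the equations of $\equiv$ correspond to admissible proof transformations, and that reductions correspond to cut-elimination steps.

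For part (1), I would go through each axiom in Fig.~\ref{fig:equiv} and exhibit, for each direction, a typing derivation of the right-hand side given one for the left-hand side. The commutative laws [fwd], [com], [par] are immediate from the symmetry of rules [Tfwd], [Tcut], [Tmix]. The associativity/commutation laws [CM], [CC], [CC!], [C!M], [C!C!] are where the provisos matter: they ensure that the free-name conditions of the relevant cut and mix rules are satisfied when reassociating contexts. The distribution law [C!*] uses that the premise typing an unrestricted cut $\cutPi$ uses only $\Gamma$ (no linear $\Delta$), so the replicated process can be duplicated into both branches of the enclosed cut or mix. The commuting conversions [C+*] and [C$i$] on action prefixes need verification that the typing of the inner subterms can be re-threaded through the commuted prefixes; this is where the side-condition that the bound names of $a_1$ and $a_2$ are disjoint from each other and from $x \neq y$ is used. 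Finally, closure under congruence is handled by a straightforward outer induction on the process context.

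For part (2), I would proceed by induction on the derivation of $P \to Q$, with the closure case [cong] handled by induction on the static context $\mathcal{C}$ using part (1) to absorb any structural reshuffling. The principal-cut cases each correspond to a standard cut reduction in linear logic: [$\one\bot$] and [$\with\oplus_l$] are immediate; [$\otimes\parl$] produces two nested cuts on subformulas and requires checking that the resulting linear context splits correctly, following the premises of [T$\otimes$]; [!?] turns a cut on $!A/?A$ into an unrestricted cut [Tcut!], whose typing matches directly. The [fwd] case and the [call] case both rely on a standard \emph{substitution lemma}, which I would state and prove separately: if $P \vdash \Delta, y{:}A; \Gamma$ and $x$ is fresh, then $\subs{x}{y}{P} \vdash \Delta, x{:}A; \Gamma$, with an analogous statement for names in $\Gamma$. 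The proof is by straightforward induction on $P$'s typing derivation.

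The main technical obstacle is really part (1), specifically the distribution law [C!*] and the action-commutation law [C$i$]. For [C!*], one must carefully track how the unrestricted context $\Gamma$ propagates into both components while linear contexts are split only in the non-replicated branch, and ensure the typing reconstructs correctly in both directions. For [C$i$], since the rule is parametric in the two action shapes $a_1, a_2$, the verification effectively requires a case analysis over pairs of compatible actions (input/output on distinct subjects), each checking that linear contexts and exponential contexts can be re-partitioned consistently when the prefixes are swapped. Once the substitution lemma and these congruence cases are dispatched, the reduction cases are essentially immediate readings of cut-elimination steps in $\CLL$.
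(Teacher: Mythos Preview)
Your approach is correct and standard, but note that the paper does not actually prove this theorem: it is stated in Section~\ref{sec:language} as a background result, with the sentence ``$\CLL$ satisfies basic safety properties~[10] listed below'' and a citation to prior work (Caires--Pfenning 2010 and related papers). There is no proof or proof sketch of Theorem~\ref{theorem:type-preservationCLL} anywhere in the paper or its appendix; the paper's own preservation proof effort is directed at the buffered calculus $\Blang$ (Theorem~\ref{theorem:type-preservation}), where the inversion lemmas for queue typing (Lemmas~\ref{lemma:queue-prf} and~\ref{lemma:queue-full}) do the real work.

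That said, your plan---induction on the $\equiv$ and $\to$ derivations, a substitution lemma for [fwd] and [call], and case-by-case verification that each structural law and each principal cut reduction is a sound proof transformation in classical linear logic---is exactly the standard argument one finds in the cited references. Your identification of [C!*] and [C$i$] as the places requiring the most care is accurate: the former hinges on the linear context of the replicated premise being empty, and the latter on the prefix-commutation being between actions on distinct linear subjects with disjoint bound names. There is no gap in your outline.
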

A process $P$ is \emph{live} if and only if $P = \mathcal C[Q]$, for some static context $\mathcal C$ (the hole lies within the scope of static constructs mix and cut) and $Q$ is an active process (a process with a topmost action prefix).
\begin{theorem}[Progress]\label{theorem:progressCLL}
	Let $P \vdash \emptyset; \emptyset$ be live. Then $P\to Q$ for some $Q$.
\end{theorem}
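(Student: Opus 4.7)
The plan is to follow the structural, logic-directed approach to progress standard for linear-logic-based session calculi (as in Caires and Pfenning). Since $P$ is live, $P \equiv \mathcal{C}[Q]$ for some static context $\mathcal{C}$ (built only from mix and cut/cut!) and an active subprocess $Q$ with a topmost action $a$ on some subject channel $x$. The first task is to locate the binder of $x$: because the global typing context is empty and mix does not bind names, $x$ must be bound by some cut $\cuti{x{:}A}{-}{-}$ or $\cutBi{x}{y}{-}{-}$ sitting somewhere in $\mathcal{C}$.

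Next, using the static commutations [CM], [CC], [CC!], [C!M], [C!C!] together with [par] and [com], I would reshape $\mathcal{C}$ so that the cut binding $x$ sits immediately above $Q$, yielding $P \equiv \mathcal{C}'[\,\cuti{x{:}A}{Q}{R}\,]$ (or the \pcolor!-cut analogue). By duality, $R$ is typed with $x{:}\dual{A}$, hence $x$ occurs free in $R$. I would then perform an inner induction on the structure of $R$ to bring an action of subject $x$ to the top: if the topmost prefix of $R$ acts on a different channel $y \ne x$, apply [C$i$] to swap it with the next prefix (the side condition on bound names is guaranteed by typing, since the linear contexts of sibling subprocesses are disjoint in [T$\otimes$], [Tcut], etc.); if $R$ is itself static (a cut or mix), apply the associated static commutation to expose its active component.

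Once both $Q$ and (up to $\equiv$) $R$ expose an action of subject $x$ at the top, I proceed by case analysis on $a$. If either side is a forwarder $\fwd{x}{z}$, rule [fwd] fires. Otherwise, duality of the type $A$ forces the dual pair of actions ($\one/\bot$, $\otimes/\parl$, $\oplus_{\labl{l}}/\with$, or $!/?$) and the matching principal rule of Fig.~\ref{fig:reduction} applies. For an exponential cut $\cutBi{x}{y}{P_0}{R}$, the only way $R$ can hold an active action on $x$ is via $\pcopy{x}{z}{-}$ (since $\pwhy{x}{-}$ only arises as the other side of a linear cut on $?A$, which is not the case here), and rule [call] fires. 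Closing the resulting reduction under the remaining static context $\mathcal{C}'$ via [cong] yields $P \to Q'$.

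The main obstacle is the commutation step: showing that an action on $x$ buried under a chain of prefixes on other channels in $R$ can always be brought to the top. This requires a careful well-founded measure — for instance, the depth of the first prefix of subject $x$ in $R$ — and uses crucially the disjointness of linear contexts enforced by the typing rules together with the side conditions of [C$i$] and the structural laws. This extra flexibility of $\equiv$, including the prefix commutations [C$i$] and [C$+{*}$] highlighted in the discussion of Fig.~\ref{fig:equiv}, is precisely what makes the structural progress argument go through uniformly.
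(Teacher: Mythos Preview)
The paper does not prove this theorem explicitly; it is stated as a known property of $\CLL$. However, the analogous progress result for $\Blang$ (Theorem~\ref{theorem:progress}) \emph{is} proved, and its structure is the actual Caires--Pfenning shape you invoke: a \emph{Liveness} lemma (Lemma~\ref{lemma:liveness}), established by induction on the typing derivation, showing that any live $P \vdashB \Delta;\Gamma$ either satisfies $\obs{P}{x}$ for some free $x$ or reduces. With empty contexts the first disjunct is impossible. A barb-inversion lemma (Lemma~\ref{lemma:barbs}) recovers the concrete action shape at the cut name in each case. The same scheme works verbatim for $\CLL$.

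Your description departs from this structure and contains a real gap. The plan to ``apply [C$i$] to swap it with the next prefix'' until an action on $x$ surfaces in $R$ fails once the head of $R$ is a $\textcolor{pcolor}{\mathsf{case}}$ on some $y\neq x$: rule [C$i$] commutes two \emph{sequential} prefixes sharing a single continuation, but a case branches, and nothing in Fig.~\ref{fig:equiv} lets you factor a common $x$-action out across all branches (nor need the branches begin with the same $x$-action, even though typing forces $x$ to have the same type in each). Your proposed depth measure therefore does not decrease across such a step. More importantly, fixating on the cut on $x$ is the wrong organising principle: when $R$'s topmost action is on $y\neq x$, that $y$ is bound by some \emph{other} cut sitting in $\mathcal{C}'$, and the redex that witnesses progress may well live at \emph{that} cut rather than the one on $x$. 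The Liveness induction on the typing derivation is precisely the device that organises this chase through the cut tree and guarantees its termination; your argument lacks an equivalent, and the [C$i$]/[C$+{*}$] laws alone do not supply one.
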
 


\hide{
We comment the rules concerning affinity.
The interaction between an affine session and an use operation is defined by reduction rule \ref{[to-use]}, where a cut on $a:\aff{A}$ between $\daffine{a}{P}{\vec{b}}{\vec{c}}$ and $\use{a}{Q}$ reduces to a cut on $a:A$ between the continuations $P$ and $Q$. 
The reduction between an affine session and a discard operation is defined
by \ref{[to-discard]}. A cut between $\daffine{a}{P}{\vec{b}}{\vec{c}}$ and $\discard{a}$ reduces to a mix-composition of discards (for the coaffine sessions $\vec{b}$) and releases (for the cell usages $\vec{c}$) cf.~\cite{asperti2002intuitionistic, caires2017linearity}). 
In the corner case
where $\vec{c}$ and $\vec{a}$ are empty, the left-hand side of \ref{[to-discard]} simply degenerates to inaction $\zero$ (the identity of mix).

The reductions for the mutable state operations are fairly self-explanatory. 
In rule \ref{[to-release]}, a cut between a full mutex cell cell and a release operation reduces to 
a process that discards the affine cell contents, 
cf. rule \ref{[to-discard]}. 
In rule \ref{[to-take]}, a cut on $c:\cstateF A$ between a full cell and a take operation reduces to a process with two cuts, both composed with the continuation $\subs{a}{a'}{Q}$ of the take. The outer cut on $a:\aff{A}$ composes with the stored affine session, which was successfully acquired by the take operation. The inner cut on $c:\cstateE A$ composes with the reference cell $c$, which has became empty in the reductum. 
Finally, in rule \ref{[to-put]}, a cut on session $c:\cstateE A$ between an empty cell and a put operation reduces to a cut on session $c:\cstateF A$ between a full cell, that now stores the session that was put, and the continuation of the put process.
Notice that the locking/unlocking behaviour of cells is simply modelled by rewriting of the process terms, from cell to empty and back,
 as typical in process calculi.
\hide{
For example, rule [$\otimes \parl$] applies to a cut on session $x:A \otimes B$ between send and receive and reduces to a process expression with two cuts. The inner cut on $y:A$ connects the continuation $\subs{y}{z}{R}$ of the receiver with the provider $P$ of the sent channel, whereas the outer cut on $x:B$ connects $\subs{y}{z}{R}$ with the continuation $Q$ of the send process.}
}

\hide{
\begin{figure}
$$
\begin{array}{c}
\begin{prooftree}
	\infer[no rule]0{\labeltext{[Tcorec]}{[Tcorec]}}
	\infer[no rule]1{P \vdash_{\eta'} \Delta, z:A; \Gamma \quad \eta' = \eta, \rvar{X}{z,\vec{w}} \mapsto \Delta, z:Y; \Gamma}
	\infer1[\text{[Tcorec]}]{\ncorec{z}{X}{\vec{w}}{x,\vec{y}}{P} \vdash_\eta \subs{\vec{y}}{\vec{w}}{\Delta}, x: \tcorec{Y}{A}; \subs{\vec{y}}{\vec{w}}{\Gamma}} 
\end{prooftree}
\vspace{2pt}\\
\begin{prooftree}
	\infer[no rule]0{\labeltext{[Tvar]}{[Tvar]}}
	\infer[no rule]1{\eta = \eta', \rvar{X}{x, \vec{y}} \mapsto \Delta, x:Y; \Gamma} 
	\infer1[\text{[Tvar]}]{\rvar{X}{z, \vec{w}} \vdash_\eta \subs{\vec{w}}{\vec{y}}{\Delta}, z:Y;\subs{\vec{w}}{\vec{y}}{\Gamma}}
\end{prooftree}
\vspace{2pt}\\ 
\begin{prooftree}
	\infer[no rule]0{\labeltext{[T$\mu$]}{[Tmu]}}
	\infer[no rule]1{P \vdash_\eta \Delta, x:\subs{\trec{X}{A}}{X}{A}; \Gamma}
	\infer1[\text{[T$\mu$]}]{\unfold{x}{P} \vdash_\eta \Delta, x:\trec{X}{A};\Gamma} 
\end{prooftree} 
\quad
\begin{prooftree}
	\infer[no rule]0{\labeltext{[T$\nu$]}{[Tnu]}}
	\infer[no rule]1{P \vdash_\eta \Delta, x:\subs{\tcorec{X}{A}}{X}{A}; \Gamma}
	\infer1[\text{[T$\nu$]}]{\unfoldn{x}{P} \vdash_\eta \Delta, x:\tcorec{X}{A};\Gamma} 
\end{prooftree} 
\end{array}
$$
\vspace{-10pt}
\caption{Typing Rules II: Induction and Coinduction.\label{fig:typing-induction}}
\end{figure} 
\begin{figure}[t]
$$
\begin{array}{c}
\begin{prooftree}
	\infer[no rule]0{\labeltext{[Taffine]}{[Taffine]}}
	\infer[no rule]1{P \vdash_\eta a:A, \vec{b} : \coaff{\vec{B}}, \vec{c} : \cusageF{\vec{C}}; \Gamma }
	\infer1[\text{[Taffine]}]{\daffine{a}{P}{\vec{b}}{\vec{c}} \vdash_\eta a:\aff{A}, \vec{b} : \coaff{\vec{B}}, \vec{c} : \cusageF{\vec{C}}; \Gamma} 
\end{prooftree} 
\vspace{2pt} \\ 
\begin{prooftree}
	\infer[no rule]0{\labeltext{[Tdiscard]}{[Tdiscard]}}
	\infer1[\text{[Tdiscard]}]{\discard{a} \vdash_\eta  a:\coaff{A}; \Gamma}
\end{prooftree}
\quad 
\begin{prooftree}
	\infer[no rule]0{\labeltext{[Tuse]}{[Tuse]}}
	\infer[no rule]1{Q \vdash_\eta \Delta, a: A; \Gamma} 
	\infer1[\text{[Tuse]}]{\use{a}{Q} \vdash_\eta \Delta, a:\coaff{A}; \Gamma}
\end{prooftree} 
\end{array}
$$
\vspace{-10pt}
\caption{Typing Rules III: Affinity. \label{fig:typing-affinityx}}
\end{figure} %
}
\hide{
Coinductive types are introduced by rule \ref{[Tcorec]}. It types corecursive processes  $\ncorec{z}{X}{\vec{w}}{x,\vec{y}}{P}$, with parameters $z,\vec{w}$ bound in $P$, that are instantiated with the arguments $x, \vec{y}$ (free in the process term). 
By convention, the coinductive behaviour, of type $\tcorec{Y}{A}$, of a corecursive process is always offered in the first argument $z$. 
 According to \ref{[Tcorec]}, to type the body $P$ of a corecursive process, the map $\eta$ is extended with a coinductive hypothesis
binding the process variable $X$ to the typing context $\Delta, z:Y; \Gamma$, so that when typing the body $P$ of the corecursion we can appeal to $X$, which intuitively stands for $P$ itself, and recover its typing invariant. 
 Crucially, the type variable $Y$ is free only in $z:A$. This causes
corecursive calls to be always applied to names $z'$  that hereditarily descend from the initial corecursive argument $z$, a necessary condition for strong normalisation (Theorem~\ref{theorem:SN}), and morally corresponds to only allowing corecursive calls on ``smaller'' argument sessions (of inductive type).

Rule~\ref{[Tvar]}
types a corecursive call $X(z, \vec{w})$ by looking up in $\eta$ for the corresponding binding and renaming the parameters with the arguments of the call. Inductive and coinductive types are explicitly unfolded with \ref{[Tmu]} and \ref{[Tnu]}.
 
To simplify the presentation in program examples, we omit explicit unfolding actions, and write inductive and coinductive type definitions with equations of the form $ \ind  \ A = f(A)$ and $\coind \ B  = f(B)$ instead of $A = \trec{X}{f(X)}$ and $B = \tcorec{X}{f(X)}$, respectively. Similarly, we write corecursive process definitions  as $Q(x, \vec{y}) =f(Q(-))$ instead of $Q(x,\vec{y}) = \ncorec{z}{X}{\vec{w}}{x,\vec{y}}{f(X(-))}$, while of course respecting the constraints imposed by typing rules \ref{[Tvar]} and \ref{[Tcorec]}. 

\vspace{-10pt}
\subsubsection{Affinity}
Affinity is important to model discardable linear resources, and
plays
an important role in \CLASS{}.
An affine session can either be used as a linear session or discarded.
The typing rules for the affine modalities are in Fig.~\ref{fig:typing-affinityx}.  Affine sessions are introduced by rule \ref{[Taffine]}
that promotes a linear $a:A$ to an affine session $a:\aff{A}$. It types $\daffine{a}{P}{\vec{b}}{\vec{c}}$, which provides an affine session at $a$ and continues as $P$, and follows the structure of 
a standard promotion rule. 

A session $a$ may be promoted to affine if it only depends on resources that can be disposed,
i.e. resources that satisfy some form of weakening capability, namely: 
coaffine sessions $b_i$ of type $\coaff{B_i}$, that can be discarded; 
full cell usages $c_i$ of type with $\cusageF{C_i}$, that can be released; and
unrestricted sessions in $\Gamma$, which are implicitly $?$-typed.
\hide{
(we may compare
\ref{[Taffine]} with the standard linear logic promotion rule for $!A$, where likewise the replicated session may only depend on $?$-typed names.)
$$
\frac{P\vdash y:A,?\Gamma}{!x(y);P\vdash x:{!}A,?\Gamma}
$$
}
The dependencies  of an affine object on coaffine or full cell objects  are explicitly annotated as $\vec{b},\vec{c}$ in the process term, to instrument the operational semantics, but we often omit them and simply write $\affine{a}{P}$. 

The coaffine endpoint $\coaff{A}$ of an affine session, dual of $\aff{\dual A}$, has two operations: use and discard. Rule \ref{[Tuse]}
\hide{
$$
\begin{prooftree}
	\infer[no rule]0{Q \vdash_\eta \Delta, a: A; \Gamma} 
	\infer1[\text{[Tuse]}]{\use{a}{Q} \vdash_\eta \Delta, a:\coaff{A}; \Gamma}
\end{prooftree} 
$$}types a process $\use{a}{Q}$ that uses a coaffine session $a$ and continues as $Q$, it is a dereliction rule.
\ref{[Tdiscard]} types the process $\discard{a}$ that discards a coaffine session $a$, it is a weakening rule.

\vspace{-12pt}
\subsubsection{Shared Mutable State}
Shared state is introduced in \CLASS{} by 
typed constructs that model mutex memory cells,
and associated cell operations allowing its use by client code, defined by the tying rules in Fig.~\ref{fig:typing-statex}.

 At any moment a cell may be either \emph{full} or \emph{empty}, akin to the MVars of Concurrent Haskell~\cite{jones1996concurrent}.
A full cell on $c$, written $\ncell{c}{a}{P}$, is typed $\cstateF{A}$ by rule \ref{[Tcell]}. Such cell stores an \emph{affine} session of type $\aff A$, implemented at $a$ by $P$.  
All objects stored in cells are required to be affine, so that memory cells may always 
be safely disposed without causing memory leaks.
An empty cell on $c$, of type $\cstateE{A}$, and written $\mempty{c}$,  is typed by rule \ref{[Tempty]}. 
 \begin{figure}[t]
$$
\begin{array}{c}
\begin{prooftree}
	\infer[no rule]0{\labeltext{[Tcell]}{[Tcell]}}
	\infer[no rule]1{P \vdash_\eta \Delta, a:\aff{A}; \Gamma} 
	\infer1[\text{[Tcell]}]{\ncell{c}{a}{P} \vdash_\eta \Delta, c:\cstateF  A; \Gamma}
\end{prooftree} 
\quad
\begin{prooftree}
	\infer[no rule]0{\labeltext{[Trelease]}{[Trelease]}}
	\infer1[\text{[Trelease]}]{\free{c} \vdash_\eta  c:\cusageF A; \Gamma}
\end{prooftree}
\vspace{2pt}\\
\begin{prooftree}
	\infer[no rule]0{\labeltext{[Tempty]}{[Tempty]}}
	\infer1[\text{[Tempty]}]{\mempty{c} \vdash_\eta  c:\cstateE  A; \Gamma}
\end{prooftree}
\quad
\begin{prooftree}
	\infer[no rule]0{\labeltext{[Ttake]}{[Ttake]}}
	\infer[no rule]1{Q \vdash_\eta \Delta, a:\coaff{A}, c: \cusageE{A}; \Gamma}
	\infer1[\text{[Ttake]}]{\take{c}{a}{Q} \vdash_\eta \Delta, c: \cusageF{ A};\Gamma}	
\end{prooftree} 
\vspace{2pt}\\ 
\begin{prooftree}
		\infer[no rule]0{\labeltext{[Tput]}{[Tput]}}
		\infer[no rule]1{Q_1 \vdash_\eta \Delta_1, a:\aff \dual A; \Gamma} 
		\infer[no rule]0{Q_2 \vdash_\eta \Delta_2, c: \cusageF A; \Gamma} 
		\infer2[\text{[Tput]}]{\mput{c}{a}{Q_1}{Q_2} \vdash_\eta \Delta_1, \Delta_2, c: \cusageE A; \Gamma} 
\end{prooftree}
\end{array}
$$
\vspace{-10pt}
\caption{Typing Rules IV: Reference Cells.\label{fig:typing-statex}}
\end{figure} 

Client processes manipulate cells via \emph{take}, \emph{put} and \emph{release}
operations. These operations apply to names of cell usage types -
 $\cusageF{A}$ (full cell usage) and $\cusageE{A}$ (empty cell usage) - which are dual types of $\cstateF{ \dual A}$ and $\cstateE{ \dual A}$, respectively. 
At any given moment, a client thread owning a $\cusageF{A}$-typed usage to a cell may execute a \emph{take} operation, typed by rule \ref{[Ttake]}. 
The \emph{take} operation $\take{c}{a}{Q}$
waits to acquire the cell mutex $c$, and reads its contents 
 into
 parameter $a$,
 the linear
(actually coaffine, of type $\coaff{A}$) usage for the object stored in the cell; the cell becomes empty, and execution continues as $Q$. 

It is responsibility of the taking
thread to put some value back in the empty cell, thus releasing the lock, causing the cell to transition to the full state. 
The \emph{put} operation $\mput{c}{a}{Q_1}{Q_2}$ is typed by \ref{[Tput]},
the stored object $a$, implemented by $Q_1$, is required to be affine, as specified in the premise $a:\aff \dual{A}$. 

Hence a cell flips from full to empty and back; \ref{[Ttake]} uses the cell $c$ at $\cusageF{A}$ type, and its continuation (in the premise) at $ \cusageE{A} $ type, symmetrically \ref{[Tput]} uses the cell $c$ at $\cusageE{A}$ type, and its continuation (in the premise) at $ \cusageF{A} $ type.

The $\free{c}$ operation allows a thread to manifestly drop
its cell usage $c$. Release is typed by \ref{[Trelease]} (cf. coweakening~\cite{ehrhard2018introduction});
a usage may only be released in the unlocked state $\cusageF{A}$.
When, for some cell $c$, all the owning threads release their usages, which eventually happens in well-typed programs, the cell $c$ gets disposed, and its (affine) contents safely discarded.

Our memory cells cells are linear objects, 
with a linear mutable payload, which are never duplicated 
by reduction or conversion rules. 
However, in \CLASS{}, multiple cell usages may be shared between concurrent threads, which compete to take and use it in interleaved critical sections. Such aliased usages be passed around and duplicated dynamically, changing the sharing topology at runtime. 

Sharing of cell usages is logically expressed in our system by the
typing rules in Fig.~\ref{fig:typing-sharex}.
\begin{figure}[t]
$$
\begin{array}{c}
\begin{prooftree}
	\infer[no rule]0{\labeltext{[Tsh]}{[Tsh]}}
	\infer[no rule]1{P \vdash_\eta \Delta', c:\cusageF  A; \Gamma \quad Q \vdash_\eta \Delta, c:\cusageF A; \Gamma} 
	\infer1[\text{[Tsh]}]{\share{c}{P}{Q} \vdash_\eta \Delta', \Delta, c:\cusageF A; \Gamma} 
\end{prooftree}
\hide{
\quad
\begin{prooftree}
	\infer[no rule]0{\labeltext{[Tsum]}{[Tsum]}}
	\infer[no rule]1{P \vdash_\eta \Delta; \Gamma \quad Q \vdash_\eta \Delta; \Gamma}
	\infer1[\text{[Tsum]}]{\nsum{P}{Q} \vdash_\eta \Delta; \Gamma} 
\end{prooftree}
}
\vspace{2pt}\\
\begin{prooftree}
	\infer[no rule]0{\labeltext{[TshL]}{[TshL]}}
	\infer[no rule]1{P \vdash_\eta \Delta', c:\cusageE  A; \Gamma \quad Q \vdash_\eta \Delta, c:\cusageF A; \Gamma} 
	\infer1[\text{[TshL]}]{\share{c}{P}{Q} \vdash_\eta \Delta', \Delta, c:\cusageE A; \Gamma} 
\end{prooftree}
\vspace{2pt}\\
\begin{prooftree}
	\infer[no rule]0{\labeltext{[TshR]}{[TshR]}}
	\infer[no rule]1{P \vdash_\eta \Delta', c:\cusageF  A; \Gamma \quad Q \vdash_\eta \Delta, c:\cusageE A; \Gamma} 
	\infer1[\text{[TshR]}]{\share{c}{P}{Q} \vdash_\eta \Delta', \Delta, c:\cusageE A; \Gamma} 
\end{prooftree}
\end{array}
$$
\vspace{-10pt}
\caption{ \label{fig:typing-sharex}Typing Rules V: State Sharing.}
\end{figure}
Co-contraction, introduced in Differential Linear Logic DiLL~\cite{ehrhard2018introduction},
allows finite multisets of linear resources to safely
interact in cut-reduction, resolving concurrent sharing into nondeterminism, as required here to soundly model memory cells and their linear concurrent usages.
 \hide{
 , and was already explored, with a different interpretation, in~\cite{rocha2021propositions}, but we here target to discipline concurrent take operations (which corresponds to DiLL co-dereliction) on a shared cell reference.
DiLL is a ``quantitative" version of linear logic, where exponentials $!_dA$ and $?_d\dual A$ are interpreted as typing finite multisets of linear  objects of respectively type $A$ and $\dual A$.
While the usual contraction, dereliction apply to $?_dA$,
dual principles of co-contraction, co-dereliction and co-weakening apply to $!_dA$. Proofs of type $!_dA$ and $?_d\dual A$ interact via cut-elimination
 matching and distributing offers and uses, reducing to a failure term in case there is a mismatch in the number of
clients and servers. Mon-determinism arises  
 Non-determinism arises because  
 
A cell usage can be shared by an arbitrary number of concurrent threads, as expressed by rules  \ref{[Tsh]}, \ref{[TshL]} and \ref{[TshR]}, that type the process construct $\share{c}{P}{Q}$, where $c$ is being shared between two concurrent threads $P$ and $Q$.    
}
Rule \ref{[Tsh]} 
interprets cocontraction with the construct $\share{c}{P}{Q}$,
and types sharing of the cell usage $c:\cusageF{A}$ between the concurrent threads $P$ and 
$Q$.

Contrary to cut, $\share{c}{P}{Q}$ is \emph{not} a binding operator for $c$. The shared usage $c:\cusageF{A}$ is \emph{free} in the conclusion of 
the typing rule, permitting $c$ to be shared 
among an arbitrary number of threads, 
by nested iterated use of \ref{[Tsh]}.
In \ref{[Tsh]}, $P$ and $Q$ only share the single mutex cell $c$, since the linear context is split
multiplicatively, just like \ref{[Tcut]} wrt. binary sessions. 
This condition comes from the DiLL typing discipline, and is important to ensure deadlock freedom.

While \ref{[Tsh]} types sharing of a full (unlocked) cell usage of
type $\cusageF A$, the symmetric rules \ref{[TshR]} and \ref{[TshR]} type sharing of an empty (locked) cell usage of
type $\cusageE A$. 
We may verify that for every cell $c$ in a well-typed process, at most one unguarded operation to $c$ may be using type $\cusageE{A}$, all the remaining unguarded operations to $c$ must be using type $\cusageF{A}$.
This implies that, at runtime, only one thread may own the lock for a given (necessarily empty) cell, and execute a \emph{put} to it, which will bring the cell back to full and release its lock, other threads must be either attempting to \emph{take}, or \emph{release} the reference. 

Working together, the sharing typing rules ensure 
that in any well-typed cell sharing tree, at most one single
thread at any time may be actively using  a cell (in the locked empty state) and put to it,
thus guaranteeing mutual exclusion, while satisfying Progress (Theorem~\ref{theorem:progress}) which in turn ensures deadlock absence, even in the presence of the crucially blocking behaviour of the take operation.
}

 
\vspace{-16pt}
\section{A Core Session Abstract Machine}
\label{sec:coreSAM}

In this section we develop the key insights that guide the
construction of our session abstract machine (SAM) and introduce its
operational rules in an incremental fashion. We omit the linear logic
exponentials for the sake of clarity of presentation, postponing their
discussion for Section~\ref{sec:sam-mix}.

One of the main observations that drives the design of the SAM is the nature
of proof dynamics in (classical) linear logic, and thus of process
execution dynamics in the $\CLL$ system of
Section~\ref{sec:language}. The proof dynamics of linear logic are
derived from the computational content of the cut elimination proof,
which defines a proof simplification strategy that removes
(all) instances of the cut rule from a proof. However, the strategy induced by cut
elimination is \emph{non-deterministic} insofar as multiple simplification steps
may apply to a given proof.
Transposing this observation to $\CLL$ and other related systems, we
observe that their operational semantics is does not prescribe a rigid
evaluation order for processes. For instance, in the process
$\cuti{x}{P}{Q}$, reduction is allowed in both $P$ and $Q$. 
{This is of course
in line with reduction in process calculi (e.g.,~\cite{DBLP:books/daglib/0098267}). However,} in
logical-based systems this amounts to \emph{don't care}
non-determinism since, regardless of the evaluation order, confluence
ensures that the same outcomes are produced (in
opposition to \emph{don't know} non-determinism which breaks
confluence and is thus disallowed in purely logical systems). 
The design of the SAM arises from attempting to fix a purely
sequential reduction strategy for $\CLL$ processes, such that only
\emph{one} process is allowed to execute at any given point in time,
in the style of coroutines. To construct such a strategy, we forego
the use of purely synchronous communication channels, which require a
handshake between two concurrently executing processes, and so
consider session channels as a kind of \emph{buffered} communication
medium (this idea has been explored in the context of linear logic
interpretations of sessions in~\cite{DeYoung2012}), or queue, where one process
can asynchronously write messages so that another may, subsequently,
read.  To ensure the correct directionality of communication, the
queue has a write endpoint (on which a process may only write) and a
read endpoint (along which only reads may be performed), such that at
any given point in time a process can only hold one of two endpoints
of a queue. Moreover, our design takes inspiration from insights
related to polarisation and focusing in linear logic, grouping
communication in sequences of positive (i.e.~write) actions.

\begin{figure}[t]
{\small\[
\begin{array}{llll}
S & ::= & (P,  H) & \mbox{State}
\vspace{4pt}\\
  H & ::= & \mathit{SessionRef} \to \mathit{SessionRec} \quad\quad\;&\mbox{Heap}
\vspace{4pt}\\
  \mathit{R} & ::= & \srecs{x}{q}{ P}{y} & \mbox{Session Record}
  \vspace{4pt}\\
  \mathit{q} & ::= & \nilm \; | \; \mathit{Val} @ q & \mbox{Queue}
\vspace{4pt}\\
\mathit{Val} & ::= & \checkmark & \mbox{Close token}\\
 & | & \labl{l} & \mbox{Choice label}\\
  & | & \mathsf{clos}(x,P) & \mbox{Process Closure}
\end{array}
\]}
\caption{core SAM Components\label{fig:sam_syntax}\vspace{-10pt}}
\end{figure}
{Allowing session channels to buffer message sequences,} we may then model process
execution by alternating between writer processes (that inject
messages into the respective queues) and corresponding reader
processes. Thus, the SAM must maintain a \emph{heap} that tracks the
queue contents of each session (and its endpoints), as well as the
suspended processes.  The construction of the core of the SAM is given in
Figure~\ref{fig:sam_syntax}. An execution state is simply a pair
consisting of \emph{the} running process $P$ and the heap $H$.
For technical reasons that are made clear in Sections~\ref{sec:blang}
and~\ref{sec:soundcore}, the process language used in the SAM differs
superficially from that of $\CLL$, but for the purposes of this
overview we will use $\CLL$ process syntax. Later we show the two
languages are equivalent in a strong sense.
%
%

A heap
is a mapping between session identifiers and \emph{session
  records} of the form $\srecs{x}{q}{ Q}{y}$, denoting a session
with write endpoint $x$ and read endpoint $y$, with queue contents $q$
and a suspended process $Q$, holding one of the two endpoints.
If $Q$ holds the read endpoint then it is suspended waiting for the
process holding the write endpoint to fill the queue with data for it
to read.
If $Q$ holds the write endpoint, then $Q$ has been suspended
\emph{after} filling the queue and is now waiting for the reader
process on $y$ to empty the queue.

We adopt the convention of placing the write endpoint on the
left and the read endpoint on the right. In general, session records
in the SAM support a form of coroutines through their contained
processes, which are {called on and returned from}
multiple times over the course of the execution of the machine.
A queue can either be empty ($\nilm$) or 
holding a sequence of values. A value is either a close session token
($\checkmark$), identifying the last output on a session;
a choice label $\labl{l}$ or a process closure $\mathsf{clos}(x,P)$,
used to model session send and receive. We overload the $@$ notation
to also denote concatenation of queues.

\mypara{Cut}
We begin by considering how to execute a cut of the form
$\cuti{x:A^{+}}{P}{Q}$ 
where $x$ is a positive type (in the sense of polarized
logic~\cite{DBLP:journals/mscs/Girard91}) in $P$. A positive type
corresponds to a type denoting an
\emph{output} (or write) action, whereas a negative type denotes
an \emph{input} (or read) action.
We maintain the invariant that
in such a cut, $P$ holds the write endpoint and $Q$ the read
endpoint. This means that the next action performed by $P$ on the
session will be to push some value onto the queue and, dually, the
next action performed by $Q$ on the session will be to read a value
from the queue. In general, the holder of the write and read endpoint
can change throughout execution.

Given the choice of either scheduling $P$ or $Q$, we are effectively
\emph{forced} to schedule 
$P$ \emph{before} $Q$. Given that the cut introduces the
(unique) session 
that is shared between the two processes, the only way for $Q$ to
exercise its read capability on the session successfully is to wait
for $P$ to have exercised (at least some of) its write capability. If
we were to schedule $Q$ before $P$, the process might attempt to read a
value from an empty queue, resulting in a stuck state of the SAM.
Thus, the SAM execution rule for cut is:
\[
(\cuti{x:A^{+}}{P}{Q}, H) \Mapsto 
(P, H[\srecs{x}{\nilm}{ \subs{y}{x}Q }{y}  ]) \qquad \text{\rm[SCut]}
\]
The rule states that $P$ is the process that is to be scheduled, adding
the session record $\srecs{x}{\nilm}{Q}{y}$ to the heap, which effectively
suspends the execution of $Q$ until $P$ has exercised some of its
write capabilities on the new session.
Note that, in general, both $P$ and $Q$ can interact along many
different sessions as both readers and writers before exercising any
action on $x$ (resp. $y$). However, they alone
hold the freshly created endpoints $x$ and $y$ and so the next value
sent along the session must come from $P$ and $Q$ is its intended
receiver.

\mypara{Channel Output}
To execute an output of the form $\potimes{x}{z}{R}{Q}$ in the SAM we
simply lookup the session record for $x$ and add to the queue a
\emph{process closure} containing $R$ (which interacts along $z$),
continuing with the execution of $Q$: 
\[
( \potimes{x}{z}{R}{Q}, H[\srecs{x}{q}{P}{y}]) \Mapsto 
(Q, H[\srecs{x}{q @ \clos(z,R)}{P}{y}]) \qquad \text{\rm[S$\otimes$]}
\]
Note that the SAM \emph{eagerly} continues to execute $Q$
instead of switching to $P$, the holder of the read endpoint of the
queue. This allows for the running process to
perform all available writes before a context switch occurs.

\mypara{Session Closure}
Executing a $\pone{\!\!}$ follows a similar spirit, but no
continuation process exists and so execution switches to the process $P$
holding the \emph{read} endpoint $y$ of the queue:
\[
( \pone{x}, H[\srecs{x}{q}{P}{y}]) \Mapsto 
(P, H[\srecs{x}{q @ {\checkmark}}{\zero}{y}]) \qquad \text{\rm[S$\one$]} 
  \]
The process $P$ will eventually read the termination mark from the queue
(triggering the deallocation of the session record from the heap): 
\[(\pbot{y}{ P}, 
H[\srecs{x}{{\checkmark}}{\zero}{y}]) \Mapsto 
(P, H) \qquad \text{\rm[S$\bot$]}
\]
Note the requirement that $\checkmark$ be the final element
of the queue.

\mypara{Negative Action on Write Endpoint}
As hinted above for the case of executing a $\cutP$, the SAM has a
kind of \emph{write bias} insofar as the process chosen to execute in
a cut is that which holds the write endpoint for the newly created
session. Since $\CLL$ processes use channels bidirectionally, the role
of writer and reader on a channel (and thus the holder of the write
and read endpoints of the queue) may be exchanged during
execution. For instance, a process $P$ may wish to send a value $v$
to $Q$ and then receive a response on the same
channel. However, when considering a queue-based semantics, the
execution of the input action \emph{must not} obtain the value
$v$, intended for $Q$. Care is therefore needed to ensure that $v$ is received by the
holder of the read endpoint of the queue \emph{before} $P$ is allowed
to execute its input action (and so taking over the read endpoint).
This notion is captured by the following rule, where $\mathcal{A}^-$
denotes any process performing a negative polarity action (i.e.,~a
$\textcolor{pcolor}{\mathsf{wait}}$,
$\textcolor{pcolor}{\mathsf{recv}}$,
$\textcolor{pcolor}{\mathsf{case}}$ or, as we discuss later, a
{$\fwd{x}{y}$ when $x$ is a write endpoint with a negative polarity type}):
\[
(\mathcal A^-(x), 
H[\srecs{x}{{q}}{Q}{y}]) \Mapsto 
(Q, H [\srecs{x}{{q}}{\mathcal A^-(x)}{y}]) \qquad \text{\rm[S$- $]}
\]
If the executing process is to perform a negative polarity action on a write
endpoint $x$, the SAM context switches to $Q$, the holder of the read
endpoint $y$ of the session, and suspends the previously running
process. This will now allow for $Q$ to perform the appropriate inputs
before execution of the action $\mathcal{A}^-$ resumes.

\mypara{Channel Input}
The rules for $\textcolor{pcolor}{\mathsf{recv}}$ actions are as
follows:
\[
  \begin{array}{l}
 (\pparl{y}{w\ass +}{Q}, H[\srecs{x}{\clos(z,R)@q }{P}{y}]) \Mapsto 
(Q, H[\srecs{w}{\nilm}{R}{z}][\srecss{x}{q }{P}{y}]) \;\; \text{\rm[S$\parl_+$]}
\\
( \pparl{y}{w\ass -}{Q}, H[\srecs{x}{\clos(z,R)@q }{P}{y}]) \Mapsto 
(R, H[\srecs{z}{\nilm}{Q}{w}][\srecss{x}{q }{P}{y}])  \;\; \text{\rm[S$\parl_-$]}
  \end{array}
\]
where
$\srecss{x}{q }{P}{y} \deff \mbox{if }(q = \nilm)\mbox{ then
}\srecs{y}{q }{P}{x}\mbox{ else } \srecs{x}{q }{P}{y}$.  The execution
of an input action requires the corresponding queue to contain a
process closure, denoting the process that interacts along the
received channel $w$. In order to ensure that no inputs attempt to
read from an empty queue, we must \emph{branch} on the polarity of the
communicated session (written $w{:}+$ and $w{:}-$ in the rules above): if
the session has a \emph{positive} type, then $Q$ must take the
\emph{write} endpoint $w$ of the newly generated queue (since $Q$ uses the
session with a dual type) and thus we execute $Q$ and
{allocate a session record in the heap for the new session}\hide{extend the heap with a session record for the new session},
with read endpoint $z$; if the
exchanged session has a \emph{negative} type, the converse holds and
$Q$ must take the \emph{read} endpoint of the newly generated
queue. In this scenario, we must execute $R$ so that it may exercise
its write capability on the queue and suspend $Q$ in the new session
record.

In either case, the session record for the original session is updated
by removing the received message from the queue. Crucially, since
processes are well-typed, if the resulting queue is empty then it must
be the case that $Q$ has no more reads to perform on the session, and
so we \emph{swap} the read and write endpoints of the session. This
swap serves two purposes: first, it enables $Q$ to perform writes if
needed; secondly, and more subtly, it allows for the process,
say, $P$, that holds the other endpoint of the queue to be resumed to
perform its actions accordingly. To see how this is the case, consider
that such a process will be suspended (due to rule \text{\rm[S$- $]})
attempting to perform a negative action on the write endpoint of the
queue. After the swap, the endpoint of the suspended process now
matches its intended action. Since $Q$ now holds the write endpoint,
it will perform some number of positive actions on the session which
end either in a $\textcolor{pcolor}{\mathsf{close}}$, which context
switches to $P$, or until it attempts to perform a negative action on
the write endpoint, triggering rule \text{\rm[S$- $]} and so context
switching to $P$.

\mypara{Choice and Selection}
The treatment of the additive constructs in the SAM is straightforward:
\[
  \begin{array}{ll}
( \#\mathsf{l}\; x;Q, H[\srecs{x}{q}{P}{y}]) \Mapsto 
(Q, H[\srecs{x}{q @ \#\mathsf{l}}{P}{y}]) & \quad\text{\rm[S$\oplus$]}
\\
( \pcasem{y}{\ell}{L}{Q_\ell}, H[\srecs{x}{\#\mathsf{l}@q }{P}{y}]) \Mapsto 
(Q_{\labl{l}}, H[\srecss{x}{q }{P}{y}])
& \quad\text{\rm[S$\with$]}
  \end{array}
  \]
Sending a label $\#\mathsf{l} $ simply adds the $\#\mathsf{l} $ to the corresponding queue and
proceeds with the execution, whereas executing a
$\textcolor{pcolor}{\mathsf{case}}$ reads a label from
the queue and continues execution of the appropriate branch. Since
removing the label may empty the queue, we perform the same adjustment
as in rules \text{\rm[S$\parl_+$]} and \text{\rm[S$\parl_-$]}.

\mypara{Forwarding}
Finally, let us consider the execution of a forwarder (we overload the $@$ notation
to also denote concatenation of queues):
\[
( \fwd{x^-}{y^+}, H[\srecs{z}{q_1}{Q}{x}][\srecs{y}{q_2}{P}{w}]) \Mapsto 
(P, H[\srecs{z}{q_2 @ {q_1}}{Q}{w}]) \qquad  \text{\rm[Sfwd]} 
\]
A forwarder denotes the merging of two sessions $x$ and $y$.
Since the forwarder holds the read and write endpoints $x$ and $y$,
respectively, $Q$ has written (through $z$) the contents of $q_1$,
whereas the previous steps of the currently running process have
written $q_2$. Thus, $P$ is waiting to read $q_2 @ q_1$, justifying
the rule above.

The reader may then wonder about other possible configurations of the
SAM heap and how they interact with the forwarder. Specifically, what
happens if $y$ is of a positive type but a read endpoint of a queue,
or, dually, if $x$ is of a negative type but a write endpoint. The
former case is \emph{ruled out} by the SAM since the heap satisfies
the invariant that any session record of the form
$\srecs{x\ass A}{q}{P}{y\ass A}\in H$ is such that $A$ must be of
negative polarity or $P$ is the inert process (which cannot be
forwarded).  The latter case is possible and is handled by rule
\text{\rm[S$- $]}, since such a forward $\fwd{x^-}{y^+}$ stands
for a process that wants to perform a \emph{negative polarity} action
on a \emph{write} endpoint (or a positive action on a read endpoint).

\vspace{-10pt}
\subsection{On the Write-Bias of the SAM}
Consider the following $\CLL$ process:

\vspace{6pt}
\centerline{$
   P \triangleq \cuti{a:\one \otimes \one }{P_1}{\subs{a}{b}Q_1}
$}
$$
  \begin{array}{ll}
    P_1 \triangleq \potimes{a}{y}{P_2}{P_3} &  \qquad  Q_1 \triangleq \pparl{b}{x}{Q_2}\\
    P_2 \triangleq \pone{y} &  \qquad  Q_2 \triangleq \pbot{x}{Q_3}\\
    P_3 \triangleq \pone{a} &  \qquad  Q_3 \triangleq \pbot{b}{\zero}\\[1em]
  \end{array}
$$
Let us walk through the execution trace of $P$:
  \begin{tabbing}
   (1) $(P , \emptyset) \Mapsto$ \` \text{by \rm[SCut]}\\
   (2) $(P_1 , \srecs{a}{\nilm}{Q_1}{b}) \Mapsto$  \` \text{by \rm[S$\otimes$]}\\
    (3) $(P_3, \srecs{a}{\clos(y,P_2)}{Q_1}{b}) \Mapsto$ \` \text{by
      \rm[S$\one$]} \\
    (4) $(Q_1, \srecs{a}{\clos(y,P_2) @\checkmark }{\zero}{b}) \Mapsto$ \`
    \text{by \rm[S$\parl_-$]}\\
    (5) $(P_2 , \srecs{y}{\nilm}{Q_2}{x} , \srecs{a}{\checkmark
    }{\zero}{b} ) \Mapsto$ \` \text{by \rm[S$\one$]} \\
    (6) $(Q_2, \srecs{y}{\checkmark}{\zero}{x} , \srecs{a}{\checkmark
    }{\zero}{b}) \Mapsto$ \` \text{by \rm[S$\bot$]} \\
    (7) $(Q_3, \srecs{a}{\checkmark }{\zero}{b}) \Mapsto$ \` \text{by
      \rm[S$\bot$]}\\
    (8) $(\zero , \emptyset)$ 
\end{tabbing}

The SAM begins in the state on line (1) above, executing the
cut. Since the type of $a$ is positive, we execute $P_1$, and allocate
the session record, suspending $Q_1$, resulting in the state on line
(2). Since $P_1$ is a write action on a write endpoint, we proceed via
the \rm[S$\otimes$] rule, resulting in the SAM configuration in line
(3), executing $P_3$ and adding a closure containing
$P_2$ to the session queue with write endpoint $a$. Executing $P_3$
(3), a $\textcolor{pcolor}{\mathsf{close}}$ action, requires adding
the $\checkmark$ to the queue and context switching to the 
process $Q_1$, now ready to receive the sent value. The applicable
rule is now (4) \rm[S$\parl_-$], and so execution will context switch
to $P_2$ after creating the session record for the new session with
endpoints $y$ and $x$. $P_2$ will execute and the machine ends up in
state (6) followed by (7), which consume the appropriate $\checkmark$
and deallocate the session records.

Note how after executing the send action of $P_1$ we eagerly execute
the positive action in $P_3$ rather than context switching to
$Q_1$. While in this particular process it would have been safe to
execute the negative action in $Q_1$, switch to $P_2$ and then back to
$Q_2$, we would now need to somehow context switch to $P_3$
\emph{before} continuing with the execution of $Q_3$, or execution
would be stuck. However, the relationship between $P_3$ and $Q_2$ is
unclear at best. Moreover, if the continuation of $Q_1$ were of the
form $ \pbot{b}{\pbot{x}{\zero}}$, the context switch after the
execution of $P_2$ would have to execute $P_3$, or the machine
would also be in a stuck state.


\vspace{-10pt}
\subsection{Illustrating Forwarding}
To better illustrate the way in which $\fwd{x^-}{y^+}$ effectively
stands for a negative action, consider the following $\CLL$ process
(to simplify the execution trace we assume the existence of output and
input of integers typed as $\mathsf{int} \otimes A$ and
$\dual{\mathsf{int}}\parl A$, respectively, eliding the need
for process closures in this example):

\vspace{-0.3cm}
{\small\[
   P \triangleq \cuti{b:\dual{\mathsf{int}} \parl \dual{\mathsf{int}} \parl \one }{P_1}{\subs{b}{c}
     \cuti{a:{\mathsf{int}}\otimes\dual{\mathsf{int}} \parl
       \one}{Q_1}{\subs{a}{d} R_1}}
 \]}
\vspace{-0.5cm}
 {\small
\[  
  \begin{array}{lll}
    P_1 \triangleq \pparl{b}{x}{P_2}
    &  \qquad  Q_1 \triangleq \textcolor{pcolor}{\mathsf{send}}\,a(1);{Q_2}
    & \qquad R_1 \triangleq \pparl{d}{y}{R_2}\\
    P_2 \triangleq \pparl{b}{z}{P_3}
    &  \qquad  Q_2 \triangleq \textcolor{pcolor}{\mathsf{send}}\,c(3);{Q_3}
    & \qquad R_2 \triangleq \textcolor{pcolor}{\mathsf{send}}\,d(2);{R_3}\\
    P_3 \triangleq \pone{b} &  \qquad Q_3 \triangleq \fwd{a}{c}
    & \qquad R_3 \triangleq \pbot{d}{\zero}\\[1em]
  \end{array}
\]}

If we consider the execution of $P$ we observe:
 {\small\begin{tabbing}
   (1) $(P , \emptyset) \Mapsto$ \` \text{by \rm[SCut]}\\
  (2) $( \cuti{a}{Q_1}{\subs{a}{d}R_1} , \srecs{c}{\nilm}{P_1}{b}) \Mapsto$ 
  \` \text{by \rm[SCut]}\\
  (3) $ ( Q_1 , \srecs{a}{\nilm}{R_1}{d} ,\srecs{c}{\nilm}{P_1}{b} )
  \Mapsto$ \` \text{by \rm[S$\otimes$]}\\
  (4) $(Q_2 ,  \srecs{a}{1}{R_1}{d} , \srecs{c}{\nilm}{P_1}{b} )
  \Mapsto$ \` \text{by \rm[S$\otimes$]}\\
  (5) $(\fwd{a}{c} ,  \srecs{a}{1}{R_1}{d} , \srecs{c}{3}{P_1}{b} )
  \Mapsto$ \` \text{by \rm[S$-$]}\\
  (6) $(R_1,  \srecs{a}{1}{Q_3 }{d} , \srecs{c}{3}{P_1}{b})
  \Mapsto$ \` \text{by \rm[S$\parl$]}\\
  (7) $(R_2 , \srecs{d}{\nilm}{Q_3 }{a},\srecs{c}{3}{P_1}{b} )
  \Mapsto$ \` \text{by \rm[S$\otimes$]}\\
  (8) $(R_3 , \srecs{d}{2}{Q_3 }{a},\srecs{c}{3}{P_1}{b} )
  \Mapsto$ \` \text{by \rm[S$-$]}\\
  (9) $(\fwd{a}{c} , \srecs{d}{2}{R_3 }{a},\srecs{c}{3}{P_1}{b} )
  \Mapsto$ \` \text{by \rm[Sfwd]}\\
  (10) $(P_1 , \srecs{d}{3@ 2}{R_3}{b}) \Mapsto$ \` \text{by
    \rm[S$\parl$]}\\
  (11) $(P_2 , \srecs{b}{2}{R_3}{d}) \Mapsto$ \` \text{by
    \rm[S$\parl$]}\\
    (12) $(P_3 , \srecs{b}{\nilm}{R_3}{d}) \Mapsto$ \` \text{by
    \rm[S$\one$]}\\
  (13) $(R_3, \srecs{b}{\checkmark}{R_3}{d}) \Mapsto$ \` \text{by
    \rm[S$\bot$]}\\
  (14) $(\zero , \emptyset)$
\end{tabbing}}

The first four steps of the execution of $P$ allocate the two session
records and the writes by $Q_1$ and $Q_2$ takes place. We are now in
configuration (5), where $Q_3 = \fwd{a^-}{c^+}$ is to execute and $a$ is a
write endpoint of a queue assigned a negative type ($\dual{\mathsf{int}}\parl
\one$). This forwarder stands for a process performing a
negative action on a write endpoint (i.e.,~$P_1$) and so context
switching is required, rule \rm[S$-$] applies
and the SAM context switches to $R_1$, suspending $Q_3$ until the
forward can be performed.
After $R_1$ receives (6) and the queue endpoints $a$ and $d$ are
swapped (7), $R_2$ executes and then rule  \rm[S$-$] applies (8),
context switching back to $Q_3$. Since the queue endpoints are now
flipped, rule \rm[Sfwd] now applies (9), collapsing the two session
records (via queue concatenation) and proceeding with the execution of
$P_1$, $P_2$, $P_3$ and $R_3$
(10-14). Note the correct ordering in which the sent values are
dequeued, where $3$ is read before $2$, as intended.


\mypara{Discussion}
The core execution rules for the SAM are summarized in
Figure~\ref{fig:rules}.  At this point, the reader may wonder just how
reasonable the SAM's evaluation strategy is. Our evaluation strategy
is devised to be a deterministic, sequential strategy, where exactly
one process is executing at any given point in time,
supported by a queue-based buffer structure for channels
and a heap for session records.
Moreover,
taking inspiration from focusing and polarized logic, we adopt a
write-biased stance and prioritize (bundles of) write actions over
reads, where suspended processes hold the read endpoint of queues
while waiting for the writer process to fill the
queue, and hold write endpoints of queues \emph{after} filling them,
waiting for the reader process to empty the queue. 

While this latter point seems like a reasonable way to ensure that
inputs never get stuck, it is not immediately obvious that the
strategy is sound wrt the more standard (asynchronous) semantics of
$\CLL$ and related languages, given that processes are free to act on
multiple sessions. Thus, the write-bias of the cut rule (and
the overall SAM) does
not necessarily mean that the process that is chosen to execute will
immediately perform a write action on the freshly cut session $x$. In
general, such a process may perform multiple write or read actions on many
other sessions before performing the write on $x$, meaning that
multiple context switches may occur. Given this, it is not 
obvious that this strategy is adequate insofar as preserving the
correctness properties of $\CLL$ in terms of soundness, progress and type
preservation. The remainder of this paper is devoted to establishing
this correspondence in a precise technical sense.

\begin{figure}[t]
{\small$$
\begin{array}{lll}
(\cuti{x:A^{+}}{P}{Q}, H) \Mapsto 
(P, H[\srecs{x}{\nilm}{\subs{y}{x}Q}{y} ] ) & \text{\rm[SCut]}
\vspace{6pt}
\\
( \fwd{x}{y}, H[\srecs{z}{q_1}{Q}{x}][\srecs{y}{q_2}{P}{w}]) \Mapsto 
(P, H[\srecs{z}{q_2 @ {q_1}}{Q}{w}])&  \text{\rm[Sfwd]} 
\vspace{6pt}
\\
( \pone{x}, H[\srecs{x}{q}{P}{y}]) \Mapsto 
(P, H[\srecs{x}{q @ {\checkmark}}{\zero}{y}])&  \text{\rm[S$\one$]} 
\vspace{6pt}
\\
(\pbot{y}{ P}, 
H[\srecs{x}{{\checkmark}}{\zero}{y}]) \Mapsto 
(P, H) &  \text{\rm[S$\bot$]} 
\vspace{6pt}
\\
(\mathcal A^-(x), 
H[\srecs{x}{{q}}{Q}{y}]) \Mapsto 
(Q, H [\srecs{x}{{q}}{ \mathcal A^-(x)}{y} ]) & \text{\rm[S$- $]}
\vspace{6pt}
\\
( \potimes{x}{z}{R}{Q}, H[\srecs{x}{q}{P}{y}]) \Mapsto 
(Q, H[\srecs{x}{q @ \clos(z,R)}{P}{y}]) & \text{\rm[S$\otimes$]}
\vspace{6pt}
\\
( \pparl{y}{w:+}{Q}, H[\srecs{x}{\clos(z,R)@q }{P}{y}]) \Mapsto 
(Q, H[\srecs{w}{\nilm}{R}{z}][\srecss{x}{q }{P}{y}])
& \text{\rm[S$\parl$]}
\vspace{6pt}
\\
( \pparl{y}{w:-}{Q}, H[\srecs{x}{\clos(z,R)@q }{P}{y}]) \Mapsto 
(R, H[\srecs{z}{\nilm}{Q}{w}][\srecss{x}{q }{P}{y}])
& \text{\rm[S$\parl$]}
\vspace{6pt}
\\
( \#\mathsf{l}\; x;Q, H[\srecs{x}{q}{P}{y}]) \Mapsto 
(Q, H[\srecs{x}{q @ \#\mathsf{l}}{P}{y}]) & \text{\rm[S$\oplus$]}
\vspace{6pt}
\\
( \pcasem{y}{\ell}{L}{Q_\ell}, H[\srecs{x}{\#\mathsf{l}@q }{P}{y}]) \Mapsto 
(Q_{\labl{l}}, H[\srecss{x}{q }{P}{y}])
& \text{\rm[S$\with$]}
\vspace{6pt}
\\
N.B.: \srecss{x}{q }{P}{y} \deff \mbox{if }(q = \nilm)\mbox{ then }\srecs{y}{q }{P}{x}\mbox{ else }
\srecs{x}{q }{P}{y}
\end{array}
$$}
\caption{The core SAM Transition Rules}\label{fig:rules}
\end{figure}







 

\section{ $\Blang$: A Buffered Formulation of $\CLL$}\label{sec:blang}

There is a substantial gap between the language $\CLL$, presented in an abstract algebraic style, and its operational semantics, defined by equational and rewriting systems, and an abstract machine as the SAM, a deterministic state machine manipulating several low level structures. Therefore, even if the core SAM structure and transition rules are fairly simple, proving its correctness is more challenging and technically involved, and require progressive build up. 
Therefore, we first bridge between $\CLL$ and SAM via a intermediate logical language $\Blang$, which extends $\CLL$ with a buffered cut construct.
\begin{figure}[t]
  {\small
    $$
\begin{array}{c}
\begin{prooftree}
	\infer[no rule]0{\labeltext{[Tcut]}{[Tcut]}}
	\infer[no rule]1{P \vdashB  \Delta', x:\dual A; \Gamma \quad Q \vdash^{\mathsf{B}}  \Delta, y: A; \Gamma} 
	\infer1{\cuti{x:\dual{A}\; [\nilm]\; \ou{y}:A}{P}{Q} \vdash^{\mathsf{B}}  \Delta', \Delta; \Gamma} 
\end{prooftree}\quad \mbox{(A positive)}\quad \text{[TcutB]  }
\vspace{6pt}\\
\begin{prooftree}
	\infer[no rule]0{\labeltext{[Tcut]}{[Tcut]}}
	\infer[no rule]1{\cuti{\ou{x}:\one \; [{q}]\;  y:B}{\pone{x}}{Q} \vdashB \Delta; \Gamma} 
	\infer1{\cuti{\ou{x}: \emptyset [q @ \checkm] y:B}{\zero}{Q} \vdashB \Delta; \Gamma} 
\end{prooftree} \text{[Tcut-$\one$]}
\vspace{6pt}\\
\begin{prooftree}
	\infer[no rule]0{\labeltext{[Tcut]}{[Tcut]}}
	\infer[no rule]1{\cuti{\ou{x}:T{\otimes} A \;[q]\; y:B}{\potimes{x}{y}{R}{P}}{Q} \vdashB \Delta; \Gamma} 
	\infer1{\cuti{\ou{x}:A \;[q @ \clos(y,R)   ] \; y:B}{P}{Q} \vdashB \Delta; \Gamma} 
\end{prooftree} \text{[Tcut-$\otimes$]}
\vspace{6pt}\\
\begin{prooftree}
	\infer[no rule]0{\labeltext{[Tcut]}{[Tcut]}}
	\infer[no rule]1{\cuti{\ou{x}:\oplusm{\ell}{L}{A_\ell} \;[q]\; y:B}
	{\labl{l}\ x;P}{Q} \vdashB \Delta; \Gamma} 
	\infer1{\cuti{\ou{x}:A_{\labl{l}} \;[q @ \labl{l} ] \; y:B}{P}{Q}
	 \vdashB \Delta; \Gamma} 
\end{prooftree} \text{[Tcut-$\oplus$]}
\vspace{6pt}\\
\begin{prooftree}
	\infer[no rule]0{\labeltext{[Tcut]}{[Tcut]}}
	\infer[no rule]1{\cuti{\ou{x}:!A \;[q]\; y:B}
	{\pbang{x}{z}{P}}{Q} \vdashB \Delta; \Gamma} 
	\infer1{\cuti{\ou{x}:\emptyset \;[q @ \closB(z,P) ] \; y:B}{\zero}{Q} \vdashB \Delta; \Gamma} 
\end{prooftree} \text{[Tcut$!$]}
\end{array}
$$}
\caption{Additional typing rules for $\Blang$.\label{fig:B-rules}}
\end{figure}
$$
\cuti{a:A\;[ q ]\;b:B}{P}{Q}
$$
The buffered cut construct models interaction via a ``message queue" with two polarised endpoints $a$ and $b$,
held respectively by the processes $P$ and $Q$. A polarised endpoint has the form $x$ or $\ou{x}$. The endpoint marked $\overline{x}$ is the only allowing writes,
the unmarked $y$ is the only one allowing reads, exactly one of the two endpoints is marked.
The endpoints types $A,B$  are of course
related but do not need to be exact duals, the type of the writer endpoint may be
advanced in time wrt the type of the reader endpoint, reflecting the messages already enqueued but not yet consumed. If the queue is empty, we have $A = \dual{B}$. Thus a buffered cut with empty queue corresponds to the basic cut of $\CLL$.
$$
\begin{array}{llllll}
\cuti{x:A}{P}{Q}
\equiv
\cuti{\ou x:{A}\;[ \nilm ]\;y:\dual A}{P}{\subs{y}{x}Q} &
\quad \mbox{($A+$)}  \\
\end{array}
$$
The queue $q$ stores values $V$ defined by
$$
\begin{array}{llllllllll} 
\mathit{V} & ::= &  \checkm & \mbox{(Close token)}
 & | & \;\;\labl{l} &  \mbox{(Selection Label)}\\
 & | & \clos(x,P) &  \mbox{(Linear Closure)} 
 & | & \;\;\closB(x,P) &  \mbox{(Exponential Closure)}
\end{array}
$$
$$
\begin{array}{llllllllll} 
\mathit{q} & ::= & \nilm \; | \; \mathit{V} \; | \; V @ q & & \mbox{(Queue)}
\end{array}
$$
We use $@$ to also denote (associative) concatenation operation of queues, with unit $\nilm$.
Enqueue and dequeue operations occur respectively on the lhs and rhs.

The type system $\Blang$ is obtained from $\CLL$ by replacing [TCut] with the  typing rules (and symmetric ones) in Fig.~\ref{fig:B-rules}.
We distinguish the type judgements as $P\vdash \Delta; \Gamma$ for $\CLL$ and $P \vdashB \Delta; \Gamma$ for $\Blang$.
 The [TCutB] rule sets the endpoints mode based in the cut type polarity,
 applicable whenever the queue is empty. The remaining rules relate
 queue contents with their corresponding (positive action)
 processes. For instance, rule [Tcut-$\otimes$] can be read bottom-up
 as stating that typing processes mediated by a queue containing a process
 closure $\clos(y,R)$ amounts to typing the process that will emit the
 session $y$ (bound to $R$), interacting with the queue with the
 closure removed. 
Rules [Tcut-$\oplus$] and [Tcut!] apply a similar principle to the
other possible queue contents. In [Tcut-$\one$] and [Tcut!] the write endpoint is  typed $\emptyset$, as the sender has terminated ($\zero$).
 
  \begin{figure}[t]
{\small$$
\begin{array}{ll}
\cuti{a:A[q]b:B}{Q}{P} \equiv^\mathsf{B} \cuti{b:B[q]a:A}{Q}{P}&\text{[comm]} 
\vspace{6pt}\\
\cuti{x[q]y}{P}{(\mix{Q}{R})} \equiv^\mathsf{B} \mix{(\cuti{x[q]y}{P}{Q})}{R}& \text{[CM]} 
\vspace{6pt}\\
  \cut{x[q]z}{P}{(\cuti{y[p]w}{Q}{R})} \equiv^\mathsf{B} \cuti{y[p]w}{(\cuti{x[q]z}{P}{Q})}{R}& \text{[CC]}
\vspace{6pt}\\
 \cuti{z[q]w}{P}{(\cutBi{x}{y}{Q}{R})} \equiv^\mathsf{B} \cutBi{x}{y}{Q}{(\cuti{z[q]w}{P}{R})}&  \text{[CC!]} 
\vspace{6pt}\\
    \cutBi{x}{y}{P}{(\cuti{z[q]w}{Q}{R})} \equiv^\mathsf{B} \\
    \quad\quad\quad\quad\quad\quad
    \cuti{z[q]w}{
    (\cutB{x}{y}{P}{Q})}{(\cutBi{x}{y}{P}{R}})  & \text{[D-C!]}
\end{array}
$$}
\vspace{-10pt}
 \caption{\label{fig:equivB} Additional structural congruence rules for 
 $\Blang$.}
\end{figure} 
Structural congruence for $\mathsf{B}$ (noted $\equiv^\mathsf{B}$) is
obtained by extending $\equiv$ with commutative conversions for the
buffered cut, listed in Fig.~\ref{fig:equivB}. The following provisos
apply: [CM] $y \in \fn{Q}$; in [CC] $y,z \in \fn{Q}$; in [CC!]
$x \notin \fn{P}$.  Accordingly, reduction for $\mathsf{B}$ (noted
$\to^\mathsf{B}$) is obtained by replacing the $\to$ rules [fwd],
[$\one\bot$], [$\otimes\parl]$ and [$\oplus\with]$ by the rules in
Fig.~\ref{fig:redB}. Essentially each principal cut reduction rule of
$\CLL$ is replaced by a pair of ``positive'' ($\to_p$) / ``negative''
($\to_n$) reduction rules that allow processes to interact
asynchronously via the queue, that is, positive process actions
(corresponding to positive types) are non-blocking. For example, the
rule [$\otimes$] for send appends a session closure to the tail of the
queue (rhs) and the rule for receive pops a session closure from the head
of the queue (lhs).  Notice that positive rules are enabled only if the
relevant endpoint is in write mode ($\ou{x}$), and negative rules are
enabled only if the relevant endpoint is in read mode ($y$).  In
[$\parl$] above the target cuts endpoint polarities depends on the
types of the composed processes. To uniformly express the appropriate
marking of endpoint polarities we define some convenient
abbreviations:
\begin{definition}[Setting polarities]
~$$
\begin{array}{llllll}
\cuti{a:A[\nilm]b:B}{Q}{P}^\mathsf{p} \deff 
\mbox{if }${+A}$  \mbox{ then }\cuti{\ou a:A[\nilm]b:B}{Q}{P} \\
\hspace{5.5cm} \mbox{ else }\cuti{ a:A[\nilm]\ou b:B}{Q}{P}\\
\cuti{\ou a:A[q]b:B}{Q}{P}^\mathsf{p} \deff 
\cuti{\ou a:A[q]b:B}{Q}{P} \quad (q\neq \nilm)\\
\end{array}
$$
\end{definition} 

\begin{figure}[t]
{\small$$
\begin{array}{ll}
\labeltext{[fwd]}{[to-fwd]} 
\cuti{\ou z \; [q_1] \; x}{Q}{\cut{\ou y\; [q_2]\; w}{\fwd{x}{y}}{P}} 
\to^{\mathsf{B}} \cuti{\ou z\; [q_2 @ q_1]\; w}{Q}{P} \quad\quad \;&  \text{[fwdp]}
\vspace{6pt}\\
\labeltext{[$\one\bot$]}{[to-close]} 
\cuti{\ou{x}{}\; [q]\; y}{\pone{x}}{Q} \to^\mathsf{B} \cuti{\ou{x} \; [q@\checkm] \; y}{\zero}{Q}   & \text{[$\one$]} 
\vspace{6pt}\\
\labeltext{[$\one\bot$]}{[to-close]} 
\cuti{\ou{x}\;[\mathsf{\checkmark}]\;y}{\zero}{\pbot{y}{P}} \to^\mathsf{B} P  & \text{[$\bot$]} 
\vspace{6pt}\\
\labeltext{[$\otimes \parl$]}{[to-send]} 
\cuti{\ou x \;[q] \;y}{\potimes{x}{z}{P}{Q}}{R} \to^\mathsf{B} \cuti{\ou x\;[q@\mathsf{clos}(z,P)] |\; y}{Q}{R}  & \text{[$\otimes$]}
\vspace{6pt}\\
\labeltext{[$\otimes \parl$]}{[to-recv]} 
\cuti{\ou x \; [\mathsf{clos}(z,P)@q ] \; y}{Q}{\pparl{y}{w}{R}} \to^\mathsf{B} \\\quad\quad\quad\quad\quad\quad\quad\quad\quad\cuti{\ou x\;[q]\;y}{Q}{\cuti{ z\;[\nilm]\; w}{P}{R}^\mathsf{p}}^\mathsf{p}  & \text{[$\parl$]}
\vspace{6pt}\\
\labeltext{[$\otimes \parl$]}{[to-oplusl]} 
\cuti{\ou x \;[q] \;y}{\clab{x}{P}{l}}{R} \to^\mathsf{B} \cuti{\ou x\;[q@\#\mathsf{l}] |\; y}{Q}{R}  & \text{[$\oplus$]}
\vspace{6pt}\\
\labeltext{[$\otimes \parl$]}{[to-recv]} 
\cuti{\ou x \; [\mathsf{l}@q ] \; y}{Q}{\pcasem{y}{\ell}{L}{P_\ell} } \to^\mathsf{B} 
\cuti{ x\;[q]\;y}{Q}{P_{\labl{l}}}^\mathsf{p}  & \text{[$\with$]}
\vspace{6pt}\\
\labeltext{[$\otimes \parl$]}{[to-!]} 
\cuti{\ou x \;[q] \;y}{\pbang{x}{z}{P}}{Q} \to^\mathsf{B} \cuti{\ou x\;[q@\closB (z,P)] |\; y}{\zero}{Q}  & \text{[$!$]}
\vspace{6pt}\\
\labeltext{[$\otimes \parl$]}{[to-?]} 
\cuti{\ou x \;[\closB (y,P)] \;y}{\zero}{\pwhy{y}{Q}} \to^\mathsf{B} 
\cutBi{x}{y}{P}{Q}  & \text{[$?$]}
\end{array}
$$}
 \caption{\label{fig:redB} Reduction $P \to^\mathsf{B} Q$.}
\end{figure} 
The following definition then formalizes the  intuition  given above about how to
encode processes of 
$\CLL$ into processes of $\Blang$.
\begin{definition}[Embedding]
Let $P\vdash\Delta;\Gamma$.
$P^\dagger$ is the $\mathsf{B}$ process such that
$$
\begin{array}{llllll}
(\cuti{x:A}{P}{Q})^\dagger
\deff
\cuti{x:{A}\;[ \nilm ]\;y:\dual A}{P^\dagger}{(\subs{y}{x}Q)^\dagger}^\mathsf{p} \\
\end{array}
$$
 homomorphically defined in the remaining constructs. Clearly $P^\dagger \vdash^{\mathsf{B}} \Delta; \Gamma$.

\end{definition}

\subsection{Preservation and Progress for $\Blang$ }
\label{sec:safetyCLLB}

In this section, we prove basic safety properties of $\Blang$: Preservation (Theorem~\ref{theorem:type-preservation}) and Progress (Theorem~\ref{theorem:progress}). To reason about type derivations involving buffered cuts, we formulate some auxiliary 
inversion principles that allow
us, by aggregating sequences of application of
 [TCut-$*$] rules of $\Blang$,  to talk in a uniform way about typing of values in queues and typing
of processes connected by queues. To assert typing of queue values $c$ we use judgments the
form $\Gamma;\Delta \vdash  c\ass E$, where $E$ is a either a type or a one hole type context,
defined by
$$
E ::= \Box \; | \; T \; | \; T\parl E \; | \; \withm{\ell}{L}{E_\ell} 
$$
where in $\withm{\ell}{L}{E_\ell}$ only branch type
 $E_{\labl{l}}$ for some selected label $\labl{l}\in L$ is a one hole context (to plug the continuation type);
 only the branch chosen by the selected label in a queue is relevant to type  next queue values.
We identify the selected branch in the type by tagging it
with the corresponding label $\labl{l}$ thus
$\withm{\ell}{L}{E_\ell} [\labl{l}]$.
We then introduce the following typing rules for queue values.
\begin{definition}[Typing of Queue Values]

  {\small$$
\begin{array}{cccccccccccc}
\displaystyle
\frac{}{\Gamma; \vdash_{\mathtt{val}} \checkm:\bot} &  \quad\quad &
\displaystyle
\frac{P \vdashB \Delta,z:T; \Gamma}{\Gamma;\Delta\vdash_{\mathtt{val}} \clos(z,P):\dual T\parl E}
\vspace{4pt}\\
\displaystyle
\frac{\withm{\ell}{L}{E_\ell} [\labl{l}]}{\Gamma;\Delta\vdash_{\mathtt{val}} \labl{l}:\withm{\ell}{L}{E_\ell} }
&  \quad\quad &
\displaystyle
\frac{ P\vdashB z:A; \Gamma}{\Gamma;\vdash_{\mathtt{val}} 
\closB(z,P):?\dual A}
\end{array}
$$}
\end{definition}
Given a sequence of $k$ one hole queue value types $E_i$ and a type $A$, we
denote by $\mathbf{E}_k;A$ the type $E_1[E_2[...E_k[A]]]$. 
Queue value types allow us to talk in a uniform way about the type a receiver processes compatible with the types of enqueued values, as characterized by the following Lemma \ref{lemma:queue-prf} and Lemma \ref{lemma:queue-full}.

\begin{restatable}[Non-full]{lemma}{queueprf}\label{lemma:queue-prf} 
For $P\neq\zero$ the rule below is admissible and invertible:
$$
\frac{
P \vdashB \Delta_P,x{:}A;\Gamma
\quad
Q \vdashB \Delta_Q, y{:}B;\Gamma
\quad q = \overline{c_k}
\quad B=\mathbf{E}_k;\dual{A}
\quad
\Gamma;\Delta_i \vdash c_i{:} E_i
	\quad {-}B	} 
{\cuti{\ou{x}{:}A \;[q] \; y{:}B}{P}{Q}
	\vdashB \Delta_P,\Delta_Q, \Delta_1,...,\Delta_k;\Gamma} 
$$
\hide{
Let $\cut{\ou{x}:A \;[\overline{c_k}] \; y:B}{P}{Q} \vdashB \Delta; \Gamma$,
with $c_i = \mathsf{clos}(z_i;R_i;\Delta_{R_i})$ where $0\leq k$. Then 

(1)
$\cut{\ou{x}:T_1\otimes \ldots \otimes T_k \otimes A \;[\nilm] \; y:B}{s_1;\cdots ; s_k;P} {Q}\vdash \Delta; \Gamma $ and $s_i = \pi(c_i)$.

(2) 
$\Delta=\Delta_{R_1},\ldots, \Delta_{R_k},\Delta'$ and 
$R_i\vdash \Delta_{R_i},z_i:T_i$ and $B=\dual T_1\parl\cdots \parl \dual T_k \parl \dual A$.
}
\end{restatable}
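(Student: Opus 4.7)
The plan is to prove both admissibility and invertibility by induction on $k$, the length of the queue.

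For admissibility, the base case $k=0$ gives $q=\nilm$ and $B=\dual{A}$; since $B$ is negative by the hypothesis ${-}B$, its dual $A$ is positive, so rule [TcutB] directly derives the required conclusion from $P \vdashB \Delta_P, x{:}A;\Gamma$ and $Q \vdashB \Delta_Q, y{:}\dual{A};\Gamma$. For the inductive step $k\geq 1$, I case-split on the last queue value $c_k$ (equivalently, on the shape of $E_k$). Since $P \neq \zero$, only two shapes can occur: the linear closure $c_k = \clos(z_k,R_k)$ with $E_k = \dual{T_k}\parl E_k'$, and the selection label $c_k = \labl{l}$ with $E_k = \withm{\ell}{L}{E_\ell}[\labl{l}]$ (the $\checkm$ and $\closB$ cases would force $P=\zero$ and so belong to a dual ``full'' lemma). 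In the closure case I would assemble $P' \deff \potimes{x}{z_k}{R_k}{P}$; a single application of [T$\otimes$] to $P \vdashB \Delta_P, x{:}A;\Gamma$ and $R_k \vdashB \Delta_k, z_k{:}T_k;\Gamma$ yields $P' \vdashB \Delta_P,\Delta_k, x{:}T_k\otimes A;\Gamma$, and the new writer type $A' \deff T_k\otimes A$ coincides with $\dual{E_k[\dual{A}]}$. The inductive hypothesis, instantiated with queue $c_1,\dots,c_{k-1}$, writer $P'$ at type $A'$, and reader type $B = \mathbf{E}_{k-1};\dual{A'}$, delivers a typing of $\cuti{\ou{x}{:}A'\;[c_1,\dots,c_{k-1}]\;y{:}B}{P'}{Q}$; a final application of [Tcut-$\otimes$] folds $c_k$ back into the queue and produces the goal. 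The label case is entirely analogous, invoking [T$\oplus$] to build the select prefix and [Tcut-$\oplus$] to fold the label back.

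For invertibility, I would induct on the given typing derivation. If $q=\nilm$ the last rule must be [TcutB] (no other rule derives a conclusion with an empty queue and a non-terminated writer), and its two premises supply exactly the required hypotheses with $B=\dual{A}$. Otherwise, the last rule is some [Tcut-$*$] whose conclusion determines the shape of the last queue value; its premise has queue $c_1,\dots,c_{k-1}$ and a writer $P'$ with one extra prefix at a larger type $A'$. The IH applied to the premise gives separate typings for $P'$, $Q$, and the values $c_1,\dots,c_{k-1}$; inversion on the standard linear-logic typing rule underneath (e.g., [T$\otimes$] or [T$\oplus$]) then splits $P'$ into $P$ and the prefix's data, thereby recovering the typing of $P$ and the $\vdash_{\mathtt{val}}$ judgement for $c_k$.

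The main subtlety is the bookkeeping of the one-hole context composition: one must verify that $\mathbf{E}_k;\dual{A}$ factors correctly as $\mathbf{E}_{k-1};\dual{A'}$ with $A' = \dual{E_k[\dual{A}]}$, and that the polarity hypothesis ${-}B$ is preserved through the induction so that the eventual appeal to [TcutB] at the top of the reconstructed derivation remains legal (since $A'$ is positive whenever $E_k$ has the closure or selection shape). The restriction $P \neq \zero$ is essential: it excludes [Tcut-$\one$] and [Tcut!] in both directions, which are handled by the dual ``full'' lemma referenced immediately after.
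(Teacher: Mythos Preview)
Your proposal is correct and follows essentially the same approach as the paper: admissibility by iterating [Tcut-$\otimes$]/[Tcut-$\oplus$] over the queue, and invertibility by induction on the typing derivation with case analysis on the last rule, using inversion on [T$\otimes$]/[T$\oplus$] to split the reconstructed prefix. Your treatment is in fact slightly more explicit than the paper's on two points---you spell out the polarity bookkeeping needed for the base [TcutB] case, and you justify the exclusion of [Tcut-$\one$]/[Tcut!] directly from $P\neq\zero$ rather than via the shape of $c_k$---but these are presentational differences, not substantive ones.
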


\hide{
\begin{proof}
To check that the rule is admissible, directly derive the
conclusion using $k$ applications of [TCut$\otimes$] or [TCut$\oplus$].
To check inversion, we proceed by
induction in the derivation of
$\cuti{\ou{x}:A \;[\overline{c_k}] \; y:B}{P}{Q} \vdashB \Delta; \Gamma$.
 
  (Case  [TCutB]) We have $k=0$ so $q=\nilm$, $A$ positive and $B$ negative. 

 (Case  [TCut-$\one$]) 
 Not applicable, since $c_k\neq \checkmark$.
 
(Case [TCut-$\otimes$])
We have $\cuti{\ou{x}{:}A \;[\overline{c_{k-1}} @ \mathsf{clos}(z_k;R_k)   ] \; y{:}B}{P}{Q} \vdashB \Delta; \Gamma$ derived from
$\cuti{\ou{x}:T_k{\otimes} A \;[  \overline{c_{k-1}} ]\; y:B}{\potimes{x}{z_k}{R_k}{P}}{Q} \vdashB \Delta; \Gamma$.
By i.h. $\potimes{x}{z_k}{R_k}{P}\vdashB \Sigma_{k},x{:}T_k\otimes A$ and
$Q \vdash \Delta_Q, y{:}B$ where $B=\mathbf{E}_{k-1};{\dual { (T_k\otimes {A}) }}$
and 
$\Delta_i \vdashB c_i : E_i $ for $i\leq k-1$.
By inversion ([T$\otimes$])
$P \vdashB \Delta_P,x:A$,
 $\Sigma_{k} = \Delta_k,\Delta_P$ and 
 $R_k \vdashB \Delta_k, z_k\ass T_k$, so
 $\Delta_k \vdashB c_k \ass E_k$ where $E_k=\dual{T_k}\parl \Box $.
We conclude $P\vdashB \Delta_{P},x:A$,
$Q \vdash \Delta_Q, y\ass B$,
$B= \mathbf{E}_k ;\dual{A}$
and 
$\Delta_i \vdash c_i \ass E_i $ for $i\leq k$.

(Case  [TCut-$\oplus$]) 
We have $\cuti{\ou{x}{:}A_{\labl{l}} \;[\overline{c_{k-1}} @ \labl{l}   ] \; y{:}B}{P}{Q} \vdashB \Delta; \Gamma$ derived from
$\cuti{\ou{x}: \oplusm{\ell}{L}{A_\ell} \;[  \overline{c_{k-1}} ]\; y:B}{{\labl{l}\ x;P}}{Q} \vdashB \Delta, \Delta'; \Gamma$.
We proceed by the i.h. as with [T$\otimes$] above, considering
$B=\mathbf{E}_{k-1};{\dual { (\oplusm{\ell}{L}{A_\ell}) }}$
and
$\Delta_k \vdashB c_k : E_k$ where $E_k=\withm{\ell}{L}{E_\ell} $ and $E_{\labl{l}} = \Box$,
so that $B= \mathbf{E}_k ;\dual{A}$
and 
$\Delta_i \vdash c_i : E_i $ for $i\leq k$.

\end{proof}}
\hide{
\begin{lemma}[Queue prefix]\label{lemma:queue-prf} 
The proof rule below is admissible and invertible:
$$
\frac{
P \vdash \Delta_P,x{:}A
\quad
Q \vdash \Delta_Q, y{:}B
\quad q = \overline{c_k}
\quad B=\dual T_1{\parl}{...}{\parl} \dual T_k{\parl} \dual{A}
\quad
\Delta_i \vdash c_i{:} T_i {\parl} ()
	\quad {-}B	} 
{\cuti{\ou{x}{:}A \;[q] \; y{:}B}{P}{Q}
	\vdash \Delta_P,\Delta_Q, \Delta_1,...,\Delta_k} 
$$
\hide{
Let $\cut{\ou{x}:A \;[\overline{c_k}] \; y:B}{P}{Q} \vdash \Delta; \Gamma$,
with $c_i = \mathsf{clos}(z_i;R_i;\Delta_{R_i})$ where $0\leq k$. Then 

(1)
$\cut{\ou{x}:T_1\otimes \ldots \otimes T_k \otimes A \;[\nilm] \; y:B}{s_1;\cdots ; s_k;P} {Q}\vdash \Delta; \Gamma $ and $s_i = \pi(c_i)$.

(2) 
$\Delta=\Delta_{R_1},\ldots, \Delta_{R_k},\Delta'$ and 
$R_i\vdash \Delta_{R_i},z_i:T_i$ and $B=\dual T_1\parl\cdots \parl \dual T_k \parl \dual A$.
}
\end{lemma}
\begin{proof}
To check that the rule is admissible, directly derive the
conclusion using $k$ applications of [TCut-$\otimes$].
To check inversion, we proceed by
induction in the derivation of
$\cuti{\ou{x}:A \;[\overline{c_k}] \; y:B}{P}{Q} \vdash \Delta; \Gamma$.
 
  (Case of [TCutB]) We have $k=0$ so $q=\nilm$, $A$ positive and $B$ negative. 

 (Case of [TCut-$\one$]) Not applicable, since $c_k\neq \checkmark$.

(Case of [TCut-$\otimes$])
We have $\cut{\ou{x}:A \;[\overline{c_{k-1}} @ \mathsf{clos}(z_k;R_k;\Delta_{R_k})   ] \; y:B}{P}{Q} \vdash \Delta; \Gamma$ derived from
$\cut{\ou{x}:T_k{\otimes} A \;[  \overline{c_{k-1}} ]\; y:B}{\potimes{x}{z_k}{R_k}{P}}{Q} \vdash \Delta, \Delta'; \Gamma$.
By i.h. $\potimes{x}{z_k}{R_k}{P}\vdash \Sigma_{k},x:T_k\otimes A$,
$Q \vdash \Delta_Q, y:B$ where $B=\dual T_1\parl\ldots \parl \dual {(T_k\otimes {A})}$
and 
$\Delta_i \vdash c_i : T_i\parl () $ for $i\leq k-1$.
By inversion ([T$\otimes$])
$P \vdash \Delta_P,x:A$,
 $\Sigma_{k} = \Delta_k,\Delta_P$ and 
 $R_k \vdash \Delta_k, z_k:T_k$, so
 $\Delta_k \vdash c_k : T_k\parl () $.
We conclude $P\vdash \Delta_{P},x:A$,
$Q \vdash \Delta_Q, y:B$,
$B=\dual T_1\parl\ldots \parl \dual T_k\parl \dual{A}$
and 
$\Delta_i \vdash c_i : T_i\parl () $ for $i\leq k$.

(Case of [TCut-$\oplus$]) Similar to [T$\otimes$].

\hide{
By i.h.
$\cut{\ou{x}:T_1\otimes \ldots T_k \otimes A \;[\nilm] \; y:B}{s_1;\ldots ;s_{k-1};s_k;P} {Q}\vdash \Delta; \Gamma $.

(2) By (1) and inversion on the typing derivation of  
$s_1;\cdots; s_k;P \vdash x:A,\Delta_R$.
}

\end{proof}
}
Notice that a session type,
as defined by a $\CLL$ proposition, may terminate in either $\one$,
$\bot$ or an exponential type $!A/?A$. We then also have

\begin{restatable}[Full]{lemma}{queuefull}\label{lemma:queue-full} 
The proof rules below are admissible:
$$
\frac{
Q \vdashB \Delta_Q, y:B;\Gamma
\quad
\Gamma;\Delta_i \vdash c_i: E_i 
\quad B = \mathbf{E}_k;\bot
\quad c_k = \checkm
\quad
{-}B
	} 
{\cuti{\ou{x}:\emptyset \;[\overline{c_k}] \; y:B}{\zero}{Q}
	\vdashB \Delta_Q, \Delta_1,\ldots,\Delta_k;\Gamma} 
$$
$$
\frac{
Q \vdashB \Delta_Q, y\ass B;\Gamma
\quad
\Gamma;\Delta_i \vdash c_i\ass E_i 
\quad B = \mathbf{E}_{k-1};C
\quad \Gamma\vdash c_k = \closB(z,R){:}C
\quad
{-}B
	} 
{\cuti{\ou{x}:\emptyset \;[\overline{c_k}] \; y:B}{\zero}{Q}
	\vdashB \Delta_Q, \Delta_1,\ldots,\Delta_k,\Gamma} 
$$
Moreover, one of them must apply for inverting the judgment in the conclusion.
\end{restatable}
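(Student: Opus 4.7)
The statement bundles two claims: \emph{admissibility} of the two rules and \emph{inversion} (any derivation of a judgement of that shape must arise from one of them). The key observation tying everything together is that the writer endpoint is typed $\emptyset$ with process $\zero$, and the only $\Blang$ cut rules that produce such a configuration are [Tcut-$\one$] and [Tcut!]; all other cut rules either leave a proper positive prefix on the writer or, in the case of [TCutB], insist that the writer type be a genuine positive proposition. So the two listed rules correspond exactly to the two possible ``terminal'' writer actions, $\pone{x}$ and $\pbang{x}{z}{R}$, and both directions of the lemma come down to pivoting around these final rules and then invoking Lemma~\ref{lemma:queue-prf} for everything upstream.

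For \emph{admissibility} of Rule~1, I would build the derivation in two stages. First, apply Lemma~\ref{lemma:queue-prf} with $P = \pone{x}$, $A = \one$, and the prefix $\overline{c_{k-1}}$ of the queue: the typing $\pone{x}\vdashB x{:}\one;\Gamma$ together with the assumed typings $\Gamma;\Delta_i\vdash c_i{:}E_i$ for $i<k$ and the compatibility $B = \mathbf{E}_{k-1};\bot = \mathbf{E}_{k-1};\dual{\one}$ fit exactly the premises of the Non-full lemma, yielding a derivation of $\cuti{\ou{x}{:}\one\,[\overline{c_{k-1}}]\,y{:}B}{\pone{x}}{Q}$. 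Then a single application of [Tcut-$\one$] converts the writer to $\zero$ and appends $c_k = \checkm$, giving the conclusion. Rule~2 is entirely analogous: unpack $c_k = \closB(z,R)$ with $\Gamma\vdash \closB(z,R){:}?\dual{A}$, use Lemma~\ref{lemma:queue-prf} to derive $\cuti{\ou{x}{:}!A\,[\overline{c_{k-1}}]\,y{:}B}{\pbang{x}{z}{R}}{Q}$, then finish with [Tcut!].

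For \emph{inversion}, I case-analyse on the last rule of the given derivation. The rules [TCutB], [Tcut-$\otimes$], and [Tcut-$\oplus$] are all ruled out (the first because the write type is $\emptyset$, not positive; the latter two because they require a non-$\zero$ writer prefix), so either [Tcut-$\one$] or [Tcut!] concludes the derivation. Each case exposes a sub-derivation of the form $\cuti{\ou{x}{:}A\,[\overline{c_{k-1}}]\,y{:}B}{P'}{Q}$ with $P'\in\{\pone{x},\pbang{x}{z}{R}\}\ne\zero$, to which the inversion direction of Lemma~\ref{lemma:queue-prf} (applicable because $P'\ne\zero$) supplies precisely the premises listed in Rule~1 or Rule~2 respectively. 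The main subtlety I anticipate is bookkeeping around the notation $\mathbf{E}_k;\bot$ and $\mathbf{E}_{k-1};C$: the terminal entry $c_k$ carries a genuine negative proposition as its type rather than a one-hole context, so one must verify that the hole introduced by the Non-full step at position $k{-}1$ is filled with exactly $\dual{\one}$ or $?\dual{A}$ when peeling off the last [Tcut-$\one$] or [Tcut!]. A related minor point is that in the [Tcut!] case the closure $c_k$ consumes no linear resources, so the additive partition $\Delta_Q,\Delta_1,\ldots,\Delta_{k-1}$ on the linear side matches the conclusion cleanly; aside from this the argument is a direct combination of the cited rules with Lemma~\ref{lemma:queue-prf}.
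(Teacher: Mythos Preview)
Your proposal is correct and follows essentially the same approach as the paper: case-analyse on the last cut rule (only [Tcut-$\one$] or [Tcut!] can produce the $\emptyset/\zero$ writer configuration), peel off the terminal queue entry, and invoke Lemma~\ref{lemma:queue-prf} on the resulting prefix. The paper's proof is terser and focuses on the inversion direction, while you spell out both admissibility and inversion and flag the bookkeeping around $\mathbf{E}_k$ versus $\mathbf{E}_{k-1}$ more carefully, but the argument is the same.
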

\hide{
\begin{proof}
We have
$\cuti{\ou{x}:\emptyset \;[\overline{c_k}] \; y:B}{\zero}{Q} \vdashB \Delta; \Gamma$
derived from [TCut$\one$] (a) or [TCut$!$] (b). In case (a)
$\cuti{\ou{x}:\one \; [{\overline{c_k}}]\;  y:B}{\pone{x}}{Q} \vdashB \Delta; \Gamma$
with $c_k=\checkm$. We conclude by Lemma \ref{lemma:queue-prf} .
 In case (b)
$\cuti{\ou{x}:!A \; [{\overline{c_k}}]\;  y:B}{\pbang{x}{z}{R}}{Q} \vdashB \Delta; \Gamma$ with $c_k=\closB(z,R)$.
By Lemma \ref{lemma:queue-prf}, $\pbang{x}{z}{R}\vdash z:!A;\Gamma$, so $\Gamma; \vdash c_k=\closB(z,R):?\dual A$ and we conclude 
with $B = \mathbf{E}_{k-1};C$ and
 $C=?\dual A$.
\end{proof}
}
\begin{restatable}[Non-empty]{lemma}{queuenonempty}
\label{lemma:queue-non-empty} 
Let $\cuti{\ou{x}\ass A \;[q] \; y\ass B}{P}{Q} \vdashB \Delta; \Gamma$.

If $A$ is negative or void, then $q\neq\nilm$.
\end{restatable}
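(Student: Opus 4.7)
The plan is to proceed by inversion on the typing derivation of $\cuti{\ou{x}\ass A \;[q] \; y\ass B}{P}{Q} \vdashB \Delta; \Gamma$, examining only the last rule applied. The only typing rules of $\Blang$ whose conclusion is a buffered-cut judgment are [TcutB] together with the four propagation rules [Tcut-$\one$], [Tcut-$\otimes$], [Tcut-$\oplus$] and [Tcut!] from Fig.~\ref{fig:B-rules} (plus their symmetric variants that swap the role of the two endpoints). Since the statement is purely about the shape of the conclusion, no inductive hypothesis is required; a single case split will suffice.

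First I would dispatch the [TcutB] case: by inspection of the rule, its conclusion forces the overlined (write) endpoint to carry a positive type, as imposed by the side condition ``$A$ positive''. Under the assumption that the type $A$ of $\ou{x}$ is negative or void, this rule cannot be the one that derived the judgment, which rules out the unique derivation shape that produces $q = \nilm$.

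For each of the four remaining rules, the conclusion exhibits a queue of the form $q' @ v$, where $v$ is the value enqueued by the rule — respectively a close token $\checkm$, a process closure $\clos(y,R)$, a label $\labl{l}$, or an exponential closure $\closB(z,P)$. In every such case $q$ is manifestly non-empty, which is exactly the goal. The argument presents no real obstacle: the lemma really just records that [TcutB] is the \emph{unique} mechanism for introducing an empty queue and that this mechanism is gated by positivity of the write endpoint. The only mild care needed is to check that the symmetric (right-biased) variants of each rule respect the same discipline, which they do by construction, since they only flip the orientation of the two endpoints without altering the type of the overlined one.
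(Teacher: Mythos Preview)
Your argument is correct. Direct inversion on the last typing rule suffices: [TcutB] is the only rule that can produce an empty queue, and its side condition forces the overlined endpoint to carry a positive type, so it is excluded when $A$ is negative or void; the remaining rules all yield a queue of the form $q' @ v$.

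The paper takes a slightly different route. It splits on whether $A$ is void or negative, and then invokes the aggregation lemmas already established: Lemma~\ref{lemma:queue-full} for the void case (which exhibits the queue as ending in $\checkm$ or an exponential closure), and Lemma~\ref{lemma:queue-prf} for the negative case (inverting with $q=\nilm$ gives $k=0$, hence $B=\dual A$, but the side condition $-B$ then forces $B$ negative while $\dual A$ is positive, contradiction). Your approach is more self-contained and arguably cleaner for this specific lemma, since it avoids threading through the admissible-rule machinery; the paper's approach simply reuses inversion principles that it needs anyway for the preservation and progress proofs, so the marginal cost there is zero.
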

\begin{proof}
If $A$ is void then immediate by Lemma \ref{lemma:queue-full}.
If $-A$, suppose $q=\nilm$. By Lemma~\ref{lemma:queue-prf}  $k=0$ and  $\dual A = B$. But then $B$ positive, contradiction. So $q\neq\nilm$.
\end{proof}

\begin{restatable}[Preservation]{theorem}{typepreservation}
  \label{theorem:type-preservation} 
	Let $P \vdash^{\mathsf{B}} \Delta; \Gamma$. 
	
	(1) If $P \equiv^\mathsf{B} Q$, then $Q \vdash^{\mathsf{B}} \Delta; \Gamma$. (2) If $P \to^\mathsf{B} Q$, then $Q \vdash^{\mathsf{B}} \Delta; \Gamma$. 
\end{restatable}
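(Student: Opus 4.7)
The proof proceeds by a fairly standard subject reduction argument, split into parts (1) and (2), with the twist that buffered cuts are not pure cuts and therefore the inversion lemmas \ref{lemma:queue-prf}, \ref{lemma:queue-full}, and \ref{lemma:queue-non-empty} must be used to peel off queue structure before we can re-derive typing judgments. For part (1), I would induct on the derivation of $P \equiv^\mathsf{B} Q$. The $\alpha$-conversion and congruence-closure cases are routine. The rules inherited from Fig.~\ref{fig:equiv} are handled essentially as in the standard metatheory of $\CLL$. The new rules from Fig.~\ref{fig:equivB} are verified by direct manipulation of the typing derivations: for [comm] we observe that [TcutB] and the [Tcut-$*$] rules are symmetric in the two premises once we consider the marked endpoint (so swapping names swaps which side carries which piece of context); for [CM], [CC], [CC!], and [D-C!] we unfold the buffered cut at the top using Lemma~\ref{lemma:queue-prf} / \ref{lemma:queue-full} into a sequence of standard shape plus a top-level structural constructor, apply the $\CLL$ commutation reasoning on the inside, and reassemble using the same lemmas. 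The side conditions $y \in \fn{Q}$, $x \notin \fn{P}$, etc., ensure that the typing contexts split the same way on both sides of each equation.

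For part (2), I would induct on the derivation of $P \to^\mathsf{B} Q$, closing under [$\equiv^\mathsf{B}$] (which is immediate by part (1)) and the static context rule [cong] (which is a direct i.h.). The interesting cases are the principal-cut reductions of Fig.~\ref{fig:redB}.
\begin{itemize}
\item For [$\one$], [$\otimes$], [$\oplus$], and [$!$]: we have a buffered cut whose write side begins with the corresponding positive action. Invert the top buffered cut via [Tcut-$\one$] / [Tcut-$\otimes$] / [Tcut-$\oplus$] / [Tcut-$!$], and derive the reduct by re-applying the same rule one step later; types of enqueued values match by construction.
\item For [$\bot$]: the precondition $q = \checkmark$ together with Lemma~\ref{lemma:queue-full} forces the reader side to be typed $y\ass\bot$, so inverting the [T$\bot$] rule on the continuation yields $P \vdashB \Delta;\Gamma$ directly.
\item For [$\parl$], [$\with$], and [$?$]: these are the technically delicate cases, because popping a closure (or label) from the head of the queue changes the shape of the queue-value type context $\mathbf{E}_k$ used in Lemma~\ref{lemma:queue-prf}. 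I would apply Lemma~\ref{lemma:queue-prf} to the source judgment to extract typings $P \vdashB \Delta_P,x\ass A;\Gamma$, $Q \vdashB \Delta_Q, y\ass B;\Gamma$, the constraint $B = \mathbf{E}_k;\dual{A}$ with $E_1 = \dual T \parl \Box$, and the subprocess typing $R \vdashB \Delta_1, z\ass T;\Gamma$ stored in $\clos(z,R)$. Inverting [T$\parl$] on the continuation of the reader yields the right typing for its body with $w\ass T, y\ass B'$ where $B = \dual T \parl B'$. Reassembly then proceeds by one application of [TcutB] at $z[\nilm]w\ass T$ (with the correct polarity chosen by the $(-)^\mathsf{p}$ abbreviation) nested inside a buffered cut over the residual queue, which is again well typed by Lemma~\ref{lemma:queue-prf}. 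The $\with$ case is analogous using the $\withm{\ell}{L}{E_\ell}[\labl{l}]$ shape of queue-value types. The $?$ case uses Lemma~\ref{lemma:queue-full} to expose $C = ?\dual A$, after which the cut! typing rule [Tcut!] applies directly.
\item For [fwdp]: Lemma~\ref{lemma:queue-prf} applied to the inner buffered cut gives us $\fwd{x}{y} \vdashB x\ass\dual A, y\ass A;\Gamma$ and a combined queue typing; concatenating the two queues preserves the queue-value-type context stack by straightforward composition, so the conclusion typing $\cuti{\ou z\ [q_2{@}q_1]\ w}{Q}{P}$ is again admissible by repeated use of the [Tcut-$*$] rules aggregated via Lemma~\ref{lemma:queue-prf}.
\end{itemize}

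\noindent\textbf{Main obstacle.} The hardest part is the [$\parl$] case (and by symmetry [$\with$]): we must show that setting the polarities of the new cut on $z,w$ with the $(-)^\mathsf{p}$ operator yields a well-typed [TcutB] instance, regardless of whether $T$ is positive or negative, and simultaneously that the residual queue on $\ou x [q] y$ still satisfies the Lemma~\ref{lemma:queue-prf} invariant after one $E_i$ has been stripped off. This requires a careful case split on the polarity of $T$ and careful bookkeeping of which endpoint is marked on both cuts in the reduct; the calculation is routine but must be done for each combination. Once this case is settled, every remaining case is a direct application of the corresponding inversion lemma followed by reapplication of the appropriate typing rule.
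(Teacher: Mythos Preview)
Your proposal is correct and follows essentially the same approach as the paper: verify each $\equiv^\mathsf{B}$ and $\to^\mathsf{B}$ rule preserves typing, using Lemmas~\ref{lemma:queue-prf} and~\ref{lemma:queue-full} to unpack and repack buffered cuts; the paper's proof spells out exactly the [fwdp], [$\otimes$], and [$\parl$] cases along the lines you describe. One small simplification you are overlooking for the positive cases: you do not need to ``invert'' anything for [$\one$], [$\otimes$], [$\oplus$], [$!$], because the typing rules [Tcut-$\one$], [Tcut-$\otimes$], [Tcut-$\oplus$], [Tcut-$!$] are stated with the pre-reduction process as premise and the post-reduction process as conclusion, so each positive reduction step is literally a single forward application of the corresponding typing rule (the paper nonetheless routes these through Lemma~\ref{lemma:queue-prf}, which also works but is heavier than necessary).
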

\begin{proof}
We verify that rules for  $\stackrel{\mathsf{\tiny B}}{\equiv}$ (Fig.~\ref{fig:equivB})  (resp.  $\stackrel{\mathsf{B}}{\to}$ (Fig.~\ref{fig:redB})) are type preserving. We illustrate with  (2) [fwdp].

(Case [fwdp])
$P = \cuti{\ou z\ass A \; [q_1] \; x\ass \dual B}{Q'}{\cuti{\ou y\ass B\; [q_2]\; w\ass C}{\fwd{x}{y}}{P'}}\vdashB\Delta;\Gamma$.
If $q_2=\nilm$ and $B = \dual{C}$,
hence $ Q=\cuti{\ou z \ass A\; [q_1]\; w\ass C}{Q'}{P'} \vdashB\Delta;\Gamma$.

Otherwise $q_2\neq\nilm$. Let
$F_2 = \cuti{\ou y\ass B\; [q_2]\; w\ass C}{\fwd{x}{y}}{P'}
$ where  $F_2 \vdashB \Delta_2, x\ass \dual B$ and $\Delta =\Delta_1, \Delta_2$,
and $q_2=\overline{c_k}$.
By Lemma \ref{lemma:queue-prf},
$\fwd{x}{y} \vdashB x\ass \dual B, y\ass B$, 
and $C = \mathbf{E}_k ; \dual B$, $\Gamma;\Delta_i\vdash c_i\ass E_i$,
$P\vdash w\ass C,\Delta_P;\Gamma$ and $\Delta_2 = \Delta_1,\ldots,\Delta_k,\Delta_P$.

Let 
$F_1 =  \cuti{\ou z \ass A\; [q_1]\; x\ass \dual B}{Q'}{F_2}$
 with $F_1 \vdashB \Delta_1, x\ass B;\Gamma$ and
$q_1=\overline{d_l}$.
By Lemma \ref{lemma:queue-prf},
$Q' \vdashB\Delta_Q,z\ass A;\Gamma$, $\Gamma;\Delta'_j \vdash d_j\ass F_j$ and $\dual B = \mathbf{F}_l ; \dual A$
and $\Delta_1 = \Delta'_1,\ldots,\Delta'_l,\Delta_Q$.
By Lemma \ref{lemma:queue-prf} we get 
$\cuti{z\ass A \; [ q_2@q_1 ] \; w\ass C}{Q'}{P'}\vdashB \Delta;\Gamma$.
\end{proof}

\hide{
---

Let $q=c_2@ \ldots @ c_k$ where $c_i = \mathsf{clos}(z_i;R_i;\Delta_{R_i})$.

By Lemma \ref{lemma:queue-prf},
$\cut{\ou x:T_1\otimes \cdots \otimes T_k \otimes A \; [\nilm] \; y:B}
{s_1;\cdots; s_k;P}{\pparl{y}{w}{Q}} \vdash\Delta$,
where $R_i\vdash \Delta_{R_i},z_i:T_i$.
By inversion,  $s_1;\cdots s_k;P\vdash x:T_1\otimes \cdots \otimes T_k
\otimes A, \Delta_{R_1},\cdots, {\Delta_{R_k}}$, and $s_2;\cdots ; s_k;P\vdash x:T_2\otimes \cdots \otimes T_k
\otimes A, \Delta_{R_2},\cdots, {\Delta_{R_k}}$. 

By inversion,  $Q \vdash w:\dual T_1, y:B',\Delta_Q$
with $C=T_2\parl \cdots \parl T_k
\parl  \dual A$.

By ]TCutB], 
$\cut{\ou x:T_2\otimes \cdots \otimes T_k \otimes A \; [\nilm] \; y:C}
{s_2;\cdots; s_k;P}{Q} \vdash\Delta, w:T_1$.

By [TCutB] $\cut{z_1:T_1\;[\nilm]\; w:\dual T_1}{R_1}{Q}\vdash y:C,\Delta_Q$.

By [TCut-$\otimes$] 
($k-1$ times), 
$\cut{\ou x:A\;[q]\;y:C}{P}{(\cut{z_1:T_1\;[\nilm]\; w:\dual T_1}{R}{Q})}\vdash \Delta$.

}

A process $P$ is \emph{live} if and only if $P = \mathcal C[Q]$, for some static context $\mathcal C$ (the hole lies within the scope of static constructs mix, cut) and $Q$ is an action process.  We first show that a live process 
either reduces or offers an interaction on a free name.
The observability predicate defined in Fig. \ref{fig:obs} (cf. \cite{sangiorgi-walker:book}) characterises interactions of a process with the environment.

\begin{figure}[t]
{\small$$
\begin{array}{c}
\displaystyle
\begin{prooftree}
	\infer0[\text{[fwd]}]{\obs{\fwd{x}{y}}{x}}
\end{prooftree} 
\quad
\begin{prooftree}
	\infer[no rule]0{s(\mathcal A) = x} 
	\infer1[\text{[$\mathcal A$]}]{\obs{\mathcal A}{x}}
\end{prooftree} 
\quad
\begin{prooftree}
	\infer[no rule]0{P \equiv Q \quad \obs{Q}{x}} 
	\infer1[\text{[$\equiv$]}]{\obs{P}{x}} 
\end{prooftree} 
\quad 
\begin{prooftree}
	\infer[no rule]0{\obs{P}{x}}
	\infer1[\text{[mix]}]{\obs{(\mix{P}{Q})}{x}}
\end{prooftree} 
\vspace{2pt}\\
\begin{prooftree}
	\infer[no rule]0{\obs{P}{x} \quad x\neq y}
	\infer1[\text{[cut]}]{\obs{(\cut{y [q] x}{P}{Q})}{x}} 
\end{prooftree} 
\quad
\begin{prooftree}
	\infer[no rule]0{\obs{Q}{x} \quad x \neq y} 
	\infer1[\text{[cut!]}]{\obs{(\cutB{y}{z}{P}{Q})}{x}}
\end{prooftree} 
\quad
\end{array}
$$}
 \caption{\label{fig:obs} Observability Predicate $\obs{P}{x}$.}
\end{figure}

\begin{restatable}[Liveness]{lemma}{liveness}\label{lemma:liveness} 
	Let $P \vdashB \Delta; \Gamma$ be live. Either $\obs{P}{x}$
	or $P\to^\mathsf{B}$. 
 \end{restatable}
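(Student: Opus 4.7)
My plan is to prove Liveness by structural induction on the static context $\mathcal{C}$ such that $P = \mathcal{C}[Q]$, leveraging the typing invariants from Lemmas~\ref{lemma:queue-prf}, \ref{lemma:queue-full}, and \ref{lemma:queue-non-empty} to argue that every buffered cut eventually exposes a reducible redex. In the base case $\mathcal{C} = [\;]$, the process $P$ is itself an action whose subject $x$ is necessarily free, so either the [fwd] or the [$\mathcal{A}$] observability axiom gives $\obs{P}{x}$. For a mix at the outermost layer, $P = P_1 \parallel P_2$ with the hole in (say) $P_1$, the IH on $P_1$ yields either $\obs{P_1}{x}$ (lifted to $\obs{P}{x}$ via [mix]) or $P_1 \to^\mathsf{B} P_1'$ (lifted to $P \to^\mathsf{B} P_1' \parallel P_2$ by closure of $\to^\mathsf{B}$ under static contexts).

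The central case is the buffered cut $P = \cuti{\ou a\colon A\,[q]\,b\colon B}{P_1}{P_2}$, with the hole in (say) $P_1$. The IH applied to $P_1$ gives $P_1 \to^\mathsf{B}$ (which lifts immediately) or $\obs{P_1}{x}$. In the latter case, since by typing $b \notin \fn{P_1}$, the only cut-bound possibility is $x = a$; when $x \notin \{a, b\}$, the [cut] observability rule yields $\obs{P}{x}$. When $x = a$, I case-analyze the shape of the active prefix on $a$. For a positive prefix ($\pone{a}$, $\potimes{a}{y}{R}{Q}$, $\clab{a}{Q}{\ell}$, or $\pbang{a}{y}{R}$) the corresponding non-blocking rule ([$\one$], [$\otimes$], [$\oplus$], [$!$]) fires directly. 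For a forwarder $\fwd{a}{y}$, if $y$ is free in $P$ then $\obs{P}{y}$, while if $y$ is bound by an adjacent buffered cut, a commuting conversion (via [CC]) exposes the doubly-nested shape required by [fwdp] and the forward reduction applies.

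The main obstacle is the subcase where the observable action on $a$ is \emph{negative} (a $\pbot$, $\pparl$, $\pcasem$, or $\pwhy$ prefix). Lemma~\ref{lemma:queue-non-empty} forces $q$ to be non-empty here, and Lemma~\ref{lemma:queue-prf} pins down the head of $q$ as matching the first negative layer of the reader's type $B$. Since $P_2$ is typed with $b\colon B$ non-trivially, $P_2$ must itself be live; I therefore invoke the IH on $P_2$ to obtain either a reduction in $P_2$ (which lifts), an external observation that lifts via [cut], or an observable action on $b$ that—by typing and Lemma~\ref{lemma:queue-prf}—precisely matches the head of $q$, firing one of the negative reduction rules [$\bot$], [$\parl$], [$\with$], or [$?$]. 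The unrestricted cut case is analogous but simpler, with the [call] rule firing as soon as a $\pcopy{x}{z}{Q}$ client is exposed opposite a replicated server on the bound name.
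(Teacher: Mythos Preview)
Your overall case analysis mirrors the paper's—both proceed by examining the outermost static construct and use the observability predicate together with Lemmas~\ref{lemma:queue-prf}–\ref{lemma:queue-non-empty}—but your induction principle does not support the key step. Inducting on the static context $\mathcal{C}$ justifies the IH only for the side containing the hole ($P_1$), yet you write ``I therefore invoke the IH on $P_2$''. There is no reason the context $\mathcal{C}_2$ witnessing $P_2$'s liveness is a sub-context of (or even smaller than) $\mathcal{C}$: take $P = \cuti{\ou a \;[q]\; b}{Q_0}{P_2}$ where $Q_0$ is a single negative action on $a$ and $P_2$ is a deep tower of cuts; then $\mathcal{C}$ has depth~$1$ while $\mathcal{C}_2$ is arbitrarily deep. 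The paper instead inducts on the typing derivation of $P$, for which both sub-derivations (for $P_1$ and for $P_2$, extracted via Lemma~\ref{lemma:queue-prf}) are strictly smaller, so the IH is available on \emph{both} sides of every cut. Switching to that measure (or simply to the size of $P$) repairs your argument without changing its structure.

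A secondary omission: when you case on the observable at the reader endpoint $b$ in $P_2$, you list only the genuine negative prefixes. But $\obs{P_2}{b}$ can equally be witnessed by a forwarder $\fwd{b}{v}$ (Lemma~\ref{lemma:barbs}, clause~(2)), and you then need the same free/bound split on $v$ that you already give on the writer side—either $\obs{P}{v}$ or a [fwdp] reduction after commuting cuts. The paper discharges this explicitly in each [TCut-$*$] case.
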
 

\begin{proof}
By induction on the derivation for $P \vdash \Delta; \Gamma$, and case analysis on the last typing rule. We illustrate with one rule. 
 
(Case of [TCut$\otimes$])
We have $P=\cuti{\ou{x}:A \;[q  ] \; y:B}{P_1}{P_2} \vdashB \Delta; \Gamma$ 
where $q=r @ \mathsf{clos}(z_k;R_k)  $
derived from
$P_1 \vdashB x:A,\Delta_{P_1}; \Gamma$.
By  i.h. $P_1\to$ or $\obs{P_1}{x_1}$. 
If $x\neq x_1$ then $\obs{P_1}{x_1}$ and $\obs{P}{x_1}$.
If $A$ is positive, by the same reasoning as above for $P_2$
we conclude that $P\to$ or $\obs{P}{w}$.

Otherwise, $A$ is negative.
By  i.h. $P_2\to$ or $\obs{P_2}{z}$. 
If $z\neq y$ then $\obs{P}{x}$.
Otherwise, $\obs{P_2}{y}$.
By Lemma \ref{lemma:queue-prf}, $B = \dual T\parl C$.
By Lemma \ref{lemma:barbs} (2,6), 
either $P_2 \equiv \cut{*}{\fwd{y}{v}}{Q'}$ (a) or
$P_2\equiv \cut{*}{\pparl{y}{w}{Q'}}{Q''}$ (b).
For case (a) we conclude as in [TCutB] above that $P\to$ or $\obs{P}{v}$,
for case (b) $P\to$ by [$\parl$].
\end{proof}

\begin{theorem}[Progress]\label{theorem:progress}
	Let $P \vdash \emptyset; \emptyset$ be a live process. Then, $P\to^\mathsf{B}$.
\end{theorem}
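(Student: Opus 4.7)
The plan is to derive Progress as an essentially immediate consequence of the Liveness Lemma (Lemma~\ref{lemma:liveness}) together with the fact that $P$ is typed in the empty context. First I would invoke Lemma~\ref{lemma:liveness} on the given live process $P \vdashB \emptyset;\emptyset$, which yields the disjunction: either $\obs{P}{x}$ for some name $x$, or $P \to^\mathsf{B}$. The second disjunct is exactly the conclusion we want, so the real work is ruling out the first.

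To rule out the observability case, I would establish (by a straightforward induction on the derivation of $\obs{P}{x}$, reading off the rules in Fig.~\ref{fig:obs}) that whenever $\obs{P}{x}$ holds then $x$ must occur as a free name of $P$. Indeed, each defining clause for $\obs{\cdot}{\cdot}$ either takes the form of an action or forwarder on a visible subject $x$, or propagates observability through a static context ($\mathsf{mix}$, buffered $\mathsf{cut}$, $\mathsf{cut!}$) while explicitly requiring $x$ to differ from the names bound by the enclosing static construct. Hence observability preserves the property that the observed name is free.

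Next I would note that if $P \vdashB \emptyset;\emptyset$, then by a standard free-name / well-formedness property of the type system (readable off the typing rules of Fig.~\ref{fig:typing-muCLL} together with the additional rules of Fig.~\ref{fig:B-rules}), the set of free names of $P$ is contained in the names of the typing context, hence empty. Combined with the previous paragraph, this means $\obs{P}{x}$ cannot hold for any $x$, so the Liveness dichotomy collapses to $P \to^\mathsf{B}$, as required.

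I do not anticipate any substantive obstacle: all the heavy lifting — inversion on buffered cuts, the structural analysis of the various action cases, and the queue-typing arguments — has already been absorbed into Lemma~\ref{lemma:liveness}. The only point requiring a touch of care is to state precisely the free-names lemma and to confirm that it applies to the extended buffered-cut construct (whose queue values may themselves contain closures with bound names, but whose free-name set is correctly accounted for by the typing rules [TcutB], [Tcut-$\otimes$], [Tcut-$\oplus$], [Tcut-$\one$], [Tcut$!$]).
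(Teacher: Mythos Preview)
Your proposal is correct and matches the paper's own proof, which simply states that the result follows from Lemma~\ref{lemma:liveness} since $\Delta=\emptyset$. You have merely spelled out the implicit step that $\obs{P}{x}$ forces $x\in\fn{P}$, which is impossible for a process typed in the empty context.
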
 
\begin{proof}
		Follows from Lemma~\ref{lemma:liveness} since $\Delta =\emptyset$. 
\end{proof} 

\subsection{Correspondence between $\CLL$ and $\Blang$}\label{sec:CLLCLLB}

In this section we establish the correspondence between reduction in $\CLL$ and $\Blang$, proving that the two languages simulate each other in a tight sense.
Intuitively, the correspondence shows that $\Blang$ allows some
positive actions to be buffered ahead of reception, while in $\CLL$ a single 
positive action synchronises with the corresponding dual in one step,
or a forward reduction takes place. 

We write a reduction $P\tobb Q$ as $P\tobx{} Q$ if the 
reduced action is positive, $P\tobnx{} Q$ if the 
reduced action is negative (we consider [call] negative), $P\toba{} Q$ if the 
reduced action is a forwarder, and $P\toban{} Q$ if the 
reduced action is positive or a forwarder. We also write
$P\tor{}Q$ for positive action followed by a matching negative action on the same cut
with an initially empty queue.
\begin{restatable}{lemma}{commutemain}
  \label{lemma:commute-main}
The following commutations of reductions hold.
\begin{enumerate}
\item
Let $P_1 \tobx{} S \tobn P_2 $. 
Either $P_1 \tor{} P_2 $, 
or $P_1  \tobn  S'   \tobx{} P_2$ for some $S'$.
\item
Let $P_1 \toba{} S \tobn P_2 $. Then $P_1  \tobn  S'   \toba{} P_2$
for some $S'$.
\item
If $P_1 \toban{} S \tobn P_2 $, either $P_1 \tor{} P_2 $, 
or $P_1  \tobn  S'   \toban{} P_2$ for some $S'$.
\item
Let $P_1 \toban{} N \tobaxm{\epsilon} S \tor{} P_2 $.
Either $P_1 \tobax{\epsilon} N$ or  
$P_1 \tor{} S' \toban{}  P_2 $ for some $S'$.
\end{enumerate}
\end{restatable}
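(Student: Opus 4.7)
The plan is to proceed by case analysis on the locations where the two consecutive reductions act, exploiting the fact that structural congruence $\equiv^\mathsf{B}$ (Fig.~\ref{fig:equivB}) permits us to rearrange nested cuts via [CC], [CM] and [CC!]. Since each reduction rewrites exactly one buffered or unrestricted cut, the interesting cases are those where the two reductions act on overlapping cuts; the disjoint case is a routine application of the congruence rules.

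For part (1), the key observation is that a positive step $P_1 \tobx{} S$ appends a value $v$ to the write endpoint of some buffered cut with queue $q$, while a negative step $S \tobn P_2$ either (a) acts on an entirely different cut, in which case the two steps trivially commute up to $\equiv^\mathsf{B}$; (b) dequeues from a cut whose original queue $q$ was already non-empty — then the negative could have fired first, consuming the head of $q$ (which survives the positive write), giving the swap $P_1 \tobn S' \tobx{} P_2$; or (c) dequeues from a cut whose original queue was empty, in which case the written $v$ is precisely what is consumed, and by Lemma~\ref{lemma:queue-prf} this collapses to the matching redex $P_1 \tor{} P_2$. I would enumerate the subcases according to the syntactic form of the positive/negative actions ($\pone{}/\pbot{}$, $\otimes/\parl$, $\oplus/\with$, $!/?$) and verify that each matches one of (a)–(c). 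Part (3) then follows immediately by combining part (1) with part (2).

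For part (2), the forward reduction [fwdp] merges two adjacent cuts sharing a forwarder, concatenating their queues. A negative reduction either acts disjointly from the merged region — commuting by congruence — or acts on the outer reader side, which is untouched by the merge. In both subcases, the negative step can be performed before or after the forward merge, yielding the desired commutation via congruence rules [CC], [CM].

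The hardest part will be part (4). Here we must analyse a sequence $P_1 \toban{} N \tobaxm{\epsilon} S \tor{} P_2$, where a final matching pair $\tor{}$ follows a possibly empty chain of forward steps after one positive/forward step. My strategy is induction on the length of the intermediate $\tobaxm{\epsilon}$ chain. The base case (length zero) reduces to analysing whether the initial $\toban{}$ step and the matching pair's positive half target the same cut: if they do, the initial step is forced to be the positive component of the matching pair and the alternative conclusion $P_1 \tobax{\epsilon} N$ applies; otherwise we commute using part (3) applied twice (once for the positive, once for the negative component of $\tor{}$) to produce $P_1 \tor{} S' \toban{} P_2$. For the inductive step, we must carefully track that the forward reductions in the middle may move cuts around, so the cut on which the matching pair fires at $S$ may correspond, back at $P_1$, to a different syntactic cut — the invariant we maintain is that the write endpoint and its enqueued value persist through forward merges until consumed. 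The main subtlety is ensuring that the forward step interposed immediately before the matching pair does not itself produce the endpoint on which the matching pair fires; if it does, the fifth alternative $P_1 \tobax{\epsilon} N$ catches the situation, and otherwise the induction hypothesis applies to the shorter chain.
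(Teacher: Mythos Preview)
Your treatment of parts (1)–(3) is essentially the paper's approach: case-split on whether the two redexes lie in the same cut or not, and use the commuting conversions to swap them. One refinement worth making explicit in (1): when the two actions are on \emph{different} cuts but are sequentially composed in the \emph{same} process prefix (e.g.\ $\potimes{a}{z}{R}{\pparl{b}{w}{Q}}$), the swap is not ``trivial congruence'' on cut contexts alone---it requires the prefix-commutation law [C$i$] from Fig.~\ref{fig:equiv}. The paper isolates this as a separate sub-lemma (the ``pos-neg-seq-commute'' case); you should not fold it into the disjoint case.

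Part (4) is where your plan diverges, and it contains a genuine gap. You propose induction on the length of the intermediate $\tobaxm{\epsilon}$ chain, but this is unnecessary and your base-case reasoning misreads the disjunct $P_1 \tobax{\epsilon} N$: that conclusion asserts the \emph{initial} step $P_1\toban{}N$ is itself a forward on empty queues, not that it coincides with the positive half of the later $\tor{}$. The paper's argument is direct and exploits exactly what the $\epsilon$ annotation records: every step in $N \tobaxm{\epsilon} S$ and the final $\tor{}$ act on cuts with \emph{empty} queues. If the initial step is a genuine positive $P_1\tobx{}N$, it makes one queue non-empty; that cut therefore cannot participate in any of the subsequent empty-queue steps, so the positive step is disjoint from all of them and commutes past the whole block in one go, yielding the second alternative. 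If the initial step is a forward, either the merged queue is empty (first alternative immediately) or non-empty, and the same disjointness argument applies. No induction is needed once you use the empty-queue invariant; your inductive scheme, besides being heavier, never invokes this invariant and so would have to re-establish disjointness at each step without the tool that actually delivers it.
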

\vspace{-10pt}
\begin{restatable}[Simulation]{lemma}{simtrivial}
\label{lemma:sim-trivial} Let $P\vdash \emptyset;\emptyset$.
If $P \to Q$ then $P^\dagger \tob Q^\dagger$.
\end{restatable}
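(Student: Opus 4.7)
The plan is to proceed by induction on the derivation of $P\to Q$ in $\CLL$, with case analysis on the last rule of Fig.~\ref{fig:reduction}. The proof relies on three auxiliary facts, each established by straightforward structural induction and stated as preliminary lemmas: (i) the embedding commutes with substitution, i.e.~$(\subs{x}{y}P)^\dagger = \subs{x}{y}P^\dagger$; (ii) if $P\equiv Q$ in $\CLL$ then $P^\dagger \equiv^\mathsf{B} Q^\dagger$, since each rule of Fig.~\ref{fig:equiv} is either preserved verbatim (for the parallel-composition and commutativity laws) or is a direct instance of the commuting-conversion rules of Fig.~\ref{fig:equivB} once $\CLL$ cuts are unfolded as buffered cuts with empty queue; and (iii) the embedding maps static $\CLL$ contexts to static $\mathsf{B}$ contexts, so that the [cong] case lifts compositionally through parallel composition and cuts.

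For each principal cut rule, the simulation in $\mathsf{B}$ takes exactly two reduction steps: a positive action that populates the empty queue introduced by the embedding, followed by the matching negative action that consumes it---precisely the pattern captured by the $\tor{}$ notation already introduced in the excerpt. For instance, the $\CLL$ step $\cuti{x}{\potimes{x}{y}{P}{Q}}{\pparl{x}{z}{R}}\to \cuti{x}{Q}{\cuti{y}{P}{\subs{y}{z}R}}$ is simulated in $\mathsf{B}$ by applying rule [$\otimes$] of Fig.~\ref{fig:redB} to enqueue $\clos(y,P^\dagger)$, then rule [$\parl$] to consume it, producing a nested buffered cut whose shape matches the embedding of the $\CLL$ reduct up to polarity-setting and $\alpha$-renaming, using (i) to push the substitution inside $R^\dagger$. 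The cases [$\one\bot$], [$\with\oplus$], [!?], and [call] are handled symmetrically, with the corresponding positive/negative pairs of $\mathsf{B}$ rules and a direct matching of the continuations modulo (i).

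The delicate case will be [fwd]: the $\mathsf{B}$ rule [fwdp] requires the forwarder to sit between two buffered cuts, whereas the $\CLL$ rule $\cuti{y}{\fwd{x}{y}}{P}\to\subs{x}{y}P$ acts on a single cut. Because $P\vdash\emptyset;\emptyset$, the name $x$ of any such forwarder must be bound by some enclosing cut; using (iii) together with the [CC] conversion of Fig.~\ref{fig:equivB} we may bring that outer cut immediately adjacent, obtaining a shape that matches the [fwdp] redex exactly. The resulting buffered cut, whose queue is $\nilm @ \nilm = \nilm$, then coincides with the embedding of $\cuti{x}{R}{\subs{x}{y}P}$ after applying (i).

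The main technical obstacle will be to keep careful track of the polarity-annotation operator $(\cdot)^\mathsf{p}$ throughout all cases, ensuring that each $\mathsf{B}$ reduction is enabled with the correct write-endpoint orientation. This is possible precisely because dual types have opposite polarities and a buffered cut with an empty queue can be freely re-oriented by the definition of $(\cdot)^\mathsf{p}$, so all the polarity mismatches introduced by the embedding dissolve once one unfolds the definition and appeals to fact (ii) to permute endpoints when needed.
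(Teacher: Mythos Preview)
Your approach matches the paper's: each principal cut reduction is simulated by the corresponding positive action followed by the matching negative action in $\Blang$, and [fwd] is handled by exploiting closedness of $P$ to locate the enclosing cut binding the other forwarder name and then applying [fwdp]. One small correction: [call] is not split into a positive/negative pair in $\Blang$---the rule is retained unchanged from $\CLL$ (only [fwd], [$\one\bot$], [$\otimes\parl$], and [$\oplus\with$] are replaced)---so its simulation is a single $\Blang$ step, not two.
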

\begin{proof}
Each cut reduction of $\CLL$ is either simulated 
by two reduction steps of $\mathsf{B}$ in sequence or by a [fwd] reduction.
\end{proof}

The following lemma identifies that in $\Blang$, a sequence of
positive actions (or forwards) followed by a negative action can
always be commuted either by pulling out the negative action first,
followed by the sequence of positive actions and forwards; having the
negative action follow a positive action on the same channel and then
performing the remaining actions; or by first performing a sequence of
forward actions, the output and input on the relevant session and then
the remaining actions.
\vspace{-2pt}
\begin{restatable}[Simulation]{lemma}{simul}
 \label{lemma:simul}
Let $P \vdashB \emptyset; \emptyset$. 
If $P \tobanm{} \tobn{} Q$ then
(1) $ P \tobn R$ and $R \tobanm{} Q$ for some $R$, or;
(2) $ P \tor{} R$ and $R \tobanm{}  {\;}  Q$ for some $R$, or;
(3) $ P \tobaxm{\epsilon} \tor{} R $ and $R \tobanm{} Q$ for some $R$.
%
\end{restatable}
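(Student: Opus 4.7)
The plan is to prove this by induction on the length of the prefix $P \tobanm{} P_0$ that precedes the final negative step $P_0 \tobn Q$, using Lemma~\ref{lemma:commute-main} as the workhorse for each step of bubbling.

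For the base case (empty prefix), we have $P \tobn Q$ directly, so case (1) holds with $R = Q$ and an empty $\tobanm{}$ continuation. For the inductive step, I would decompose the reduction as
\[
P \;\toban{}\; P_1 \;\tobanm{}\; P_0 \;\tobn\; Q
\]
and apply the induction hypothesis to $P_1 \tobanm{} \tobn Q$, whose $\toban{}$-prefix is one step shorter. The IH yields one of the three alternatives for $P_1$, and in each case I would compose the leading $P \toban{} P_1$ step with what the IH produces and then re-apply the appropriate clause of Lemma~\ref{lemma:commute-main} to finish.

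Concretely: if IH gives $P_1 \tobn R_1 \tobanm{} Q$, then Lemma~\ref{lemma:commute-main}(3) applied to $P \toban{} P_1 \tobn R_1$ yields either $P \tor{} R_1$ (case (2)), or $P \tobn S \toban{} R_1$ with $S \tobanm{} Q$ (case (1)). If IH gives $P_1 \tor{} R_1 \tobanm{} Q$, then we are in the situation $P \toban{} P_1 \tor{} R_1$, and Lemma~\ref{lemma:commute-main}(4) — instantiated with an empty intermediate $\tobaxm{\epsilon}$-segment — either exhibits the leading $\toban{}$ step as part of a forward prefix yielding case (3), or commutes to $P \tor{} S \toban{} R_1$ giving case (2). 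If IH gives $P_1 \tobaxm{\epsilon} \tor{} R_1 \tobanm{} Q$, I apply Lemma~\ref{lemma:commute-main}(4) directly to the head sequence $P \toban{} P_1 \tobaxm{\epsilon} \tor{} R_1$: either the initial $\toban{}$ is a forward that can be absorbed into the $\tobaxm{\epsilon}$ prefix, giving case (3); or we commute to $P \tor{} S \toban{} R_1$, giving case (2). Type preservation (Theorem~\ref{theorem:type-preservation}) ensures that all intermediate processes remain well-typed, which justifies the iterated appeals to Lemma~\ref{lemma:commute-main}.

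The main obstacle I anticipate is the bookkeeping in the subcase coming from IH alternative (2), where a single positive-plus-forward step precedes an already-formed $\tor{}$ redex. Here we must carefully distinguish whether the leading $\toban{}$ step is a pure positive action (in which case Lemma~\ref{lemma:commute-main}(4) must deliver its commutation branch, since absorbing into the forward prefix is impossible) or a forwarder (in which case both branches are a priori viable, and we must keep the side conditions about free names/subjects straight so that the commutations of structural congruence implicit in Lemma~\ref{lemma:commute-main} genuinely apply). Once this case is handled, the remaining subcases are essentially direct reapplications of parts (3) and (4) of Lemma~\ref{lemma:commute-main}.
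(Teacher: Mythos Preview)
Your proposal is correct and follows essentially the same approach as the paper: induction on the length of the $\toban{}$-prefix, with the base case yielding alternative (1) and the inductive step branching on which of the three alternatives the IH delivers for $P_1$, then invoking Lemma~\ref{lemma:commute-main}(3) in the first subcase and Lemma~\ref{lemma:commute-main}(4) in the other two. Your anticipated obstacle in the (c2) subcase---distinguishing whether the leading $\toban{}$ step is a forward or a pure positive---is exactly the distinction that drives the two branches of Lemma~\ref{lemma:commute-main}(4), and the paper handles it in the same way.
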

\begin{proof}
By induction on $P (\toban{})^*  P'$.
\end{proof}


\vspace{-4pt}
\begin{restatable}[Operational correspondence $\CLL$-$\Blang$]
  {theorem}{cllb}
  \label{teo:cll-b}
  Let $P \vdash \emptyset; \emptyset$. 
\begin{enumerate}
\item If $P\Rightarrow R$ then $P^\dagger\Rightarrow^\mathsf{B} R^\dagger$.
\item
If $P^\dagger (\tobanm{} \tobn)^* Q$ then there is $R$ such that
$P\Rightarrow R$ and $R^\dagger \tobanm{} Q$.
\end{enumerate}
\end{restatable}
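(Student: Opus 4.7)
The plan is to reduce both parts to the commutation lemmas already proved. Part (1) is a straightforward induction on the length of $P \Rightarrow R$. The base case is immediate, and in the inductive step $P \to P' \Rightarrow R$, Lemma~\ref{lemma:sim-trivial} gives $P^\dagger \tob {P'}^\dagger$, while the induction hypothesis yields ${P'}^\dagger \tob R^\dagger$; composition closes the step.

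For part (2), I induct on the number $n$ of $(\tobanm \tobn)$ pairs in the trace. The base case $n=0$ is trivial with $R = P$ and zero $\tobanm$-steps. For $n \geq 1$, split the trace as the first $n-1$ pairs reaching some intermediate $P'$, followed by a final pair $P' \tobanm \tobn Q$. The induction hypothesis supplies $R_1$ with $P \Rightarrow R_1$ and $R_1^\dagger \tobanm P'$. Concatenating the $\tobanm$ prefixes yields $R_1^\dagger \tobanm \tobn Q$, to which I apply Lemma~\ref{lemma:simul} (whose closed-typing hypothesis is ensured by the theorem's assumption on $P$ and Theorem~\ref{theorem:type-preservation}).

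In case (2) of Lemma~\ref{lemma:simul}, $R_1^\dagger \tor R'$ models a matched positive-negative pair on an empty-queue cut; by injectivity of the $\dagger$ encoding combined with Lemma~\ref{lemma:sim-trivial}, this corresponds to a single $\CLL$ reduction $R_1 \to R_2$ with $R_2^\dagger = R'$, so we take $R = R_2$ and obtain $P \Rightarrow R_2$ and $R_2^\dagger \tobanm Q$. Case (3) prefixes this with a sequence of forward reductions; each is itself a $\CLL$ forward reduction, so they extend $P \Rightarrow R_1$ to some $P \Rightarrow R_2$ in the same manner. The genuine difficulty is case (1), where the lemma produces a bare $R_1^\dagger \tobn R'$: a purely negative $\Blang$ step unpaired with any positive. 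The observation that resolves this is that $R_1^\dagger$ arises from the $\dagger$ encoding and thus all of its buffered cuts have empty queues; inspecting Fig.~\ref{fig:redB}, every negative reduction in $\Blang$ other than [call] demands a value ($\checkm$, a $\clos$, a $\closB$, or a selection label) at the head of the relevant queue, so the only enabled $\tobn$ step on $R_1^\dagger$ is [call], which is itself a native $\CLL$ reduction $R_1 \to R_2$ whose image $R_2^\dagger$ matches $R'$ modulo the structural congruence identifying empty-queue buffered cuts with plain cuts. This closes the induction and, crucially, confirms that the sequential discipline of $\Blang$ never introduces negative work that is not already accounted for by a $\CLL$ reduction.
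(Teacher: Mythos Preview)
Your proof is correct and follows the same overall strategy as the paper: part (1) is by iteration of Lemma~\ref{lemma:sim-trivial}, and part (2) is by induction on the number of $(\tobanm\tobn)$ blocks, invoking Lemma~\ref{lemma:simul} at each step on a process of the form $R_1^\dagger$.

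Where you differ from the paper is in the treatment of case (1) of Lemma~\ref{lemma:simul}. The paper simply asserts that ``(1) cannot apply, since in $P^\dagger$ all queues are empty.'' You instead observe that this dismissal overlooks the reduction rule [call], which is classified as negative yet acts on an exponential cut rather than on a buffered queue, and so \emph{can} fire on a $\dagger$-image. Your resolution---that [call] is the only such rule, and that it is literally a $\CLL$ reduction whose target is again a $\dagger$-image---is the right way to close this case. In effect you have patched a small gap in the paper's argument while keeping the same architecture. The phrase ``by injectivity of the $\dagger$ encoding combined with Lemma~\ref{lemma:sim-trivial}'' in your case (2) is slightly imprecise (the point is rather that a $\tor$ step from an empty-queue configuration \emph{is} the $\Blang$ trace of a single $\CLL$ principal cut, and its target again has all queues empty), but the content is correct.
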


 Due to the progress property for $\Blang$ (Theorem ~\ref{theorem:progress}) and because queues are bounded by the size of positive/negative sections in types, after a sequence of positive or forwarder reductions a negative reduction 
 consuming a queue value must occur.
Theorem~\ref{teo:cll-b}(2) states that every reduction sequence in $\Blang$ is simulated by a reduction sequence in $\CLL$ up to some
anticipated forwarding and buffering of positive actions. Our results imply that every reduction path in $\Blang$ maps to a reduction path in $\CLL$ in which every negative reduction step in the former 
is mapped, in order, to a cut reduction step in the latter.

\vspace{-10pt}
\section{Correctness of the core SAM}\label{sec:soundcore}
We now prove that every execution trace of the core SAM
defined in Fig.~\ref{fig:rules} represents a correct process
reduction sequence $\Blang$ (and therefore of $\CLL$, in the light of
Theorem~\ref{teo:cll-b}), first for the language without
exponentials and mix, which will be treated in Section~\ref{sec:sam-mix}. 
In what follows, we annotate
endpoints of session records with their types (e.g.~as
$\srecs{x{:}A}{q}{P}{y{:}B}$), these annotations are not needed to
guide the operation of the SAM, but convenient for the
proofs; they will be omitted when not relevant or are obvious
from the context. We first define a simple encoding of well-typed $\Blang$
processes to SAM states.

\newcommand{\expa}{\stackrel{\mathsf{cut*}}{\Mapsto}}

\begin{definition}[Encode]
\label{def:encode}
Given $P\vdashB\emptyset$
we define $enc(P)=\C$ as $enc(P,\emptyset) \stackrel{\mathsf{cut}*}{\Mapsto} \mathcal \C$ where $enc(P,H) \expa \mathcal \C$ is defined by the rules
{\small$$
\begin{array}{clll}
\displaystyle
\frac{
enc(P(x),H[\srecs{ x\ass A}{q}{Q}{y\ass B}])
\expa \mathcal C}{
enc(\cuti{\ou x\ass A [q]\; y\ass B]}{P}{Q}, H)\expa \mathcal C}
& \mbox{\rm ($A+$ )}
\vspace{6pt}\\
\displaystyle
\frac{
enc(Q(y),H[\srecs{ x\ass A}{q}{P}{y\ass B}] )
\expa \mathcal C
}{
enc(\cuti{\ou x\ass A\; [q]\; y\ass B ]}{P}{Q}, H)\expa \mathcal C}
\quad \quad & \mbox{\rm ($A-$ or $P=\zero$ )}
\vspace{6pt}\\
enc(\mathcal{A},H) 
\expa 
(\mathcal{A},H) & \mbox{\rm ($A \in \mathcal A$)}
\end{array}
$$}
\end{definition}
Notice that $enc(P)$ maximally applies  the SAM execution rule for cut to $(P,\emptyset)$ until an action is reached. Clearly, for any $P\vdashB \emptyset$, if $enc(P) = \C$ then $P \Mapsto^* \C$. Also, 
if all cuts in a state $C$ have empty queues then there is  
a process $Q$ of $\CLL$
such that $enc(Q^\dagger) = C$.
We then have

\begin{restatable}[Soundness wrt $\Blang$]{theorem}{soundcllb}
\label{soundcllb}
Let $P\vdashB\emptyset$.

\noindent
If $enc(P) \Mapsto D \expa \mathcal{C}$  then there
is $Q$ such that $P \to\cup\equiv Q$ and
$\mathcal{C} = enc(Q)$. 
\end{restatable}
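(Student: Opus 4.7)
The plan is to proceed by case analysis on the SAM transition rule applied at $enc(P) \Mapsto D$, supported by an inversion principle for the encoding. First I would establish that if $enc(P) = (\mathcal{A}, H)$, then $P$ decomposes (up to $\equiv^\mathsf{B}$) as a nesting of buffered cuts whose structure matches the session records in $H$, with $\mathcal{A}$ occupying the active leaf chosen deterministically by the descent rules of $enc$ (into the writer side when the writer type is positive, into the reader side otherwise). This inversion follows by induction on the cut nesting of $P$. A direct consequence, together with the typing invariants established in Lemmas~\ref{lemma:queue-prf} and~\ref{lemma:queue-full}, is that the subject channel of $\mathcal{A}$ is a write endpoint precisely when $\mathcal{A}$ is positive and a read endpoint precisely when $\mathcal{A}$ is negative: reaching a negative action on a write endpoint would require $enc$ to have descended into a writer side of positive type, contradicting the subject-type of the action.

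With inversion in hand, the argument proceeds by case analysis on the applicable SAM rule. The rule [SCut] is impossible by construction of $enc$, which exposes an action as the running term. For the positive-write cases [S$\otimes$], [S$\oplus$], [S$\one$], inversion places $\mathcal{A}$ at the writer side of the cut recorded in $H$, and applying the corresponding $\Blang$ rule ([$\otimes$], [$\oplus$], or [$\one$]) to that cut in $P$ yields $Q$, with $enc(Q) = \mathcal{C}$ by direct unfolding. The negative-read cases [S$\parl$], [S$\with$], [S$\bot$] are symmetric, corresponding to the $\Blang$ rules [$\parl$], [$\with$], [$\bot$]; the SAM's endpoint swap upon a queue becoming empty matches exactly the $^\mathsf{p}$ re-polarization operator used in the $\Blang$ reducts. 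For [Sfwd], the running forwarder together with the heap-record inversion reveals a nested pair of cuts on $x$ and $y$ in $P$; using the cut-commutation congruences [CC], [CM], [comm] to bring $P$ into the shape matching the left-hand side of [fwdp], the $\Blang$ reduction collapses these cuts with the queues concatenated, as in the SAM. The case [S$-$] is vacuous by the polarity invariant from the inversion step.

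The main obstacle will be the [Sfwd] case: the two cuts involved in the forwarder need not be syntactically adjacent in $P$, and they only become so after structural rearrangement via [CC]; one must then verify that the resulting collapsed cut, when re-encoded, reproduces exactly the heap after the SAM step, including the queue concatenation order $q_2 \mathrel{@} q_1$. A secondary subtlety is matching the SAM's endpoint swap with the $^\mathsf{p}$ operator in $\Blang$ for the negative rules, which requires tracking the evolution of endpoint types through the reduction to ensure the polarities, and hence the choice of write vs.\ read endpoint, coincide. Once these are in place, the rest of the case analysis reduces to mechanical unfolding of the relevant definitions.
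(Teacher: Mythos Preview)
Your plan is essentially the paper's own: invert the encoding to obtain a cut context $E[\mathcal{A}]$, then case-split on the SAM rule and exhibit the matching $\Blang$ step or congruence. Your observation that [S$-$] is vacuous on states of the form $enc(P)$ is correct and sharper than the paper, which keeps [S$-$] as an explicit (and hence unreachable) $\equiv$-case: the descent discipline of $enc$ guarantees that every write endpoint held by the running action carries a positive type, so a negative action on a write endpoint cannot arise as the running process of $enc(P)$.

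That same polarity invariant, however, undercuts your treatment of [S$\otimes$] and [S$\oplus$] in a way you (and the paper) do not address. Take [S$\otimes$] at $x{:}T\otimes A'$ with $A'$ \emph{negative}: the post-step state is $(U, H[\srecs{x}{q@\clos(z,R)}{S}{y}])$, where the running $U$ holds the write endpoint $x$ at negative residual type $A'$. By your own invariant no $enc(Q)$ has this shape: encoding the obvious reduct $Q=E'[\cuti{\ou x{:}A'\,[q']\,y}{U}{S}]$ descends into the \emph{reader} $S$ (since $A'^{-}$), yielding a state with $S$ running and $U$ suspended, not the SAM state. No alternative $Q$ in the $(\to\cup\equiv)$-orbit of $P$ repairs this, since the descent direction at that cut is fixed by $A'$. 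Concretely, for $P=\cuti{\ou x{:}\one\otimes\bot\,[\nilm]\,y}{\potimes{x}{z}{\pone z}{\pbot{x}{\zero}}}{S}$ one computes $enc(P)\Mapsto(\pbot{x}{\zero},H')$, yet $enc$ of the $[\otimes]$-reduct places $S$, not $\pbot{x}{\zero}$, as the running process. This is exactly the situation that later forces an [S$-$] context switch in the SAM (cf.\ the forwarding trace in Section~\ref{sec:coreSAM}); a single-step statement of this form does not hold without either absorbing the pending [S$-$] into $\expa$ or weakening what counts as an encoded state. Your ``$enc(Q)=\mathcal C$ by direct unfolding'' needs to confront this polarity flip explicitly.
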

\begin{proof}
Let $enc(P) \Mapsto \mathcal B$.
Let $enc(P) = (\mathcal{A},H)$ for some action $\mathcal{A}$.
Then  $P \equiv E[\mathcal{A}]$
for some cut context $E[-]$. We illustrate one case.
\hide{
(Case of [Sfwd]) 
Let $enc(P) =(\fwd{x}{y}, H[\srecs{z}{q_1}{P_1}{x}][\srecs{y}{q_2}{P_2}{w}]) \Mapsto \B$ where $\B=(P_2, H[\srecs{z}{q_2 @ q_1}{P_1}{w}])$.

Therefore $P \equiv E[\fwd{x}{y}] \equiv E'[\cuti{\ou z \; [q_1]\; x]}{P_1}{
\cuti{\ou y \; [q_1]\; w]}{\fwd{x}{y}}{P_2}
}]$
and $P\to Q = E'[\cuti{\ou z \; [q_2@q_1]\; w]}{P_1}{P_2}]$ and $\B \expa \C$
and $enc(Q) = \C$.}

(Case of [S$\one$]) 
Let $enc(P) =(\pone{x}, H[\srecs{x}{q}{R}{y}]) \Mapsto 
(R, H[\srecs{x}{q @ \checkmark}{\zero}{y}]) = D \expa C$.
Therefore, $P \equiv E[\pone{x}] \equiv E'[\cuti{\ou x \; [q]\; y]}{\pone{x}}{R}]$
and $P\to Q = E'[\cuti{\ou x \; [q@\checkmark]\; y]}{\zero}{R}]$
and $enc(Q) = C$.



\end{proof}

We can then combine soundness with the operational correspondence
between $\CLL$ and $\Blang$ (Theorem~\ref{teo:cll-b}) to obtain an overall soundness result for
the SAM with respect to $\CLL$: 

\vspace{-2pt}
\begin{restatable}[Soundness wrt $\CLL$]{theorem}{soundcll}
  \label{soundcll}
Let $P\vdashB\emptyset$. 

\noindent
1. If $enc(P) \stackrel{*}{\Mapsto} \expa C$ 
there
is $Q$ such that $P \Rightarrow\cup\equiv Q$ and
$\C = enc(Q)$. 

\noindent
2. Let $P\vdash\emptyset$. 
If $enc(P^\dagger) \stackrel{*}{\Mapsto} enc(Q^\dagger)$ then $P\Rightarrow Q$.
\end{restatable}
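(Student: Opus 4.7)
The strategy is a straightforward induction on the number of SAM transitions, iterating Theorem~\ref{soundcllb}. The base case (zero steps) is immediate, taking $Q \deff P$. For the inductive step, suppose $enc(P) \Mapsto^k \mathcal{D} \Mapsto \mathcal{D}' \expa \mathcal{C}$. Applying the induction hypothesis to the first $k$ transitions yields $Q'$ with $P \Rightarrow \cup \equiv Q'$ and $\mathcal{D} \expa enc(Q')$. Since by definition $enc(Q')$ performs the maximal cut-expansion applied to $(Q', \emptyset)$, we can invoke Theorem~\ref{soundcllb} on the step $enc(Q') \Mapsto \mathcal{D}' \expa \mathcal{C}$ to obtain $Q$ with $Q' \to \cup \equiv Q$ and $\mathcal{C} = enc(Q)$. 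Composing, we get $P \Rightarrow \cup \equiv Q$, using preservation (Theorem~\ref{theorem:type-preservation}) to keep types aligned across the $\equiv$ steps.

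\noindent
\textbf{Plan for (2).} The plan is to lift Part (1) across the embedding $(-)^\dagger$ using the operational correspondence of Theorem~\ref{teo:cll-b}(2). Apply (1) to $enc(P^\dagger) \stackrel{*}{\Mapsto} enc(Q^\dagger)$ to obtain $R$ such that $P^\dagger \Rightarrow^\mathsf{B} \cup \equiv^\mathsf{B} R$ and $enc(R) = enc(Q^\dagger)$. Since $\dagger$ produces processes whose cut queues are all $\nilm$ and $enc$ is injective up to $\equiv^\mathsf{B}$ on such processes, we get $R \equiv^\mathsf{B} Q^\dagger$. To feed the resulting $\Blang$ reduction sequence into Theorem~\ref{teo:cll-b}(2), we must show it has the shape $(\tobanm \tobn)^*$. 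The crucial observation is that $Q^\dagger$ has all queues empty: since every $\tobanm$ step either appends to a queue or merges queues while every $\tobn$ step consumes one queue value, the trace must partition so that each burst of positive or forwarder reductions is eventually balanced by a matching negative reduction (otherwise the terminal queues would be nonempty). Invoking Theorem~\ref{teo:cll-b}(2) then yields $R'$ with $P \Rightarrow R'$ and $R'^\dagger \tobanm R$; since the tail $\tobanm$ cannot reduce a process with empty queues to itself without being empty, we conclude $R'^\dagger \equiv^\mathsf{B} Q^\dagger$, and hence (by an easy injectivity argument on $\dagger$) $R' \equiv Q$, giving $P \Rightarrow Q$ as required.

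\noindent
\textbf{Main obstacle.} The routine part is chaining Theorem~\ref{soundcllb}; the substantive content of (2) lies in the interface with Theorem~\ref{teo:cll-b}(2). The delicate step is justifying that the $\Blang$ trace produced by (1) can indeed be viewed in the canonical $(\tobanm \tobn)^*$ shape. This relies on the emptiness of queues at the endpoint $Q^\dagger$ and on the commutation properties already established in Lemma~\ref{lemma:commute-main} and Lemma~\ref{lemma:simul}, which let us rearrange positive/forwarder reductions past negative ones when necessary. A secondary subtlety is the injectivity of $\dagger$ up to $\equiv$: it is not literally injective as a function on syntax, but it is injective up to structural congruence, which is enough to transfer equivalences back from $\Blang$ to $\CLL$.
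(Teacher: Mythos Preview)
Your approach is the one the paper intends: the text gives no detailed proof here and simply says to combine Theorem~\ref{soundcllb} with Theorem~\ref{teo:cll-b}, which is exactly what you do. Part~(1) is right; the only subtlety you gloss over is that $\expa$ steps are themselves $\Mapsto$ steps (instances of [SCut]), so the multi-step trace from $enc(P)$ really does factor as you describe.

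There is one small slip in Part~(2). You claim that the residual tail $R'^\dagger \tobanm R$ must be empty because positive/forward steps ``cannot reduce a process with empty queues to itself.'' That is false for forwarder steps: rule [fwdp] with $q_1=q_2=\nilm$ fires on a $\dagger$-image and lands on another all-empty-queues process. So the tail need not be trivial. The fix is immediate, though: since $R'^\dagger$ and $R\equiv^{\mathsf B} Q^\dagger$ both have all queues empty, any \emph{positive} step in the tail is excluded (it would leave a nonempty queue), hence the tail consists solely of [fwdp] steps on empty queues. Each such step is exactly the $\dagger$-image of a $\CLL$ [fwd] reduction, so $R'\Rightarrow R''$ in $\CLL$ with $R''^\dagger \equiv^{\mathsf B} Q^\dagger$; your injectivity-of-$\dagger$ argument then gives $R''\equiv Q$, and closure of $\to$ under $\equiv$ finishes. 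With this patch the argument goes through.
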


In Definition~\ref{def:ready} we identify readiness, the fundamental
invariant property of SAM states, key to prove progress of its
execution strategy.  Readiness means that any running process holding
an endpoint of negative type, and thus attempting to execute a
negative action (e.g., a receive or offer action) on it, will always
find an appropriate value (resp. a closure or a label) to be
read in the appropriate session queue. No busy waiting or context
switching will be necessary since the sequential execution semantics
of the SAM enforces that all actions corresponding to a positive
section of a session type have always been enqueued by the ``caller"
process before the "callee" takes over.  As discussed in
Section~\ref{sec:coreSAM} it might not seem obvious whether all such input
endpoints, (including endpoints moved around via send / receive
interactions), always refer to non-empty queues.

Readiness must also be maintained by processes
suspended in session records, even if a suspended process waiting on a read
endpoint will not necessarily have the corresponding queue already
populated. Intuitively, a process $P$ is $(H,N)$-ready if all its ``reads" in the 
input channels (except those in $N$) will be matched 
by values already stored in the corresponding session queue. 

\begin{definition}[Ready]\label{def:ready}
~Process $P$ is $H,N$-ready if for all $y\in \fn{P}\setminus N$ and $\srecs{{x:A}}{q}{R}{{y}}\in H$ then $A$ 
is negative or void.
We abbreviate $H,\emptyset$-ready by $H$-ready.
Heap $H$ is ready if, for
all $\srecs{x}{q}{R}{y}\in H$, the following conditions hold:
\begin{enumerate}
\itemsep 0pt
\item if $R(y)$ then $R$ is $H,\{y\}$-ready 

\item if $R(x)$ then $R$ is $H$-ready  

\item if $\clos(z:-,R)\in q$, $R$ is $H,\{z\}$-ready.

\item if $\clos(z:+,R)\in q$, $R$ is $H$-ready.
\end{enumerate}


%
\noindent 
State $C=(P,H)$ is ready if $H$ is ready and $P$ is 
$H$-ready.
\end{definition}

\vspace{-.2cm}

\begin{restatable}[Readiness]{lemma}{ready}
\label{lemma:ready}~Let $P\vdash\emptyset$ and $(P,\emptyset) \stackrel{*}{\Mapsto} S$.
Then $\mathcal S$ is ready.
\end{restatable}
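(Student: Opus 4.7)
The plan is to prove the statement by induction on the length of the SAM trace $(P,\emptyset) \Mapsto^{*} S$, with the inductive invariant being precisely that every intermediate state is ready. The base case is straightforward: $(P,\emptyset)$ has an empty heap, so conditions (1)--(4) on $H$ hold vacuously, and $P$ is trivially $\emptyset$-ready because there are no session records to check against. For the inductive step, I will assume $S = (P',H')$ is ready and $S \Mapsto S''$, and proceed by case analysis on the SAM rule applied.

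The central invariants I would maintain through the case analysis are polarity invariants enforced by typing, together with the $\Blang$-typing of the running process and of each suspended process in the heap (which I need to carry alongside readiness, since readiness of closures and of suspended processes refers to their types). For \text{[SCut]}, the new record $\srecs{x}{\nilm}{Q[y/x]}{y}$ created from a positive cut places $Q$ at the read endpoint $y$, and $Q$ being $H'',\{y\}$-ready follows from the $H'$-readiness of $P'$ restricted to $Q$'s free names (which do not include $y$ since it is fresh); similarly the newly running $P$ inherits $H''$-readiness from being a subterm of the previously running process. For \text{[S$\otimes$]} and \text{[S$\oplus$]}, adding a closure or label to the end of a queue only turns existing names of the previously running process into queue contents, and the closure readiness conditions (3)--(4) follow directly from the readiness of the enclosing running process: if $z$ in $\clos(z,R)$ is negative, the $R$-reads at $z$ were already exempt under the previous $H'$-readiness; if $z$ is positive, $R$ has no read at $z$ and inherits full readiness. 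The rules \text{[S$\one$]}, \text{[S$\bot$]}, \text{[S$-$]} follow straightforwardly: the context switch in \text{[S$-$]} moves a reader process (at write endpoint of a negative type) into the session record, where condition (2) demands full readiness and that is exactly what the previously running process had.

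The main obstacle, as anticipated, will be the rules \text{[S$\parl_+$]}, \text{[S$\parl_-$]}, \text{[S$\with$]}, where a closure/label is dequeued and possibly the endpoints of the consuming session record are swapped (via the $\srecss{\cdot}{\cdot}{\cdot}{\cdot}$ operator). Here I must check two subtle things. First, that the new session record produced by the receive rule is ready: in \text{[S$\parl_+$]}, $Q$ (now running) takes the write endpoint $w$ and $R$ (from the closure) takes the read endpoint $z$; the closure condition (3) ensuring $R$ is $\{z\}$-exempt-ready is precisely what is required by record condition (1). Dually for \text{[S$\parl_-$]}, closure condition (4) matches record condition (2). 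Second, and this is the key technical point, I must argue that the swap performed when the queue becomes $\nilm$ preserves readiness of the affected record: at that moment, the type of the write endpoint, originally positive, has been exhausted of its positive prefix and now the remaining type is negative (or void), so after the swap, the former writer holds a read endpoint whose type is negative and the former reader holds a write endpoint, matching the invariant; conditions (1) and (2) can then be re-established, using the fact that the now-suspended process at the write endpoint is the one that previously executed all its writes and is now ready to proceed by context-switch as needed.

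Finally, for \text{[Sfwd]}, which merges two records, I will argue that the concatenated queue $q_2 @ q_1$ preserves the closure conditions value-by-value (they are already witnessed in the predecessor records), and that the remaining suspended process $Q$ and the running process $P$ inherit readiness from the two source records, where the only new obligation is that $Q$'s endpoint in the merged record has the correct polarity invariant, which follows from the polarity-typing constraints of the forwarder $\fwd{x^-}{y^+}$. Throughout, I will need the typing information $P \vdashB \emptyset$ carried through each step (by Theorem~\ref{theorem:type-preservation} applied to $\Blang$) to justify the polarity side-conditions on which the readiness analysis depends.
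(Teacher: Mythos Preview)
Your proposal is correct and follows essentially the same approach as the paper: induction on the length of the SAM execution trace, with readiness as the inductive invariant, discharged by case analysis on the transition rule applied. The paper's proof proceeds through exactly the cases you outline, and the two subtle points you flag---the endpoint swap when a queue empties in \text{[S$\parl$]}/\text{[S$\with$]}, and the record merge in \text{[Sfwd]}---are precisely where the paper's argument does its work. Your observation that typing information must be carried alongside readiness (to justify polarity claims such as ``the write type is now negative or void'') is accurate; the paper packages this as a separate invariant (their Lemma~5.2(2) in the appendix) rather than invoking Theorem~\ref{theorem:type-preservation} directly, but the content is the same.
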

\begin{proof}
The property trivially holds for $(P,\emptyset) $ for
$\fn{P}=\emptyset$ and $H=\emptyset$.
We proceed by transition induction, assuming that  
$S=(P,H) $ is ready, and $\mathcal S \Mapsto \mathcal 
S'= (P,H')$, we check that $S'$ is ready.
We illustrate one case.
 
(Case of [SCut])
Let $R = \cuti{\ou x:T \; [q] \; y:B}{P}{Q}$
and
$S = (R,H) \Mapsto  S' =
( P, H')$, where
$H'=H[\srecs{x}{\nilm}{Q}{y} ]$ and $T$ positive.
By i.h., $R$ is $H$-ready hence $P(x)$ is $H,\{x\}$-ready
and $Q(y)$ is $H,\{y\}$-ready. Then $P(x)$ is $H'$-ready, and $H'$ is ready. We conclude that $S'$ is ready.
\end{proof}

\vspace{-.2cm}

\begin{restatable}[Progress]{theorem}{samprogress}
\label{samprogress}
Let $P\vdashB\emptyset$ and $P$ live. Then $enc(P)\Mapsto S'$.

\end{restatable}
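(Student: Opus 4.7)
The plan is to show that $enc(P)$ reaches a state $(\mathcal A, H)$ with $\mathcal A$ an action, and then case-split on $\mathcal A$ to exhibit a matching SAM transition. Termination of $enc$ at an action holds because $enc$ strictly descends into the (finite) buffered-cut structure of $P$, halting exactly when a topmost action is reached; liveness rules out the trivial $\mathcal A = \zero$ possibility. Since $P \vdashB \emptyset$ is closed, every free name of $\mathcal A$ is bound by some cut on the path descended by $enc$, and hence appears as an endpoint of a session record in $H$, with endpoint role (write or read) determined by the $\Blang$ polarity annotation of that cut.

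The case analysis proceeds on $\mathcal A$. For a positive action ($\pone{x}$, $\potimes{x}{z}{R}{Q}$, or $\labl{l}\;x;Q$), well-typing together with the descent rules of $enc$ force the subject $x$ to be the write endpoint of its record, so rules [S$\one$], [S$\otimes$], [S$\oplus$] fire by appending the corresponding value to the queue. For a negative action $\mathcal A$ on subject $c$: if $c$ is a write endpoint of its record, rule [S$-$] fires unconditionally; if $c = y$ is the read endpoint of $\srecs{x{:}A}{q}{R'}{y{:}B}$, then by the encoding's choice $enc$ descended on the reader side because either $A$ was negative or $R' = \zero$. In either sub-case, Lemmas~\ref{lemma:queue-non-empty} and~\ref{lemma:queue-full} give $q \neq \nilm$, and Lemmas~\ref{lemma:queue-prf} and~\ref{lemma:queue-full} pin the head of $q$ to the value kind expected by $\mathcal A$: a $\checkmark$ for $\pbot{y}{\cdot}$, a linear closure $\clos(z,R)$ for $\pparl{y}{w}{\cdot}$, or a label $\labl{l}$ for $\pcasem{y}{\ell}{L}{Q_\ell}$. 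The matching rule [S$\bot$], [S$\parl$], or [S$\with$] then applies. For the forwarder $\fwd{x}{y}$, a similar split on endpoint roles and polarities of $x$ and $y$ shows that either [Sfwd] applies (when $x$ is a read endpoint and $y$ a write endpoint in their respective records) or the forwarder is effectively a negative action on a write endpoint, handled by [S$-$].

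The main obstacle is the careful bookkeeping required to line up, in each case, the polarity of the subject of $\mathcal A$ with its role (write or read endpoint) in $H$, and in particular to match the head-of-queue value with the reader's action when $\mathcal A$ is a negative action on a read endpoint. This relies on inverting the descent of $enc$ alongside the $\Blang$ typing derivation of $P$, and invoking the queue-typing lemmas. Readiness (Lemma~\ref{lemma:ready}, or its direct analogue established inductively for encoded $\Blang$ states) is the structural invariant that rules out attempted reads from an empty queue. Once this alignment is established, each case reduces to selecting the matching SAM rule, yielding the desired transition $enc(P) \Mapsto S'$.
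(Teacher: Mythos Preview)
Your proposal is correct and follows essentially the same approach as the paper: reach an action $(\mathcal A, H)$ via $enc$, then case-split on $\mathcal A$, using Readiness (Lemma~\ref{lemma:ready}) together with Lemmas~\ref{lemma:queue-non-empty}, \ref{lemma:queue-prf}, and~\ref{lemma:queue-full} to guarantee that a negative action on a read endpoint finds the right value at the head of the queue, and dispatching the write-endpoint case via [S$-$]. The only minor stylistic difference is that you justify ``$A$ negative or void'' for the read-endpoint case directly from the descent rules of $enc$ (which is indeed sufficient for states of the form $enc(P)$), whereas the paper invokes the general Readiness lemma; since you also cite Readiness explicitly, the two arguments coincide.
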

\begin{proof}
Since $P$ is live then $enc(P) =  S = ({\mathcal A},H)$ for some action $\mathcal A$.
\hide{
(Case of [fwdB]) $\mathcal A = \fwd{x}{y}$.
Wlog, assume that $x{:}A$ is negative, so $y{:}\dual A$ is positive.
 By typing, $x$ and $y$ must be endpoints of different cuts (and session records).
Case $H=H'[\srecs{z}{q_1}{U(z)}{x}][\srecs{y}{q_2}{V(w)}{w}]$.
We have
$S\Mapsto S'$ 
by [Sfwd].
Case $H=H'[\srecs{x}{q_1}{U(z)}{z}][\srecs{y}{q_2}{V(w)}{w}]$.
Then $\fwd{x}{y} = \mathcal A^-(x)$, and  $S\Mapsto S'$ by [S$-$].
}

(Case of [$\bot$]) 
$\mathcal A = \pbot{y}{R'}$. 
If $H=H'[\srecs{y{:}A}{q}{Q}{x}]$,
then $S\Mapsto S'$ by [S$-$].
Otherwise, $H=H'[\srecs{x\ass A}{q}{\mathcal A}{y}]$. 
By Lemma~\ref{lemma:ready}, $\mathcal A$ is $H'$-ready,
so $A$ is negative or void.  By Lemma~\ref{lemma:queue-non-empty}  $q\neq\nilm$. By Lemma~\ref{lemma:queue-full}, $q=\checkmark$. So $S\Mapsto S'$ by [S$\bot$].  
\end{proof}


\vspace{-10pt}
\section{{The SAM for full $\CLL$}}
\label{sec:sam-mix}
\begin{figure}[t]
{\small$$
\begin{array}{llll}
S & ::= & (\E, P,  H) & \mbox{State}
\vspace{6pt}\\
H & ::= & \mathit{Ref} \to \mathit{SessionRec} & \mbox{Heap}
\vspace{6pt}\\
 \mathit{SessionRec} & ::= & \srecs{x}{q}{\E, P}{y}
 \vspace{6pt}\\
\mathit{q} & ::= & \nilm \; | \; \mathit{Val}  @ q & \mbox{Queue}
\vspace{8pt}\\
\mathit{Val} & ::= & \checkmark & \mbox{Close token}\\
 & | & \labl{l} & \mbox{Choice label}\\
  & | & \clos(x, \E, P) & \mbox{Linear Closure} \\  
  & | & \closB(x, \E, P) & \mbox{Exponential Closure} 
\vspace{6pt}\\
   \mathcal{E},  \mathcal{G},\F & ::= & \mathit{Name} \to (\mathit{Ref} 
 \cup \mathit{Val}) \ \quad & \mbox{Environment}
\end{array}
$$}
\label{fig:bigSAM}
\caption{The SAM}
\end{figure}
In this section, we complete our initial presentation of the SAM, 
in particular, we introduce support for the exponentials, allowing
the machine to compute with non-linear values, and a selective concurrency
semantics.
We have delayed the introduction
of an environment structure for the SAM, to make the presentation easier to follow.
However, this was done at the expense of a more abstract formalisation of
the operational semantics, making use of $\alpha$-conversion,
and overloading language syntax names as heap references for allocated session records. 

The SAM actually relies on environment-based implementation of
name management, presented in Fig.~\ref{fig:bigSAM}.
A SAM state is then a triple $(\E,P,H)$ where $\E$ is an environment 
that maps each free name of the code $P$ into either a closure or a heap record endpoint. These heap references are freshly allocated and unique, thus avoiding any clashes and enforcing proper static scoping. Closures, representing suspended linear ($\clos(z,\E,P)$) and exponential behaviour ($\closB(z,\E,P)$), pair the code in its environment, and we expect the following structural safety conditions for name biding in configurations to hold.
\begin{definition}[Closure]

\noindent
A process $P$ is $(\E,N)$-closed
if $\fn{P}\setminus N \subseteq dom(\E)$,
and $\E$-closed if $(\E,\emptyset)$-closed.

\noindent
Environment $\E$ is $H$-closed if for all
$x\in dom(\E)$ if $\E(x)$ is a reference then $x\in H$,
if $\E(x)=\closB(z,\F,R)$ then $\F$ is $H$-closed
and $R$ is $(\F,\{z\})$-closed.

\noindent
Heap $H$ is closed if 
for all $\srecs{x}{q}{\G,Q}{y}\in H$, 
$\G$ is $H$-closed,
$Q$ is $\G$-closed,
and for all $\clos(z,\F,R)\in q$ and $\closB(z,\F,R)\in q$, 
$\F$ is $H$ closed and 
$R$ is $(\F,\{z\})$-closed.
State $(\E,P,H)$ is closed if $H$ is closed, $\E$ is $H$-closed, and $P$ is $\E$-closed.
\end{definition}
\begin{figure}
{\small$$
\begin{array}{lrr}
(\E, \cuti{\ou x:A \; [\nilm] \; y:B}{P}{Q}, H) \Mapsto 
(\G, P, H[\srecs{a}{q}{\F,Q}{b} ] ) \quad\quad\;& \text{\rm[SCut]}\\

a,b = \mathsf{new}, \G=\E\subs{a}{x}, 
 \F=\E\subs{b}{y}
 \vspace{6pt}
 \\
(\E, \pone{x}, H[\srecs{a}{q}{\F,P}{b}]) \Mapsto 
(\F, P, H[\srecs{a}{q @ {\checkmark}}{\emptyset,\zero}{b}])&  \text{\rm[S$\one$]} \\
a = \E (x)
 \vspace{6pt}
 \\
 (\E,\fwd{x}{y}, H[\srecs{c}{q_1}{\G,Q}{a}][\srecs{b}{q_2}{\F,P}{d}]) \Mapsto 
(\F,P, H[\srecs{c}{q_2 @ {q_1}}{\G,Q}{d}])&  \text{\rm[Sfwd]}\\
a = \E (x), b = \E(y)
 \vspace{6pt}
 \\
(\E,\pbot{y}{ P}, 
H[\srecs{a}{{\checkmark}}{\emptyset,\zero}{b}]) \Mapsto 
(\E, P, H) &  \text{\rm[S$\bot$]}\\
b = \E (y)
 \vspace{6pt}
 \\
(\E, \mathcal A^-(x), 
H[\srecs{a}{{q}}{\G,Q}{b}]) \Mapsto 
(\G, Q, H [\srecs{a}{{q}}{\E,\mathcal A^-(x)}{b}]) & \text{\rm[S$-$]}\\
a = \E (x)
 \vspace{6pt}
 \\
(\E, \potimes{x}{z}{R}{Q}, H[\srecs{a}{q}{P}{b}]) \Mapsto \\
\quad\quad\quad\quad 
\quad\quad\quad\quad 
(\E, Q, H[\srecs{a}{q @ \clos(z,\E,R)}{P}{b}]) & \text{\rm[S$\otimes$]}
\\
a = \E (x)
 \vspace{6pt}
 \\
(\E, \pparl{y}{w{:}+}{Q}, H[\srecs{a}{\clos(z,\F,R)@q }{\G,P}{b}]) \Mapsto \\
\quad\quad\quad\quad 
\quad\quad\quad\quad 
(\E', Q, H[\srecs{e}{\nilm}{\F',R}{f}][\srecss{a}{q }{\G,P}{b}])
& \text{\rm[S$\parl+$]}\
\\
e,f = \mathsf{new}, b = \E (y), \E' = \E\subs{e}{w}, \F' = \F\subs{f}{z}
 \vspace{6pt}
 \\
(\E, \pparl{y}{w{:}-}{Q}, H[\srecs{a}{\clos(z,\F,R)@q }{\G,P}{b}]) \Mapsto \\
\quad\quad\quad\quad 
\quad\quad\quad\quad 
(\F', R, H[\srecs{e}{\nilm}{\E', Q}{f}]
[\srecss{a}{q }{\G, P}{b}])
& \text{\rm[S$\parl-$]}\\
e,f = \mathsf{new}, b = \E (y), \F' = \F\subs{e}{z}, \E' = \E\subs{f}{w}
 \vspace{6pt}
 \\
( \E, \#\mathsf{l}\; x;Q, H[\srecs{a}{q}{\G, P}{b}]) \Mapsto 
(\E, Q, H[\srecs{a}{q @ \#\mathsf{l}}{\G, P}{b}]) & \text{\rm[S$\oplus$]}
\\
a = \E (x)
 \vspace{6pt}
 \\
(\E, \pcasem{y}{\ell}{L}{Q_\ell}, H[\srecs{a}{\#\mathsf{l}@q }{\G, P}{b}]) \Mapsto 
(\E, Q_{\labl{l}}, H[\srecss{a}{q }{\G,P}{b}])
& \text{\rm[S$\with$]}\\
b = \E (y)
 \vspace{6pt}
 \\
( \E, \pbang{x}{z}{Q}, H[\srecs{a}{q}{\G, P}{b}]) \Mapsto 
(\G,P, H[\srecs{a}{q @ \clos (z,\E,Q)}{\emptyset, \zero}{b}]) & 
\text{\rm[S$!$]}
\\
a = \E (x)
 \vspace{6pt}
 \\
( \E, \pwhy{y}{Q}, H[\srecs{a}{\clos (z,\F,R)}{\emptyset, \zero}{b}]) \Mapsto 
(\E',Q, H) & \text{\rm[S$?$]}
\\
b = \E (y), \E'=\E\subs{\clos (z,\F,R)}{y}
 \vspace{6pt}
 \\
( \E, \pcopy{y}{w{:}+}{Q}, H) \Mapsto 
(\E',Q, H[\srecs{a}{\nilm}{\F', R}{b}]) & \text{\rm[Scall+]}
\\
a,b = \mathsf{new}, 
\E' = \E\subs{a}{w}, \F' = \F\subs{b}{z}\\
\clos (z,\F,R) = \E (y)
 \vspace{6pt}
 \\
( \E, \pcopy{y}{w{:}-}{Q}, H) \Mapsto 
(\F',R, H[\srecs{a}{\nilm}{\E', Q}{b}]) & \text{\rm[Scall-]}
\\
a,b = \mathsf{new}, 
\E' = \E\subs{b}{w}, \F' = \F\subs{a}{z}\\
\clos (z,\F,R) = \E (y)
 \vspace{6pt}
 \\
 \\
\srecss{a}{q }{\G,P}{b} \deff \mbox{if }(q = \nilm)\mbox{ then }
\srecs{b}{q }{\G,P}{a}\mbox{ else }
\srecs{a}{q }{\G,P}{b}
\end{array}
$$}
\label{fig:final-sam}
\caption{SAM Transition Rules for the complete $\CLL$}
\end{figure}
In Figure~\ref{fig:final-sam} we present the environment-based execution rules for the SAM. 
All  rules except those for exponentials have already been 
essentially presented in Fig.~\ref{fig:rules} and discussed in previous sections. 
The only changes to those rules are due to the
presence of environments, which at all times record the bindings for free names in the code. Overall, we have
\begin{lemma} Let $P\vdashB \emptyset;\emptyset$.
For all $S$ such that $(P,\emptyset,\emptyset)
\stackrel{*}{\Mapsto} S$, $S$ is closed.
\end{lemma}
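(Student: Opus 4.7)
The plan is to prove this by induction on the length of the transition sequence $(P,\emptyset,\emptyset) \stackrel{*}{\Mapsto} S$. The base case is immediate: since $P \vdashB \emptyset;\emptyset$ entails $\fn{P} = \emptyset$, the process $P$ is vacuously $\emptyset$-closed, and both the empty environment and empty heap are trivially closed by the conditions of Definition of closure. For the inductive step I assume the predecessor state $S = (\E, Q, H)$ is closed and perform case analysis on the SAM transition rule firing in $S \Mapsto S'$.

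For each rule I must verify three things about $S' = (\E', Q', H')$: that $\E'$ is $H'$-closed, that $Q'$ is $\E'$-closed, and that $H'$ is closed. The guiding invariants are: (i) freshly allocated references via $\mathsf{new}$ are distinct from all references in $H$ and are installed in $H'$ immediately; (ii) any environment extension $\E\subs{a}{x}$ records that $x$ is bound to a reference that now exists in $H'$; and (iii) whenever a closure $\clos(z,\F,R)$ or $\closB(z,\F,R)$ is created from the currently running code, $\F$ is the restriction of the $H$-closed environment $\E$ to names occurring in $R$ and therefore also $H'$-closed, while $R$ is a well-scoped subterm of the $\E$-closed $Q$, hence $(\F,\{z\})$-closed.

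The cases then split into natural groups. The allocating rules [SCut], [S$\parl\pm$] and [Scall$\pm$] extend both $H$ and the relevant environments: closedness follows from the fresh-name convention combined with invariant (iii) on any freshly-packaged closure. The enqueueing rules [S$\otimes$], [S$\oplus$], [S$\one$], [S$!$] only move data into queues and are discharged by invariant (iii) applied to the new queue entry. The dequeueing rules [S$\parl\pm$], [S$\with$], [S$?$] unpack a queue value whose closedness is already guaranteed by closedness of $H$, and then bind freshly allocated endpoints (or, in the [S$?$] case, bind $y$ directly to the replicated closure). The structural rules [S$-$], [S$\bot$], [Sfwd] reshuffle which component is the active one: since heap-closedness of $S$ already guarantees closedness of every suspended component under its stored environment, the swap is immediate.

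The subtlest case is expected to be [Sfwd], where two session records are fused and a reference is removed from the heap. Here I would argue that no closure in the remaining heap or in the concatenated queue $q_2 @ q_1$ can mention the deleted reference: by invariant (i), that reference was allocated fresh for this specific cut, so only the forwarder's environment and the two endpoint holders could ever have referred to it, and after the rewrite both mentions are discharged. A similar but lighter argument applies to [Scall$\pm$], where spawning a fresh instance of a replicated closure requires observing that the stored $\F$ remains $H'$-closed because heap extensions preserve closedness of pre-existing environments, and that the polarity-dependent choice of which side takes the write endpoint respects $\E$- and $\F$-closedness of $Q$ and $R$ respectively. All remaining cases are routine bookkeeping against the three invariants.
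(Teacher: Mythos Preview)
The paper does not supply a proof for this lemma; it is stated as a routine sanity property immediately after the definition of closure, with the implicit understanding that it follows by transition induction. Your approach---induction on the length of the run with case analysis on the firing rule---is exactly the standard argument one would expect here, and your grouping of cases (allocating, enqueueing, dequeueing, structural) is sensible.

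One small correction: in your invariant (iii) you say the closure environment $\F$ is ``the restriction of the $H$-closed environment $\E$ to names occurring in $R$'', but in the actual rules (e.g.\ [S$\otimes$], [S$!$]) the closure is formed with $\E$ itself, not a restriction. This does not affect the argument, since $\E$ being $H$-closed is all you need; it is just a mis-description of the machine.

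A more substantive point concerns [Sfwd]. You argue that the deleted references $a,b$ cannot be mentioned elsewhere because they were allocated fresh ``for this specific cut''. Freshness at allocation time is not by itself enough: you also need that across subsequent transitions no other environment or closure ever acquired a binding to $a$ or $b$. That is true, but it follows from a \emph{linearity} invariant (each heap reference appears in exactly one environment slot or closure across the whole state), which in turn is justified by the underlying typing discipline. If you want the [Sfwd] case to go through cleanly you should strengthen the induction hypothesis to include this linearity invariant and check it is preserved by each rule; otherwise the step as written has a small gap. The paper, having left the lemma unproved, does not address this either.
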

We discuss the SAM rules for the exponentials. Values of exponential type are represented by exponential closures $\closB(z,\F,R)$. Recall that a session type may terminate in either type $\one$, type $\bot$ or in an exponential type $!A/?A$ (cf. ~\ref{lemma:queue-full}). 
So, the (positive) execution rule  [S$!$] is similar to 
rule [S$\one$]: it enqueues the closure representing the
replicated process, and switches context, since the session terminates (cf. [!] Fig.~\ref{fig:redB}). 
The execution rule [S$?$] is similar to 
rule [S$\parl$]: it pops a closure from the queue (which, in this case,
always becomes empty),
and instead of using it immediately, adds it to the
environment to become persistently available to client code (cf.
reduction rule [S$?$] Fig.~\ref{fig:redB}).
\hide{ 
For any $P\vdashB\emptyset;\emptyset$
we extend Definition~\ref{def:encode} to an environment based mappping $enc(P)=C$ to $enc(P,\emptyset,\emptyset) \stackrel{\mathsf{cut}*}{\Mapsto}  C$ where $enc(P,H,\E) \expa  C$ is defined 
along the lines of  but adding the
expected encoding for the exponential 
cut

$$
\frac{
enc(P,H, \E\subs{\closB(z,\E,R)}{x})
\expa \mathcal C}{
enc(\cutBi{x]}{z}{R}{P}, H, \E)\expa \mathcal C}
$$

}
Any such closure representing
a replicated process
may be called by client code with transition rule [Scall],
which essentially creates a new linear session composed by
cut with the client code, similarly to [S$\parl$].
Rule [SCall] operates with some similarity to 
rule [S$\parl$]: instead of activating a linear closure 
popped from the queue, it activate an exponential closure fetched 
from the environment. 

We extend the $enc$ map to the exponential cut
and environment states $(\E,P,H)$ by adapting Definition~\ref{def:encode}, 
and adding the clause:

\vspace{4pt}
\centerline
{\small$
\displaystyle\frac{
enc(\E\subs{\closB(y,\E,R)}{x}, P),H )
\expa  C
}{
enc(\E,\cutBi{x}{y}{R}{P}, H)\expa  C}
$}
\vspace{4pt}
We now update our meta-theoretical results for the complete SAM. 
\begin{restatable}[Soundness]{theorem}{fulsamsound}
\label{fulsamsound}
Let $P\vdashB\emptyset;\emptyset$. 

\noindent
 If $enc(P) \Mapsto D \expa C$  then there
is $Q$ such that $P \to\cup\equiv Q$ and
$C = enc(Q)$.

\end{restatable}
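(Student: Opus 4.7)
The plan is to extend the proof of Theorem~\ref{soundcllb} by accommodating the environment-based name management and the new exponential rules. As with the core case, the proof proceeds by case analysis on the SAM transition rule applied in $enc(P) \Mapsto D$. The starting observation is that $enc(P) = (\E, \mathcal{A}, H)$ for some action $\mathcal{A}$, and that the sequence of $\expa$ steps building $enc(P)$ from $(\emptyset, P, \emptyset)$ witnesses a decomposition $P \equiv E[\mathcal{A}']$ for a static (cut) context $E[-]$, where $\mathcal{A}'$ is the $\Blang$ action associated to $\mathcal{A}$ after resolving names through $\E$. Essentially, $\E$ records the $\alpha$-renamings that turn source names into fresh heap references and stores the exponential closures introduced by traversing exponential cuts.

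For every non-exponential rule the argument is, up to environment bookkeeping, identical to that of Theorem~\ref{soundcllb}: each SAM transition rule mirrors one $\Blang$ reduction from Fig.~\ref{fig:redB}, and the resulting machine state can be folded back through $\expa$ to yield $enc(Q)$ for the corresponding reductum $Q$. The only extra work is tracking that the environment restrictions (name lookups of the form $a = \E(x)$) are well-defined thanks to closedness of states, so that the code after reduction is still $\E'$-closed for the updated environment $\E'$.

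The genuinely new cases are \text{\rm[S$!$]}, \text{\rm[S$?$]}, and the two \text{\rm[Scall]} rules. Rule \text{\rm[S$!$]} parallels \text{\rm[S$\one$]}: it enqueues an exponential closure $\closB(z,\E,Q)$ and context-switches, which simulates the $\Blang$ rule [$!$] that appends $\closB(z,Q)$ to the queue and continues with $\zero$ on the write side. Rule \text{\rm[S$?$]} consumes the queued exponential closure and binds it into the environment at $y$; by the new $enc$ clause for the exponential cut, this is precisely $enc$ applied to the $\Blang$ reductum of rule [$?$], namely $\cutBi{x}{y}{R}{Q}$. Finally, \text{\rm[Scall+]} and \text{\rm[Scall-]} fetch an exponential closure already present in the environment and allocate a fresh linear session record; this is exactly the effect of the $\CLL$/$\Blang$ reduction [call] followed by one \text{\rm[SCut]} $\expa$-step applied to the duplicated copy of the replicated server.

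The main obstacle is the environment–reference indirection: to conclude $C = enc(Q)$ one must show that extending $\E$ with a binding $y \mapsto \closB(z,\F,R)$ is observationally the same, from the viewpoint of $enc$, as prefixing the source process with the exponential cut $\cutBi{x}{y}{R}{Q}$, and that allocating fresh heap references for session records commutes with the $\expa$-rules up to $\alpha$-conversion. Both facts rely on maintaining the closedness invariant (Lemma on closed states): $\E$ is $H$-closed, $H$ is closed, $P$ is $\E$-closed, and all queued closures are $(\F, \{z\})$-closed in their packaged environments. With this invariant in hand, the case analysis goes through mechanically, and the existential $Q$ is in each case either a principal cut-reduction of $P$ (giving $P \to Q$) or, for rules like \text{\rm[S$-$]} and the \text{\rm[Scall]} duplication, a process structurally congruent to $P$ (giving $P \equiv Q$), matching the disjunction $\to \cup \equiv$ in the statement.
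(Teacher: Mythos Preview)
Your approach is essentially the same as the paper's: case analysis on the SAM transition, with the non-exponential cases deferred to Theorem~\ref{soundcllb} and the new cases \text{\rm[S$!$]}, \text{\rm[S$?$]}, \text{\rm[Scall]} mapped respectively to the $\Blang$ rules [$!$], [$?$], [call]. One small slip in your final paragraph: \text{\rm[Scall]} yields $P \to Q$ (via the [call] reduction, after commuting the enclosing exponential cut next to the $\textcolor{pcolor}{\mathsf{call}}$ by $\equiv$), not merely $P \equiv Q$; you had this right earlier when you wrote that \text{\rm[Scall]} ``is exactly the effect of the $\CLL$/$\Blang$ reduction [call].''
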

\vspace{-.2cm}
\begin{restatable}[Progress]{theorem}{fullsamprogress}
\label{fullsamprogress}
Let $P\vdashB\emptyset;\emptyset$ and $P$ live. Then $enc(P)\Mapsto C$.


\end{restatable}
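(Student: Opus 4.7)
The plan is to mirror the proof of Theorem~\ref{samprogress}, adapting it to the environment-based machine and extending the case analysis to the new exponential rules. Since $P$ is live, the encoding $enc(P)$ saturates all outermost linear and exponential cuts via the extended $enc$ clauses for [SCut] and [SCut!], yielding a state $(\E, \mathcal{A}, H)$ whose focused process is a topmost action (a forwarder or an introduction prefix). We then case-analyse on $\mathcal{A}$ and exhibit an applicable SAM rule in each case.

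For the rules already present in the core SAM, the argument is essentially that of Theorem~\ref{samprogress} lifted through the environment: the closedness invariant guarantees that every free name of $\mathcal{A}$ lies in $dom(\E)$ and is mapped either to a heap reference or (in the unrestricted case) to an exponential closure, so all heap lookups succeed. Readiness, adapted to the environment-based configurations and suitably re-proved in the style of Lemma~\ref{lemma:ready}, together with the queue lemmas (Lemmas~\ref{lemma:queue-prf},~\ref{lemma:queue-full}, and~\ref{lemma:queue-non-empty}, the last two now extended so that ``full'' queues may terminate either in $\checkmark$ or in an exponential closure $\closB$) dispose of the negative cases: whenever $\mathcal{A}$ is a negative action on a read endpoint, the corresponding queue is non-empty and carries the expected value; otherwise the endpoint is a write endpoint and [S$-$] is enabled.

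The new cases concern the exponentials. For $\pbang{x}{z}{Q}$ rule [S!] is unconditionally enabled, since it behaves as a positive action that writes a closure and terminates its endpoint. For $\pwhy{y}{Q}$, either $y$ is a write endpoint (in which case [S$-$] applies) or it is a read endpoint; in the latter case the extended Lemma~\ref{lemma:queue-full} ensures the unique queue value is an exponential closure, so [S?] fires. For $\pcopy{y}{w}{Q}$, the [Tcall] typing rule places $y$ in the unrestricted context $\Gamma$; the strengthened closedness invariant ensures that $\E(y)$ is bound to some $\closB(z,\F,R)$, and [SCall$\pm$] applies according to the static polarity of the fresh argument name $w$.

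The main obstacle is strengthening the invariants so that all new cases go through uniformly. We must extend Lemma~\ref{lemma:ready} to cover (i) environment bindings of unrestricted names, showing that each $\E(y)=\closB(z,\F,R)$ descends from a well-typed replicated server of matching type and that the body $R$ is itself ready in its own environment, and (ii) exponential closures occurring in queues, which must fit the shape prescribed by rule [TCut!]. Once these enhanced readiness and closedness invariants are established by transition induction on $(\emptyset,P,\emptyset)\Mapsto^{*} S$, the case analysis above completes the proof of progress.
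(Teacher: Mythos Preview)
Your proposal is correct and follows essentially the same approach as the paper: reduce to a state $(\E,\mathcal{A},H)$ with $\mathcal{A}$ an action, then case-analyse on $\mathcal{A}$, using an environment-aware extension of the Readiness lemma together with Lemmas~\ref{lemma:queue-prf}--\ref{lemma:queue-non-empty} to dispatch the negative cases (the paper singles out [S$?$] as the representative new case). Your explicit treatment of $\pcopy{y}{w}{Q}$ via the closedness invariant on $\E$ is the right idea and is left implicit in the paper's presentation.
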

\vspace{-16pt}
\subsection{Concurrent Semantics of Cut and Mix}
\label{ref:mix}
Intuitively, the execution of mix $\mix{P}{Q}$
consists in the parallel execution of (non-interfering) processes $P$ and $Q$.
We may execute $\mix{P}{Q}$ by sequentialising $P$ and $Q$ in some 
arbirary way, and this actually may be useful in some cases. 
\hide{
This would yield simple
execution rules (we get back to environment-free presentation, for clarity).
$$
\frac{
(P, H) \Mapsto (P', H') 
}{
(\mix{P}{Q}, H) \Mapsto (\mix{P'}{Q}, H') 
}\quad\quad \text{\rm[SMixL]}
\quad
\quad
{
(\mix{\zero}{P}, H) \Mapsto (P, H) 
}\quad\quad \text{\rm[SMix$\zero$]}
$$
Introducing the symmetric rule of $\text{\rm[SMixL]}$ would allow the SAM 
to non-determini\-stically select a mix parallel thread to reduce.
}
\hide{
However, it is natural to introduce real concurrency in the SAM, not just for mix, but also for processes sharing binary sessions or other types of resources, such as mutex memory cells (cf.~\cite{DBLP:conf/esop/RochaC23}). To that end, we evolve the SAM from single threaded to multithreaded, introducing states of the form 
$$(\{P_1,P_2,\ldots,P_n\},H) \quad \quad \rm{Thread Pool}$$
exposing a multiset of processes $P_i$ available for execution.  
\hide{We denote the concurrent mix by $\mixP \{ \mix{P}{Q} \} $, and introduce the following SAM rules, where each moment a scheduler picks a thread for one step sequential execution [Srun], to be handled by the rules of the sequential SAM.

\vspace{-0.3cm}
{\small\[
\begin{array}{c}
{
( \mixP \{\mix{P}{Q}\}\uplus T, H) \Mapsto (\{P,Q\}\uplus T, H') 
}\; \text{\rm[Smixp]}
\quad
{
(\zero\uplus T, H) \Mapsto (T, H) 
}\; \text{\rm[Sstp]}
\vspace{6pt}\\
\displaystyle
\frac{
(P, H) \Mapsto (P', H') 
}{
(P\uplus T, H) \Mapsto (P'\uplus T, H') 
}\; \text{\rm[Srun]}

\end{array}
\]}
}
Notice that each individual thread executes locally according to
the SAM sequential transitions from before.  These rules do
not break any of the presented correctness results, since by typing
each $P_i$ has a disjoint heap footprint.
}

However, much more interesting is the accommodation in the SAM of interfering concurrency,
as required to support full-fledged concurrent languages for
session-based programming.
{First, we evolve the SAM from single threaded to multithreaded, where states 
now expose a multiset of processes $P_i$ ready for execution by the
basic SAM sequential transitions: $(\{P_1,P_2,\ldots,P_n\},H)$
and introduce an annotated variant $\cutPc$ of the cut.
It has the same $\CLL/\Blang$ semantics, 
but to be implemented as a fork construct where $P$ and $Q$ spawn concurrently,
 their interaction mediated by an atomic
\emph{concurrent session record} $\srecc{x}{q}{y}$. The type system ensuring that
concurrent channels may be forwarded only to
concurrent channels.
We extend the SAM with transition rule for multisets:}

\vspace{-0.3cm}
{\small $$
\begin{array}{c}
\displaystyle
\frac{
(P, H) \Mapsto (P', H') 
}{
(P\uplus T, H) \Mapsto (P'\uplus T, H') 
}\; \text{\rm[Srun]}\vspace{6pt}\\
(\cutp{\ou x\ass A \;[q]\; y\ass B}{P}{Q}\uplus T, H) \Mapsto 
(\{P,Q\}\uplus T\}, H[\srecc{x}{\nilm}{y}] ) \quad \text{\rm[SCutp]}\vspace{6pt}\\
((\mix{P}{Q})\uplus T, H) \Mapsto 
(\{P,Q\}\uplus T\}, H[\srecc{x}{\nilm}{y}] ) \quad \text{\rm[SMixp]}
\end{array}
$$}

\noindent
Each individual thread executes locally according to the SAM sequential transitions presented before, until an action on a concurrent queue is reached. 
Concurrent process actions on concurrent queues are atomic, and defined as expected. Positive actions always progress by  pushing a value into the queue, while
negative actions will either pop off a value from the queue or block, waiting for a value to become available. We illustrate with the rules for $\one$,$\bot$ typed actions.

\vspace{-.2cm}
{\small\[
\begin{array}{ll}
(\pone{x}, H[\srecc{x}{q}{y}]) \Mapsto 
(\zero, H[\srecc{x}{q @ {\checkmark}}{y}])\quad\quad\;&  \text{\rm[S$\one$c]} 
\vspace{6pt}\\
(\pbot{y}{ P}, 
H[\srecs{x}{{\checkmark}}{y}]) \Mapsto 
(P, H) &  \text{\rm[S$\bot$c]}
\end{array}
\]}

Notice that, as in the case for $\pbot{y}{ P}$ above, any negative action in the thread queue is unable to progress if the corresponding queue is empty. 
It should be clear how to define transition rules for all other pairs of
dual actions. Given an appropriate encoding $enc^c$
of annotated $\Blang$ processes in concurrent SAM states, and as consequence of typing and leveraging the proof scheme for 
progress in $\Blang$ (Theorem~\ref{theorem:progress}),  we have:
\begin{restatable}[Soundness-c]{theorem}{samcsound}
\label{samcsound}
Let $P\vdashB\emptyset;\emptyset$. 

\noindent
 If $enc^c(P) \Mapsto D \expa C$  then there
is $Q$ such that $P \to\cup\equiv Q$ and
$C = enc^c(Q)$.
\end{restatable}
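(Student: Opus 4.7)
My plan is to mirror the structure of Theorem~\ref{soundcllb} and Theorem~\ref{fulsamsound}, generalising from single-threaded to thread-pool states and accommodating concurrent session records $\srecc{x}{q}{y}$ in the heap. First I would make the encoding $enc^c$ fully explicit by extending Definition~\ref{def:encode} with clauses that saturate $\cutp{\ou x\,[q]\,y}{P}{Q}$ into two sibling threads sharing a fresh concurrent record, and $\mix{P}{Q}$ into two sibling threads with disjoint heap footprints; the resulting state is a multiset of action-headed threads together with a heap that may now contain both sequential and concurrent records. As in the sequential case, $P \Mapsto^{*} enc^c(P)$ holds by construction.

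The proof then proceeds by case analysis on the firing transition $enc^c(P) \Mapsto D$, followed by the cut-saturation $D \expa C$. Transitions of shape [Srun] that fire a sequential SAM step inside a single thread are handled verbatim by Theorem~\ref{fulsamsound}: the sibling threads and the concurrent part of $H$ form a static $\equiv^\mathsf{B}$-context, so simulation by $\to \cup \equiv$ is preserved. Rules [SCutp] and [SMixp] merely regroup the source expression, which already contains $\cutp{\ou x\,[\nilm]\,y}{P}{Q}$ (resp.\ $\mix{P}{Q}$); the step thus corresponds to the identity modulo $\equiv^\mathsf{B}$ under $enc^c$. For each atomic concurrent action rule ([S$\one$c], [S$\bot$c] and the analogous pairs for $\otimes/\parl$, $\oplus/\with$, $!/?$), the positive side appends a value to the concurrent queue while the negative side consumes one, directly matching the corresponding reductions of Fig.~\ref{fig:redB} lifted from $\cutP$ to $\cutPc$.

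The main obstacle I expect is the bookkeeping for negative atomic rules: such a transition only fires when the queue head already matches the action, and I must show that the source process is $\equiv^\mathsf{B}$-equivalent to one in which an outer $\cutPc$ with precisely that queue appears, so that the firing rule corresponds to a genuine $\to^\mathsf{B}$ step on that redex. This will require re-proving inversion lemmas along the lines of Lemma~\ref{lemma:queue-prf} and Lemma~\ref{lemma:queue-full} for derivations involving $\cutPc$; the arguments transfer because $\cutPc$ shares the typing rules of the buffered cut. A secondary concern is verifying that the fresh-reference bookkeeping introduced by [SCutp] and [SMixp] respects the closure invariants on environments from the environment-based SAM, which follows from standard disjointness of heap footprints enforced by typing. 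Note that neither progress nor confluence is needed here: the statement only asks that any firing transition simulates some reduction-or-congruence step on the source.
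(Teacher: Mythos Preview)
Your proposal is correct and matches the paper's approach, though you supply far more detail: the paper's entire proof is the one-line sketch ``Induction of $enc(P)\expa B$'', which amounts to exactly what you describe---use the structure of the $enc^c$ derivation to recover a static cut context in the source, then do case analysis on the fired SAM rule, each case matching a $\to^{\mathsf B}$ step or an $\equiv^{\mathsf B}$ rearrangement on that cut. Your reduction of [Srun] to the sequential Theorem~\ref{fulsamsound} is sound because sibling threads and concurrent records have disjoint typing footprints and so sit inside a static context.

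One minor clean-up: your discussion of [SCutp] and [SMixp] as possible \emph{initial} transitions is superfluous. Since $enc^c$ already saturates $\cutPc$ and $\mix$ via the encoding rules [pcut] and [mix], no thread in $enc^c(P)$ is headed by such a construct; those rules only fire inside the trailing $\expa$ saturation after some other step exposes a concurrent cut or mix in a continuation, and there they are indeed absorbed as identity modulo $\equiv^{\mathsf B}$, exactly as you say. The inversion lemmas you anticipate re-proving for $\cutPc$ are not actually needed for soundness (only for progress), since the statement assumes the SAM step already fires; the queue-head match is given, and the task is just to exhibit the corresponding $\Blang$ redex.
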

\vspace{-5pt}
\begin{restatable}[Progress-c]{theorem}{samcprogress}
\label{samcprogress}
Let $P\vdashB\emptyset;\emptyset$ and $P$ live. Then $enc_c(P)\Mapsto C$.


\end{restatable}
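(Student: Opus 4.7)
The plan is to prove progress by a case analysis on the applicability of the concurrent SAM transitions to $enc^c(P) = (T, H)$, where $T$ is a multiset of threads and $H$ contains both sequential session records (inherited from ordinary cuts) and concurrent session records $\srecc{x}{q}{y}$ (arising from annotated $\cutPc$ and mix). Since $P$ is live and well-typed in $\Blang$, I need only exhibit one applicable rule. I would pick any thread $Q \in T$ exposed by the encoding: if $Q$ is a concurrent cut or a mix, [SCutp] or [SMixp] fires immediately; if $Q$'s head is a purely sequential construct, I would invoke sequential SAM progress (Theorem~\ref{fullsamprogress}) on $Q$ together with the portion of $H$ it references, combined with [Srun]; and if $Q$'s head is a positive action on a concurrent queue (e.g.\ rule [S$\one$c]), the rule is always enabled because writes are non-blocking.

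The only delicate case is a thread attempting a negative action on an empty concurrent queue. If some other thread in $T$ is not similarly blocked, we are done by [Srun] or a matching concurrent rule. Otherwise every thread of $T$ is blocked on the read endpoint of an empty concurrent session, and I must rule this global-block out. The plan is to argue by contradiction: for each blocked thread, the dual endpoint of the concurrent queue it waits on is held by another thread in $T$, inducing a waiting graph over a finite thread pool. Because each concurrent channel has a unique writer and reader (enforced by the linear typing of concurrent sessions), this graph must contain a cycle, which would lift to a cyclic cut dependency in the originating $\Blang$ process — impossible by the structure of $\Blang$ typing.

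Equivalently, and this is the cleaner route I would take, I would appeal directly to $\Blang$ progress (Theorem~\ref{theorem:progress}) to obtain a reduction $P \to^{\mathsf{B}} P'$, and show that any such reduction can be mimicked by a concurrent SAM step on $enc^c(P)$: a positive $\Blang$ redex maps to the corresponding writer-side thread being able to fire its write rule, and a negative $\Blang$ redex (by the shape of the $\to^{\mathsf{B}}$ rules, consuming a head value of the queue) guarantees that the reader-side thread finds a non-empty concurrent queue and fires a negative concurrent rule. Either way, at least one SAM transition applies.

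The hard part will be formalising this mimicry rigorously. It requires a correspondence between $\Blang$ reductions and concurrent SAM transitions analogous to Theorem~\ref{teo:cll-b}, but adapted to the concurrent heap structure, together with an extension of the readiness invariant (Lemma~\ref{lemma:ready}) to concurrent session records so that the positions of would-be-blocked threads track exactly to $\Blang$ configurations where the matching positive action has not yet been reduced on the corresponding queue. Once these invariants are established, progress reduces to a routine appeal to the enabledness of either a sequential transition on the chosen thread or of the matching concurrent action dictated by the $\Blang$ redex.
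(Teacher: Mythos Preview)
Your proposal takes a different route from the paper, and your ``cleaner route'' (approach 2, mimicking a $\Blang$ redex) has a genuine gap.

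\textbf{The gap in approach 2.} Appealing to Theorem~\ref{theorem:progress} gives you \emph{some} $\Blang$ redex in $P$, but there is no reason that redex is exposed in $enc^c(P)$. Within each thread, the sequential encoding has already committed to the SAM's write-biased strategy: many $\Blang$-reducible actions are sitting suspended inside sequential session records or inside closures in queues, not at the head of any thread. So ``show that any such reduction can be mimicked by a concurrent SAM step'' is precisely the thing that fails in general. The soundness direction you cite (Theorem~\ref{samcsound}) goes SAM $\to$ $\Blang$; what you would need for your argument is a completeness direction, and the paper never proves one---for good reason, since the SAM implements a specific strategy, not all of $\Blang$. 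Relatedly, your appeal to Theorem~\ref{fullsamprogress} for a single thread $Q$ is not licensed: that theorem is stated for closed processes, whereas a thread in the pool has free names for the concurrent endpoints it holds.

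\textbf{What the paper actually does.} The paper proves Progress-c via an auxiliary \emph{Liveness-c} lemma (Lemma~\ref{livenessc}), structured exactly like the $\Blang$ liveness lemma (Lemma~\ref{lemma:liveness}) but at the level of the concurrent encoding. The statement is: for \emph{every non-empty subset} $S\subseteq T$ of the thread pool, some $P_i\in S$ either can take a SAM step, or is observable on a free concurrent endpoint (written $\obs{P_i}{y}^{H}$). The proof is by induction on the derivation of $enc^c(P,H)\expa(T,H')$, i.e.\ on the tree of concurrent cuts and mixes. In the $[\mathrm{pcut}]$ case one first applies the i.h.\ to the threads coming from the $Q$-side; if the resulting thread is blocked reading on $y$ with $q=\nilm$, one then applies the i.h.\ to the $P$-side and uses Lemma~\ref{lemma:queue-non-empty} to force a positive action on $x$, which fires the concurrent write rule. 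Progress-c then follows immediately because when $\fn{P}=\emptyset$ the observability alternative is vacuous.

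\textbf{Comparison with your approach 1.} Your waiting-graph/cycle intuition is morally the same content, but the paper makes it precise \emph{structurally}: the induction on the encoding derivation follows the tree of concurrent cuts, so acyclicity is built in and no separate graph argument is needed. The crucial technical device you are missing is the universal quantification over subsets $S\subseteq T$, which is what lets the induction restrict attention to the threads originating from one side of a concurrent cut. Without that strengthening the induction hypothesis is too weak to use.
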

The extended SAM executes concurrent
session programs, consisting in an arbitrary number of
concurrent threads. Each thread deterministically executes sequential code,
but can at any moment spawn new concurrent threads.
{The whole model is expressed in the common language of (classical) linear logic,
statically ensuring safety, proper resource usage, termination, and deadlock absence by static typing.}

 \hide{
The DILL (dyadic) formulation of $\CLL$ 
considers types $!A$ and $?A$ as linear while in the context $\Delta$.
The bookkeeping reduction rule [?] ``moves'' a linear name from $\Delta$ to the 
exponential context $\Gamma$, where it may be composed with a replicated server
by [TCut!]. In the operational semantics of closed processes this corresponds to
activating a replicable server, implemented in the SAM by 
}

 
\vspace{-10pt}
\section{Concluding Remarks and Related Work}\label{sec:conclusion}
\vspace{-5pt}
We introduce the Session Abstract Machine, or SAM, an
abstract machine 
for executing session processes typed by (classical) linear logic
$\CLL$,
deriving a deterministic, sequential evaluation strategy, where exactly
one process is executing at any given point in time. In the SAM,
session channels are implemented as single queues with a write and a read
endpoint, which are written to, and read by executing
processes. 
Positive actions are non-blocking, giving rise to a degree of asynchrony.
However, processes in a session synchronise at polarity inversions, where 
they alternate execution, according to a fixed co-routining strategy.
Despite its specific strategy, the SAM semantics is sound wrt $\CLL$
and satisfies the correctness properties of logic-based
session type systems. We also present a conservative concurrent
extension of the SAM, allowing the degrees of concurrency to
be modularly expressed at a fine grain, ranging from fully sequential
to fully concurrent execution. Indeed, a practical concern with the
SAM design lies in providing a principled foundation for an
execution environment for multi-paradigm languages,
combining concurrent, imperative and functional programming.
{
The overall SAM design 
as presented here may be uniformly extended to cover any other polarised
language constructs that conservatively extend the PaT paradigm, such as polymorphism,
affine types, recursive and co-recursive types, and shared state~\cite{DBLP:conf/fossacs/PfenningG15,DBLP:conf/esop/RochaC23}.
We have implemented
a SAM-based version~\cite{artifactesop24} of an open-source implementation of
$\CLL$~\cite{artifactEsop23}.}

A machine model provides evidence of the algorithmic feasibility of a
programming language abstract semantics, and illuminates its
operational meaning from certain concrete semantic perspective.  Since
the seminal work of Landin on the SECD~\cite{Landin64}, several
machines to support the execution of programs for a given
programming language have been proposed. The SAM is then proposed
herein in this same spirit of Cousineau, Curien and Mauny's
Categorical Abstract Machine
for the call-by-value $\lambda$-calculus~\cite{DBLP:journals/scp/CousineauCM87}, Lafont's Linear Abstract
Machine for the linear
$\lambda$-calculus~\cite{DBLP:journals/tcs/Lafont88}, and Krivine's
Machine for the call-by-name
$\lambda$-calculus~\cite{DBLP:journals/lisp/Krivine07} ; these works explored Curry-Howard correspondences
to propose provably correct solutions. In~\cite{DBLP:conf/ifl/Danvy04},
Danvy developed a deconstruction of the SECD based on a sequence of
program transformations.  The SAM is also derived from Curry-Howard
correspondences for linear
logic $\CLL$~\cite{cairesmscs16,wadler2014propositions}, and we also rely
on program conversions, via the intermediate buffered language $\Blang$, as a
key proof technique.  We believe that the SAM
is the first proposal of its kind to tackle the challenges of a process
language, while building on several deep properties of its type structure towards a
principled design. Among those,
focusing~\cite{DBLP:journals/logcom/Andreoli92} and
polarisation~\cite{DBLP:conf/tlca/Laurent99,DBLP:journals/tcs/HondaL10,DBLP:conf/fossacs/PfenningG15}\hide{and asynchrony~\cite{}}
played an important role to achieve a deterministic sequential reduction
strategy for session-based programming, perhaps our main initial
motivation. That allows the SAM to naturally and efficiently integrate the
execution of sequential and concurrent session behaviours, and suggests
effective compilation schemes for mainstream virtual machines or compiler frameworks.

The adoption of session and linear types 
is clearly increasing in research 
(e.g.,~\cite{DBLP:conf/sefm/FrancoV13,DBLP:journals/iandc/AlmeidaMTV22,DBLP:conf/esop/PocasCMV23,DBLP:journals/lmcs/DasP22,DBLP:journals/corr/WillseyPP17,DBLP:conf/esop/ToninhoCP13,DBLP:conf/fossacs/PfenningG15,DBLP:conf/esop/RochaC23})
and general purpose 
languages (e.g.,~Haskell~\cite{DBLP:journals/pacmpl/BernardyBNJS18,DBLP:conf/haskell/KokkeD21},
Rust~\cite{DBLP:conf/coordination/LagaillardieNY20,DBLP:conf/ecoop/ChenBT22}
Ocaml~\cite{DBLP:conf/ecoop/ImaiNYY19,DBLP:journals/jfp/Padovani17},
F\#~\cite{DBLP:conf/cc/NeykovaHYA18}, Move~\cite{move-lang}, among many others), which either
require sophisticated encodings of linear typing via type-level
computation or forego of some static correctness properties for
usability purposes.
Such developments typically have as a main focus the realization
of the session typing discipline (or of a particular refinement of
such typing), with the underlying concurrent execution model often
offloaded to existing language infrastructure.

We highlight the work~\cite{DBLP:journals/pacmpl/CastellanY19}, which
studies the relationship between synchronous session types and game
semantics, which are fundamentally asynchronous. Their work proposes
an encoding of synchronous strategies into asynchronous strategies by so-called
call-return protocols. While their focus differs significantly from
ours, the encoding via asynchrony is reminiscent of our own.

We further note the work~\cite{DBLP:conf/csl/Munch-Maccagnoni09} which
develops a polarized variant of the $\overline{\lambda}\mu\tilde{\mu}$
suitable for sequent calculi like that of linear logic.
While we draw upon similar inspirations in the design of the SAM,
there are several key distinctions: the
work~\cite{DBLP:conf/csl/Munch-Maccagnoni09} presents
$\lambda\mu$-calculi featuring values and substitution of terms for
variables (potentially deep within the term structure). Our system,
being based on processes calculus, features neither -– there is no term
representing the outcome of a computation, since computation is the
interactive behavior of processes (cf. game semantics); nor does
computation rely on substitution in the same sense. Another significant
distinction is that our work materializes a heap-based abstract
machine rather than a stack-based machine.
 Finally, our type and term structure is not itself polarized. Instead, we
 draw inspiration from focusing insofar as we extract from focusing
 the insights that drive execution in the SAM. 

%

\hide{
Session-based programming is a programming style based on the pervasive use of
linearity and interaction, corresponding to a lazy form of computation with strict control of resources. Unlike in functional programming, where basic program elements are functions (e.g.,~objects of type $A\to B$ and their application to
objects of type $A$ to compute objects of type $B$), in session based
programming basic program elements are processes that interact
symmetrically by passing values amongst themselves according to a linear session protocol. In a session-calculus a linear function object may be modelled by a session typed process $x$ of type $?[A].![B].\mathsf{end}$~\cite{DBLP:journals/acta/GayH05}  or $T{\multimap} B\otimes \one$~\cite{DBLP:conf/concur/CairesP10} or $\dual T \parl B \otimes 1$~\cite{cairesmscs16,wadler2014propositions} which receives an argument of type $A$ and returns back the result of type $B$. Much more general protocols maybe usefully defined with session typed objects, particularly convenient to define computations that strictly manipulate linear resources, and where disposal and duplication of resources is made explicit (e.g, as in Rust), and where replicable (duplicable) are introduced via exponential types. 

For example, in ~\cite{X}\cite{DBLP:journals/jfp/Vasconcelos05}
}

In future work, we plan to study the semantics of the SAM in terms of
games (and categories), along the lines
of~\cite{DBLP:journals/pacmpl/CastellanY19,DBLP:journals/scp/CousineauCM87,DBLP:journals/tcs/Lafont88}. 
We also plan to
investigate the ways in which the evaluation strategy of the SAM can
be leveraged to develop efficient {compilation} of fine-grained session-based
programming, and its relationship with effect handlers, coroutines and delimited continuations. Linearity plays a key role in programming languages and environments for smart contracts in distributed ledgers \cite{DBLP:journals/lmcs/DasP22,DBLP:journals/pacmpl/SergeyNJ0TH19} manipulating linear resources (assets); it would be interesting to investigate 
how linear abstract machines like the SAM would provide a basis for certifying resource sensitive computing infrastructures~\cite{wood2014ethereum,move-lang}.

\paragraph{Acknowledgments.} This work was supported by NOVA LINCS
    (UIDB/ 04516/ 2020), INESC ID (UIDB/ 50021/ 2020), BIG (Horizon EU 952226 BIG).    




%
%
%

 \bibliographystyle{splncs04}
 \bibliography{bibliography}

\appendix
 


\newpage
\section{Appendix - Supplementary Material}\label{appendix}

\subsection{Proofs of Section~\ref{sec:safetyCLLB}: Preservation and Progress for $\Blang$ }

\queueprf*



\begin{proof}
To check that the rule is admissible, directly derive the
conclusion using $k$ applications of [TCut$\otimes$] or [TCut$\oplus$].
To check inversion, we proceed by
induction in the derivation of
$\cuti{\ou{x}:A \;[\overline{c_k}] \; y:B}{P}{Q} \vdashB \Delta; \Gamma$.
 
  (Case  [TCutB]) We have $k=0$ so $q=\nilm$, $A$ positive and $B$ negative. 

 (Case  [TCut-$\one$]) 
 Not applicable, since $c_k\neq \checkmark$.
 
(Case [TCut-$\otimes$])
We have $\cuti{\ou{x}{:}A \;[\overline{c_{k-1}} @ \mathsf{clos}(z_k;R_k)   ] \; y{:}B}{P}{Q} \vdashB \Delta; \Gamma$ derived from
$\cuti{\ou{x}:T_k{\otimes} A \;[  \overline{c_{k-1}} ]\; y:B}{\potimes{x}{z_k}{R_k}{P}}{Q} \vdashB \Delta; \Gamma$.
By i.h. $\potimes{x}{z_k}{R_k}{P}\vdashB \Sigma_{k},x{:}T_k\otimes A$ and
$Q \vdash \Delta_Q, y{:}B$ where $B=\mathbf{E}_{k-1};{\dual { (T_k\otimes {A}) }}$
and 
$\Delta_i \vdashB c_i : E_i $ for $i\leq k-1$.
By inversion ([T$\otimes$])
$P \vdashB \Delta_P,x:A$,
 $\Sigma_{k} = \Delta_k,\Delta_P$ and 
 $R_k \vdashB \Delta_k, z_k\ass T_k$, so
 $\Delta_k \vdashB c_k \ass E_k$ where $E_k=\dual{T_k}\parl \Box $.
We conclude $P\vdashB \Delta_{P},x:A$,
$Q \vdash \Delta_Q, y\ass B$,
$B= \mathbf{E}_k ;\dual{A}$
and 
$\Delta_i \vdash c_i \ass E_i $ for $i\leq k$.

(Case  [TCut-$\oplus$]) 
We have $\cuti{\ou{x}{:}A_{\labl{l}} \;[\overline{c_{k-1}} @ \labl{l}   ] \; y{:}B}{P}{Q} \vdashB \Delta; \Gamma$ derived from
$\cuti{\ou{x}: \oplusm{\ell}{L}{A_\ell} \;[  \overline{c_{k-1}} ]\; y:B}{{\labl{l}\ x;P}}{Q} \vdashB \Delta, \Delta'; \Gamma$.
We proceed by the i.h. as with [T$\otimes$] above, considering
$B=\mathbf{E}_{k-1};{\dual { (\oplusm{\ell}{L}{A_\ell}) }}$
and
$\Delta_k \vdashB c_k : E_k$ where $E_k=\withm{\ell}{L}{E_\ell} $ and $E_{\labl{l}} = \Box$,
so that $B= \mathbf{E}_k ;\dual{A}$
and 
$\Delta_i \vdash c_i : E_i $ for $i\leq k$.

\end{proof}
\hide{
\begin{lemma}[Queue prefix]\label{lemma:queue-prf} 
The proof rule below is admissible and invertible:
$$
\frac{
P \vdash \Delta_P,x{:}A
\quad
Q \vdash \Delta_Q, y{:}B
\quad q = \overline{c_k}
\quad B=\dual T_1{\parl}{...}{\parl} \dual T_k{\parl} \dual{A}
\quad
\Delta_i \vdash c_i{:} T_i {\parl} ()
	\quad {-}B	} 
{\cuti{\ou{x}{:}A \;[q] \; y{:}B}{P}{Q}
	\vdash \Delta_P,\Delta_Q, \Delta_1,...,\Delta_k} 
$$
\hide{
Let $\cut{\ou{x}:A \;[\overline{c_k}] \; y:B}{P}{Q} \vdash \Delta; \Gamma$,
with $c_i = \mathsf{clos}(z_i;R_i;\Delta_{R_i})$ where $0\leq k$. Then 

(1)
$\cut{\ou{x}:T_1\otimes \ldots \otimes T_k \otimes A \;[\nilm] \; y:B}{s_1;\cdots ; s_k;P} {Q}\vdash \Delta; \Gamma $ and $s_i = \pi(c_i)$.

(2) 
$\Delta=\Delta_{R_1},\ldots, \Delta_{R_k},\Delta'$ and 
$R_i\vdash \Delta_{R_i},z_i:T_i$ and $B=\dual T_1\parl\cdots \parl \dual T_k \parl \dual A$.
}
\end{lemma}
\begin{proof}
To check that the rule is admissible, directly derive the
conclusion using $k$ applications of [TCut-$\otimes$].
To check inversion, we proceed by
induction in the derivation of
$\cuti{\ou{x}:A \;[\overline{c_k}] \; y:B}{P}{Q} \vdash \Delta; \Gamma$.
 
  (Case of [TCutB]) We have $k=0$ so $q=\nilm$, $A$ positive and $B$ negative. 

 (Case of [TCut-$\one$]) Not applicable, since $c_k\neq \checkmark$.

(Case of [TCut-$\otimes$])
We have $\cut{\ou{x}:A \;[\overline{c_{k-1}} @ \mathsf{clos}(z_k;R_k;\Delta_{R_k})   ] \; y:B}{P}{Q} \vdash \Delta; \Gamma$ derived from
$\cut{\ou{x}:T_k{\otimes} A \;[  \overline{c_{k-1}} ]\; y:B}{\potimes{x}{z_k}{R_k}{P}}{Q} \vdash \Delta, \Delta'; \Gamma$.
By i.h. $\potimes{x}{z_k}{R_k}{P}\vdash \Sigma_{k},x:T_k\otimes A$,
$Q \vdash \Delta_Q, y:B$ where $B=\dual T_1\parl\ldots \parl \dual {(T_k\otimes {A})}$
and 
$\Delta_i \vdash c_i : T_i\parl () $ for $i\leq k-1$.
By inversion ([T$\otimes$])
$P \vdash \Delta_P,x:A$,
 $\Sigma_{k} = \Delta_k,\Delta_P$ and 
 $R_k \vdash \Delta_k, z_k:T_k$, so
 $\Delta_k \vdash c_k : T_k\parl () $.
We conclude $P\vdash \Delta_{P},x:A$,
$Q \vdash \Delta_Q, y:B$,
$B=\dual T_1\parl\ldots \parl \dual T_k\parl \dual{A}$
and 
$\Delta_i \vdash c_i : T_i\parl () $ for $i\leq k$.

(Case of [TCut-$\oplus$]) Similar to [T$\otimes$].

\hide{
By i.h.
$\cut{\ou{x}:T_1\otimes \ldots T_k \otimes A \;[\nilm] \; y:B}{s_1;\ldots ;s_{k-1};s_k;P} {Q}\vdash \Delta; \Gamma $.

(2) By (1) and inversion on the typing derivation of  
$s_1;\cdots; s_k;P \vdash x:A,\Delta_R$.
}

\end{proof}
}

\queuefull*

\begin{proof}
We have
$\cuti{\ou{x}:\emptyset \;[\overline{c_k}] \; y:B}{\zero}{Q} \vdashB \Delta; \Gamma$
derived from [TCut$\one$] (a) or [TCut$!$] (b). In case (a)
$\cuti{\ou{x}:\one \; [{\overline{c_k}}]\;  y:B}{\pone{x}}{Q} \vdashB \Delta; \Gamma$
with $c_k=\checkm$. We conclude by Lemma \ref{lemma:queue-prf} .
 In case (b)
$\cuti{\ou{x}:!A \; [{\overline{c_k}}]\;  y:B}{\pbang{x}{z}{R}}{Q} \vdashB \Delta; \Gamma$ with $c_k=\closB(z,R)$.
By Lemma \ref{lemma:queue-prf}, $\pbang{x}{z}{R}\vdash z:!A;\Gamma$, so $\Gamma; \vdash c_k=\closB(z,R):?\dual A$ and we conclude 
with $B = \mathbf{E}_{k-1};C$ and
 $C=?\dual A$.
\end{proof}

\queuenonempty*
\begin{proof}
If $A$ is void then immediate by Lemma \ref{lemma:queue-full}.
If $A$ is negative, suppose $q=\nilm$. By Lemma~\ref{lemma:queue-prf}  $k=0$ and  $\dual A = B$. But then $B$ positive, contradiction. So $q\neq\nilm$.
\end{proof}

\typepreservation*
\begin{proof}
We verify that rules for  $\stackrel{\mathsf{\tiny B}}{\equiv}$ (Fig.~\ref{fig:equivB})  (resp.  $\stackrel{\mathsf{B}}{\to}$ (Fig.~\ref{fig:redB})) are type preserving.

(Case [fwdp])

(Case [fwdp])
$P = \cuti{\ou z\ass A \; [q_1] \; x\ass \dual B}{Q'}{\cuti{\ou y\ass B\; [q_2]\; w\ass C}{\fwd{x}{y}}{P'}}\vdashB\Delta;\Gamma$.
If $q_2=\nilm$ and $B = \dual{C}$,
hence $ Q=\cuti{\ou z \ass A\; [q_1]\; w\ass C}{Q'}{P'} \vdashB\Delta;\Gamma$.

Otherwise $q_2\neq\nilm$. Let
$F_2 = \cuti{\ou y\ass B\; [q_2]\; w\ass C}{\fwd{x}{y}}{P'}
$ where  $F_2 \vdashB \Delta_2, x\ass \dual B$ and $\Delta =\Delta_1, \Delta_2$,
and $q_2=\overline{c_k}$.
By Lemma \ref{lemma:queue-prf},
$\fwd{x}{y} \vdashB x\ass \dual B, y\ass B$, 
and $C = \mathbf{E}_k ; \dual B$, $\Gamma;\Delta_i\vdash c_i\ass E_i$,
$P\vdash w\ass C,\Delta_P;\Gamma$ and $\Delta_2 = \Delta_1,\ldots,\Delta_k,\Delta_P$.

Let 
$F_1 =  \cuti{\ou z \ass A\; [q_1]\; x\ass \dual B}{Q'}{F_2}$
 with $F_1 \vdashB \Delta_1, x\ass B;\Gamma$ and
$q_1=\overline{d_l}$.
By Lemma \ref{lemma:queue-prf},
$Q' \vdashB\Delta_Q,z\ass A;\Gamma$, $\Gamma;\Delta'_j \vdash d_j:F_j$ and $\dual B = \mathbf{F}_l ; \dual A$
and $\Delta_1 = \Delta'_1,\ldots,\Delta'_l,\Delta_Q$.
By Lemma \ref{lemma:queue-prf} we get 
$\cuti{z\ass A \; [ q_2@q_1 ] \; w\ass C}{Q'}{P'}\vdashB \Delta;\Gamma$.

(Case of [$\otimes$]) Let
$\cuti{\ou x:A\otimes C \;[q] \;y:B}{\potimes{x}{z}{P}{Q}}{R}\vdashB \Delta;\Gamma$.
By Lemma \ref{lemma:queue-prf},
$\potimes{x}{z}{P}{Q} \vdashB\Delta_P,x:A$ and
$R \vdash \Delta_R, y:B$ and
$\Delta_i\vdash c_i:E_i$ for $1\leq i\leq k$,
and $ B = \mathbf{E}_k;\dual (A\otimes C)$.
By inversion [T$\otimes$], $\Delta_P=\Delta'_P,\Delta''_P$
and $Q\vdashB \Delta''_P, x:C;\Gamma$ and $P\vdashB \Delta'_P, z:A, \Gamma$.
So $\Gamma;\Delta'_P\vdash c_{k+1} =  \clos(z,P):\dual A\parl \dual C$.
Let $E_{k+1}=\dual A \parl \Box$ so $B=\mathbf{E}_{k+1};\dual C$.
By Lemma \ref{lemma:queue-prf}, $\cuti{\ou x : C\;[q@\mathsf{clos}(z,P)] |\; y:B}{Q}{R} 
\vdashB \Delta;\Gamma$.

(Case [$\parl$])
Let
$\cuti{\ou x:A \; [ \overline{c_k}  ] \; y:B}{P}{
\pparl{y}{w}{Q}} \vdashB\Delta$ where $\overline{c_k} = \mathsf{clos}(z,R)@q$.
By Lemma \ref{lemma:queue-prf},
$P \vdash\Delta_P,x:A$ and
$\pparl{y}{w}{Q} \vdashB \Delta_Q, y:B$ and
$\Delta_1\vdash \mathsf{clos}(z,R):T_1\parl \Box$ and
$\Delta_i\vdashB c_i:E_i$ for $2\leq i\leq k$,
and $ B = \dual T_1\parl E_2[..E_k[\dual A]..]$.
We then have $R\vdashB \Delta_1,z:T_1$.
By inversion on [T$\parl$],  
$Q \vdash \Delta_Q,w:\dual T_1,y:C$ where $C = E_2[ .. E_k[\dual A]..]$.
We consider the case where $T_1$ and $A$ are positive, 
and check that
$\cuti{\ou x:A\;[q]\;y:C}{P}{(\cuti{\ou z_1:T_1\;[\nilm]\; w:\dual T_1}{R_1}{Q})}\vdashB \Delta$
by [TCutB] on  $R_1$ and $Q$ and Lemma \ref{lemma:queue-prf}.
Other cases are handled in a similar way.
\end{proof}
\begin{lemma}[Barbs inversion]\label{lemma:barbs} 
	Let $P\vdash \Delta;\Gamma$ and $\obs{P}{x}$.
\begin{enumerate}

\item If $x\in \Gamma$ then $P\equiv \cut{*}{\pcopy{x}{y}{Q}}{R}$, Otherwise $x:A\in\Delta$ and
\item
    $P\equiv \cut {*}{\fwd{x}{y}}{Q} $, or
\item
	If $A=\one$ then $P\equiv\pone{x}$.
\item
	If $A=B\otimes C$ then $P\equiv\potimes{x}{y}{Q}{R}$.
\item
	If $A=\bot$ then $P\equiv\cut{*}{\pbot{x}{Q}}{R}$.
\item
	If $A=B\parl C$ then $P\equiv \cut{*}{\pparl{x}{y}{Q}}{R}$.
\item
	If $A=?B$ then $P\equiv \cut{*}{\pwhy{x}{Q}}{R}$.
\item
	If $A=!B$ then $P\equiv \cut{*}{\pbang{x}{y}{Q}}{R}$.
\end{enumerate}

\end{lemma}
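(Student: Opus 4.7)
The plan is to prove this inversion lemma by induction on the derivation of $\obs{P}{x}$ in Fig.~\ref{fig:obs}, reading each conclusion up to an arbitrary surrounding static context. The base cases are [fwd] and [$\mathcal A$], and the inductive cases are [$\equiv$], [mix], [cut] and [cut!]; each inductive case simply propagates the claimed form outwards using the structural congruence laws of Fig.~\ref{fig:equiv}.

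The base case [fwd] gives item 2 with an empty surrounding context. The substantive base case is [$\mathcal A$]: here $P$ is an action with subject $x$, and I would apply syntactic inversion on the typing derivation $P\vdash\Delta;\Gamma$ using the rules of Fig.~\ref{fig:typing-muCLL}. Each action rule introduces a specific top-level connective at the type of its subject, so the shape of $A$ determines the action uniquely up to $\alpha$-renaming of bound names: $A=\one$ forces $P=\pone{x}$, $A=B\otimes C$ forces $P=\potimes{x}{y}{Q}{R}$, and symmetrically for $\bot$, $B\parl C$, $?B$, $!B$ as well as for the additives. If $x\in\Gamma$ instead, the only action construct whose subject may be typed in the unrestricted context is $\pcopy{x}{y}{Q}$, yielding item 1.

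The inductive cases amount to routine bookkeeping. Case [$\equiv$] follows from the induction hypothesis applied to the $\equiv$-predecessor, whose typing is preserved by Theorem~\ref{theorem:type-preservationCLL}, together with transitivity of $\equiv$. Cases [mix], [cut] and [cut!] each wrap one more static constructor around the subprocess carrying the barb at $x$; the induction hypothesis yields the claimed action form for that subprocess inside a static context, and the fresh outer layer is absorbed into the $\cut{*}{-}{-}$ wrapper by the commuting conversions [CM], [CC], [CC!] and [C!M].

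The only real subtlety, and hence the main obstacle, is the reading of items 3 and 4, which display the action without an explicit $\cut{*}{-}{-}$ wrapper. I would read them uniformly with the other items, permitting an arbitrary static context around the displayed action --- this is the reading consistent with cases [mix] and [cut], since, for instance, $\mix{\pone{x}}{R}$ has a barb at $x$ but is not literally $\equiv\pone{x}$ unless $R\equiv\zero$. A minor structural point is that the induction is on the height of the $\obs{P}{x}$ derivation rather than on $P$, since rule [$\equiv$] can freely restructure $P$.
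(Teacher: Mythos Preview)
The paper states this lemma without proof, so there is no argument to compare against; your approach by induction on the derivation of $\obs{P}{x}$ is the natural one and is correct. Your observation about items 3 and 4 is well taken: as stated they are literally too strong (e.g.\ $\mix{\pone{x}}{R}$ has a barb at $x$ but is not $\equiv\pone{x}$ unless $R\equiv\zero$), and the uniform reading with an implicit surrounding static context is the intended one, consistent with how the lemma is actually used in the proof of Lemma~\ref{lemma:liveness}.

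One small point worth making explicit: in the inductive cases you need the static wrapper in the conclusion to absorb a \emph{nested} static context coming from the induction hypothesis, not just a single new layer. Concretely, if the i.h.\ gives $P'\equiv\cut{*}{\mathcal A(x)}{R'}$ and the outer rule adds, say, a mix, you must collapse $\mix{(\cut{*}{\mathcal A(x)}{R'})}{Q}$ back to the one-layer form $\cut{*}{\mathcal A(x)}{R''}$. This is exactly what [CM], [CC], [CC!], [C!M], [C!C!] buy you, as you indicate; it just deserves a sentence saying that the wrapper $\cut{*}{-}{-}$ is being read as ``some static context'' rather than a single constructor, or else that these laws iterate to push all static layers to the $R$ side.
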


\liveness*
\begin{proof}
By induction on the derivation for $P \vdash \Delta; \Gamma$, and case analysis on the last typing rule. 

(Case of [TcutB]) we have $P = \cuti{ y:\dual A\;[\nilm]\;\ou z:A}{P_1}{P_2}\vdashB \Delta',\Delta;\Gamma$, derived from
$P_1\vdashB \Delta',y:\dual A;\Gamma$ and
$P_2\vdashB \Delta,z:A;\Gamma$, where $A$ is positive.
 By the i.h. we conclude that $P_2\to$  or $\obs{P_2}{x_2}$. 
 So if $P_2\to$, then $P\to$ .
 Otherwise, $\obs{P_2}{x_2}$. 
 If $x_2\neq z$ then $\obs{P}{x_2}$ with $x_2\in\Delta$.
 
 Otherwise $x_2=z$. By Lemma \ref{lemma:barbs} (1,2,3), 
 either $P_2 \equiv \cut{*}{\fwd{z}{v}}{R}$ (a), or
  since $A$ is 
 positive,   
$P_2\equiv\pone{z}$ (b) or $P_2\equiv\potimes{z}{w}{Q}{R}$ (c).
  
 In case (a) 
 if $v\not\in \Delta$,
 then $P$ reduces by [fwdp] (both $x_2$ and $v$ are bound in $P$ by cuts),
 otherwise  $v\in\Delta$ and $\obs{P_2}{v}$ and $\obs{P}{v}$.
 In case (b) $P\to$ by [$\one$] and 
 in case (c) by [$\otimes$]).
 
 (Case of [TCut$!$])
Let $\cutBi{x:A}{y}{R}{Q} \vdashB \Delta; \Gamma$
derived from $R \vdashB y:A; \Gamma$ and  $Q \vdashB \Delta; \Gamma, x:\dual A$. By i.h. either $Q\to$ or $\obs{Q}{z}$. If $z\neq x$ then $P\to$
or $\obs{P}{z}$. If $z=x$ then by Lemma \ref{lemma:barbs}
$Q\equiv \cut{*}{\pcopy{x}{y}{Q}}{R}$ and $P\to$ by [call].

(Case of [TCut$\otimes$])
We have $P=\cuti{\ou{x}:A \;[q  ] \; y:B}{P_1}{P_2} \vdashB \Delta; \Gamma$ 
where $q=r @ \mathsf{clos}(z_k;R_k)  $
derived from
$P_1 \vdashB x:A,\Delta_{P_1}; \Gamma$.
By  i.h. $P_1\to$ or $\obs{P_1}{x_1}$. 
If $x\neq x_1$ then $\obs{P_1}{x_1}$ and $\obs{P}{x_1}$.
If $A$ is positive, by the same reasoning as above for $P_2$
we conclude that $P\to$ or $\obs{P}{w}$.

Otherwise, $A$ is negative.
By  i.h. $P_2\to$ or $\obs{P_2}{z}$. 
If $z\neq y$ then $\obs{P}{x}$.
Otherwise, $\obs{P_2}{y}$.
By Lemma \ref{lemma:queue-prf}, $B = \dual T\parl C$.
By Lemma \ref{lemma:barbs} (2,6), 
either $P_2 \equiv \cut{*}{\fwd{y}{v}}{Q'}$ (a) or
$P_2\equiv \cut{*}{\pparl{y}{w}{Q'}}{Q''}$ (b).
For case (a) we conclude as in [TCutB] above that $P\to$ or $\obs{P}{v}$,
for case (b) $P\to$ by [$\parl$].

(Case of [TCut$\oplus$]) Similar to [TCut$\otimes$]

(Case of [TCut$\one$])
Let $P = \cuti{\ou{x}:\emptyset\; [q] \; y:B}{\zero}{P_2} \vdashB \Delta; \Gamma$ 
where $q=r @ \checkm$
derived from
$\cuti{\ou{x}:\one \; [{r}]\;  y:B}{\pone{x}}{P_2} \vdashB \Delta; \Gamma$.
By i.h. $P_2\to$  or $\obs{P_2}{z}$. 
If $z\neq y$ then $\obs{P}{x}$.
Otherwise, $\obs{P_2}{y}$.
By Lemma \ref{lemma:queue-full}, either $B = \dual T\parl C$ (a) or $B = \bot$ (b).
If $P_2 \equiv \cut{*}{\fwd{y}{v}}{Q'}$, then as for [TCutB]
we conclude that $P\to$ or $\obs{P}{v}$.
For (a), by Lemma \ref{lemma:barbs} (6), $P_2\equiv \cut{*}{\pparl{y}{w}{Q'}}{Q''}$,
and $P\to$ by [$\parl$].
For (b), by Lemma \ref{lemma:barbs} (5), $P_2\equiv\cut{*}{\pbot{y}{Q'}}{Q''}$,
and $P\to$ by [$\bot$].
\end{proof} 

\subsection{Proofs of Section~\ref{sec:CLLCLLB}: Correspondence between $\CLL$ and $\Blang$}

We give some examples of commutations used in the proofs, those for other pairs
$\tobx{} / \tobn{}$ are checked in a similar way.

\begin{lemma} \label{lemma:read-top}
Commutation (pos-neg-promote):
$$
\begin{array}{lll}
\cuti{\ou x \; [c@q] \; y}{\potimes{x}{z}{R}{P}}{\pparl{y}{w}{Q}} \tobx{}\\
\cuti{\ou x \; [c@q@v] \; y}{{P}}{\pparl{y}{w}{Q}} \tobn{}\\
\cuti{\ou x \; [q@v] \; y}{{P}}{\cuti{u [] w}{S}{Q}}
\vspace{8pt}\\
\cuti{\ou x \; [c@q] \; y}{\potimes{x}{z}{R}{P}}{\pparl{y}{w}{Q}} \tobn{}\\
\cuti{\ou x \; [q] \; y}{\potimes{x}{z}{R}{P}}{\cuti{s [] u}{R}{Q}} \tobx{}\\
\cuti{\ou x \; [q@v] \; y}{{P}}{\cuti{u [] w}{S}{Q}}
\end{array}
$$
\end{lemma}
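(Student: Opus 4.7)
The plan is to prove both commutations by direct unfolding of the reduction rules [$\otimes$] and [$\parl$] from Fig.~\ref{fig:redB}, tracking the evolution of queue contents. The key algebraic fact is that [$\otimes$] acts at the \emph{tail} of the queue (enqueue) while [$\parl$] acts at the \emph{head} (dequeue), so when the queue is non-empty (as witnessed by the leading value $c$), these actions manipulate disjoint positions and therefore commute.

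For the first sequence, starting from $\cuti{\ou x \; [c@q] \; y}{\potimes{x}{z}{R}{P}}{\pparl{y}{w}{Q}}$, rule [$\otimes$] appends $v \deff \clos(z,R)$ at the tail, giving queue $c@q@v$. Assuming $c = \clos(s,S)$ (the shape forced by the applicability of [$\parl$] on the dual endpoint), rule [$\parl$] then removes $c$ from the head and spawns an inner cut on the carried channels, producing
$$\cuti{\ou x \; [q@v] \; y}{P}{\cuti{s \;[\nilm]\; u}{S}{Q}^\mathsf{p}}^\mathsf{p}.$$

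For the second sequence, rule [$\parl$] applies first to the original term since the head of $c@q$ already carries a closure; this yields the intermediate
$$\cuti{\ou x \; [q] \; y}{\potimes{x}{z}{R}{P}}{\cuti{s \;[\nilm]\; u}{S}{Q}^\mathsf{p}}^\mathsf{p}.$$
Since the right branch of the outer cut is arbitrary from the perspective of [$\otimes$], that rule now applies to the outer cut, appending $v$ at the tail of $q$ and continuing as $P$. The result is syntactically the same term obtained in the first sequence.

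The main obstacle, and the only point that requires care, is notational: verifying that the polarity annotations produced by the $(-)^\mathsf{p}$ operation on the fresh inner cut coincide in both reduction orders, and that the outer endpoint marker $\ou x$ is preserved. This follows because (i) neither rule touches the polarity markers of endpoints not participating in the reduced action, and (ii) the inner cut produced by [$\parl$] is fully determined by the types of $s$ and $u$ (namely the types carried by the head closure and its matching receive bound name), which are identical in both orders. The second commutation displayed in the lemma is then exactly the same computation read off in the reversed order, and the remaining pairs of $\tobx{}/\tobn{}$ commutations alluded to in the preamble are proved in an entirely analogous manner by replacing [$\otimes$]/[$\parl$] with the corresponding pairs [$\oplus$]/[$\with$] and [$\one$]/[$\bot$].
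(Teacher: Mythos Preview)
Your proposal is correct and matches the paper's approach: the paper presents this lemma as a displayed pair of reduction sequences without further argument, treating the commutation as evident by inspection of the rules $[\otimes]$ and $[\parl]$, and your explicit unfolding of the enqueue/dequeue operations is exactly the verification that justifies that display. Your additional care about the $(-)^\mathsf{p}$ polarity annotation on the outer cut (noting that in the second sequence the left process still has positive $\otimes$-type when $q=\nilm$, so $\ou x$ is preserved) is a detail the paper leaves implicit.
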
 \label{lemma:commute-same}
\begin{lemma} \label{comm-read-write}
Commutation (pos-neg-seq-commute):
$$
\begin{array}{lll}
E_1[\cuti{c \; [q_1] \; \ou a}{X}{\potimes{a}{z}{R}\pparl{b}{w}{Q}}] \tobx{}\\
E_2[\cuti{\ou d \; [c@q_2] \; b}{{Y}}{\pparl{b}{w}{Q}} \tobn{}
E_3[\cuti{\ou x \; [q@v] \; b}{Y}{\cuti{u [] w}{S}{Q}}]
\vspace{8pt}\\
E_1[\cuti{c \; [q_1] \; \ou a}{X}{\potimes{a}{z}{R}\pparl{b}{w}{Q}}]\equiv\\
E_2[\cuti{\ou d \; [c@q_2] \; b}{X}{\pparl{b}{w}{\potimes{a}{z}{R}{Q}}}] \tobn{}
\\
F_3[\cuti{c \; [q_1] \; \ou a}{Z}{\cuti{u [] w}{S}{\potimes{a}{z}{R}{Q}}} \tobx{}
\\
E_3[\cuti{\ou x \; [q@v] \; b}{Y}{\cuti{u [] w}{S}{Q}}]
\end{array}
$$
\end{lemma}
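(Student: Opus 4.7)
The plan is to exploit the independence of actions on distinct buffered cuts. The positive action $\potimes{a}{z}{R}{\cdot}$ writes to the queue of the cut on endpoint $a$, while the negative action $\pparl{b}{w}{Q}$ reads from the queue of the cut on endpoint $b$. Since these are different queues and the two prefix actions have disjoint subjects and (by $\alpha$-renaming) disjoint bound names, they commute at the level of the underlying process structure. The overall strategy is first to rearrange the process via $\equiv^{\mathsf B}$ so that the negative action is exposed before the positive one, then to replay the two reductions in the opposite order.

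First I would use rule [C$i$] from Fig.~\ref{fig:equiv} to rewrite $\potimes{a}{z}{R}{\pparl{b}{w}{Q}}$ as $\pparl{b}{w}{\potimes{a}{z}{R}{Q}}$; its side conditions $x \ne y$ and $bn(a_1(x))\cap bn(a_2(y))=\emptyset$ are discharged by typing (actions have distinct subjects $a$ and $b$) plus $\alpha$-conversion. Coupled with [CC], [CM], and [C+$\ast$] from Fig.~\ref{fig:equiv} (and their extensions in Fig.~\ref{fig:equivB} for buffered cuts), this reshapes the term so that the receive prefix on $b$ sits immediately under the static context surrounding the $b$-cut, with $\potimes{a}{z}{R}{\cdot}$ pushed into its continuation. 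This produces the $\equiv^{\mathsf B}$ step announced in the statement, and yields $E_2[\cuti{\ou d \; [c@q_2] \; b}{X}{\pparl{b}{w}{\potimes{a}{z}{R}{Q}}}]$.

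Second, I would fire $\tobn{}$ on the $b$-cut via rule [$\parl$] of Fig.~\ref{fig:redB}, which pops the head closure from $c@q_2$ and spawns the fresh cut $\cuti{u\,[]\,w}{S}{\cdot}$ around the (now commuted) continuation; this reduction is completely insensitive to the embedded $\potimes{a}{z}{R}$, so the positive action is preserved verbatim. The resulting intermediate $F_3[\cuti{c \; [q_1] \; \ou a}{X}{\cuti{u \,[]\, w}{S}{\potimes{a}{z}{R}{Q}}}]$ then enables $\tobx{}$ by [$\otimes$], which enqueues $\clos(z,R)$ onto the $a$-queue, delivering the target term $E_3[\cuti{\ou x \; [q@v] \; b}{Y}{\cuti{u\,[]\,w}{S}{Q}}]$ up to $\equiv^{\mathsf B}$.

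The main obstacle is the bookkeeping of the static contexts $E_1, E_2, E_3, F_3$: these differ only by reorderings of cuts and mixes, but tracking the exact placements requires a careful and somewhat tedious sequence of applications of [CC], [CM], and [CC!] (and their buffered counterparts from Fig.~\ref{fig:equivB}) to align both derivations. Once this is in place, the two reduction sequences evidently produce the same process up to $\equiv^{\mathsf B}$, completing the commutation. The argument is prefix-polymorphic: essentially the same proof handles the other positive/negative prefix pairs (e.g., $\oplus$/$\with$, $\one$/$\bot$, $!$/$?$), with rule [C+$\ast$] and [C$i$] providing uniform commutation of positive actions past independent cuts and prefixes.
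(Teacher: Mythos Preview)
Your proposal is correct and matches the paper's approach. The paper presents this lemma only as a commutation diagram with no further proof text; the essential step, as you identified, is the prefix-commutation law [C$i$] to swap $\potimes{a}{z}{R}{\cdot}$ past $\pparl{b}{w}{\cdot}$, after which the negative and positive reductions can be replayed in the opposite order, with the static-context bookkeeping handled by the [CC]/[CM] congruences.
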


\begin{lemma} \label{fwd-neg-comm}
Commutation (fwd-neg-comm):
$$
\begin{array}{lll}
E[\cuti{\ou z \; [q_1] \; x}{P}{\cut{\ou y\; [c@q_2]\; b}{\fwd{x}{y}}{\pparl{b}{w}{Q}}}]
\toba{} \\
E[\cuti{\ou z\; [c@q_2 @ q_1]\; b}{P}{\pparl{b}{w}{Q}}] \tobn{}\\
E[\cuti{\ou z\; [q_2 @ q_1]\; b}{P}{\cuti{u [] w}{S}{Q}}]\\
\\
E[\cuti{\ou z \; [q_1] \; x}{P}{\cut{\ou y\; [c@q_2]\; b}{\fwd{x}{y}}{\pparl{b}{w}{Q}}}]
\tobn{} \\
E[\cuti{\ou z \; [q_1] \; x}{P}{\cut{\ou y\; [q_2]\; b}{\fwd{x}{y}}{\cuti{u [] w}{S}{Q}}}]
\toba{} \\
E[\cuti{\ou z\; [q_2 @ q_1]\; b}{P}{\cuti{u [] w}{S}{Q}}]
\end{array}
$$
\end{lemma}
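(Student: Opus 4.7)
The plan is to establish this commutation by computing both reduction sequences explicitly and showing that their endpoints coincide syntactically. The statement asserts a diamond: starting from $E[\cuti{\ou z \; [q_1] \; x}{P}{\cut{\ou y\; [c@q_2]\; b}{\fwd{x}{y}}{\pparl{b}{w}{Q}}}]$, one can first apply the forwarder reduction $[\mathsf{fwdp}]$ and then the negative receive reduction $[\parl]$, or dually first $[\parl]$ and then $[\mathsf{fwdp}]$, and in either order arrive at $E[\cuti{\ou z\; [q_2 @ q_1]\; b}{P}{\cuti{u\;[\nilm]\;w}{S}{Q}}^{\mathsf{p}}]$. The argument is a direct calculation guided by the form of the two reduction rules involved.

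First, I would observe that the $[\parl]$ step is enabled precisely when the head of the relevant queue is a linear closure, so by the enabling hypothesis $c = \mathsf{clos}(u, S)$. Crucially, this head element is the same before and after the forwarder reduction, since $[\mathsf{fwdp}]$ only concatenates the two queues $q_2$ and $q_1$ and leaves their contents, and in particular the head, untouched. Next, I would verify the (fwd then neg) branch: $[\mathsf{fwdp}]$ collapses the nested cuts connected through $\fwd{x}{y}$ into a single cut with queue $c @ q_2 @ q_1$ between $P$ and $\pparl{b}{w}{Q}$; then $[\parl]$ pops the head closure, yielding the displayed reductum. Symmetrically, in the (neg then fwd) branch, $[\parl]$ pops $c$ from the inner queue leaving $q_2$, and then $[\mathsf{fwdp}]$ concatenates $q_2 @ q_1$ to merge with the outer cut. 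The two final terms are literally the same expression.

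The polarity decoration ${}^{\mathsf{p}}$ on the newly introduced cut $\cuti{u \; [\nilm] \; w}{S}{Q}$ is determined solely by the type of the channel $u$ carried by $\mathsf{clos}(u, S)$, and is therefore identical in both derivations. The enclosing static context $E[-]$ is left untouched throughout, as neither reduction crosses its boundary, and no capture issues arise given standard $\alpha$-conversion hygiene.

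The main obstacle I expect is the bookkeeping around queue endpoints and polarity annotations: after $[\parl]$ fires, one must argue that the residual queue (either $q_2$ or $q_2 @ q_1$) together with the endpoint types form a well-typed buffered cut in the sense of Section~\ref{sec:blang}, which relies on the typing inversion provided by Lemma~\ref{lemma:queue-prf}. Once this is settled, the companion commutations for other pairs of dual actions (e.g.~closure/$\bot$, label/$\with$, exponential closure/$?$) follow by identical case analyses on the head value of the queue, requiring no additional conceptual ingredient beyond the fact that $[\mathsf{fwdp}]$ preserves head values under queue concatenation.
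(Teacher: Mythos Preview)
Your proposal is correct and matches the paper's treatment: the paper presents this lemma as a displayed computation with no separate proof, leaving the verification of each arrow as a direct check against the reduction rules $[\mathsf{fwdp}]$ and $[\parl]$, which is exactly what you spell out. Your remarks on the $^{\mathsf{p}}$ polarity decoration and on queue-head preservation under concatenation are apt and make explicit what the paper leaves to the reader; the appeal to Lemma~\ref{lemma:queue-prf} for typing is not needed for the commutation itself (which is purely operational) but does no harm.
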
 \label{lemma:commute-same}

\commutemain*




\begin{proof}
(1) Either  (a) the reductions are in the same cut, or 
(b)  the reductions are in different cuts.
For (a), if reductions match we conclude.
Otherwise they commute (by Lemma~\ref{lemma:read-top}
) 
so  
$P_1  \tobn  S   \tobx{} P_2$ for some $S$.
For (b), if reductions are independent (different threads), commute, and we conclude.
If the reductions are dependent (same thread), they commute (by Lemma~\ref{comm-read-write}) and we conclude.

(2) We consider two cases: either  (a) the reductions are in the same cut, or 
(b)  the reductions are in different cuts.
For (a), the reductions commute (by Lemma \ref{fwd-neg-comm}),
so we have 
$P  \tobn{}  S   \toba Q$ for some $S$.
For (b), the reductions are independent and commute.

(3) By (1) and (2).

(4) Assume $P_1 \tobx{} N$. Since the reductions in $S \tor{} P_2$ must act on the same cut with
an initially empty queue,
and all reductions $N \tobaxm{\epsilon} S$ are on empty cuts, 
the reduction $P_1 \tobx{} N$ must act on a different cut of
all those involved and thus commutes with $N \tobaxm{\epsilon} S \tor{} P_2$,
and we conclude (b).
If $P_1 \toba {} S$ then either this redex generates one empty cut,
so $P_1 \tobax{\epsilon} S$, and we conclude (a), or ,as in (b), it
must be independent of $N \tobaxm{\epsilon} S \tor{} P_2$.

\end{proof}

\simtrivial*
\begin{proof}
Each cut reduction of $\CLL$ is simulated 
by two reduction steps of $\mathsf{B}$ in sequence: 
[$\otimes \parl$] by [$\otimes$] followed by [$\parl$];
[$\one\bot$] by [$\one$] followed by [$\bot$];
[$!?$] by [$!$] followed by [$?$]. 
In a closed $\CLL$ process a [fwd] reduction  has the form 
$\cuti{x:A}{Q}{{\cut{y:{A}}{\fwd{x}{y}}{P}}}
\to \cuti{x:A}{Q}{\subs{x}{y}P} 
$, which, for $A{+}$,
reduces by [fwdB] as 
$
\cuti{\ou x\ass{A} \; [\nilm] \; z\ass\dual A}{Q}{\cut{\ou w\; [\nilm]\; y}{\fwd{z}{w}}{P}} \to_a \cuti{\ou x\ass A\; [\nilm]\; y \ass \dual A}{Q}{P}
$. 
\end{proof}

\simul*





\begin{proof}

By induction on $P (\toban{})^*  P'$.
 
(Base) We have $P \equiv P' \tobn Q $, hence (1).

(Inductive) Case $P  \toban{} P' \tobanm{} \tobn Q $.  Let (r0) $P  \toban{} P'$.

By i.h. for $P'$ 
there is $R'$ so that
(c1) $ P' \tobn R'$ and $R' \tobanm{} Q$, or
(c2) $ P' \tor{} R'$ and $R' \tobanm{}  {\;}  Q$, or
 (c3) $ P' \tobaxm{\epsilon} \tor{} R' $ and $R' \tobanm{} Q$ .

 Case (c1). We have $P  \toban{} P' \tobn R'$ and $R' \tobanm{}  {\;}  Q$.
 
By Lemma \ref{lemma:commute-main} (3) on  $P  \toban{} P' \tobn R'$, either 
$P  \tor{} R'$ and we conclude (2) ($R=R'$) or
 there is $R$ such that $P  \tobn{} R \toban R'$ and we conclude (1).
 
 
  Case (c2). We have $P  \toban{} P' \tor{} R'$ and $R' \tobanm{}  {\;}  Q$.
  By Lemma \ref{lemma:commute-main} (4) either  
  (a) $P \tobax{\epsilon} P' \tor{} R'$ 
  or (b) $P \tor{} R'' \toban{} R'$ for some $R''$.
In case (a) we conclude (3) ($R=R'$), 
in case (b) we conclude (2) ($R=R''$).

Case (c3). We have 
$ P  \toban{} P' \tobaxm{\epsilon} \tor{} R' $ and $R' \tobanm{} Q$.
  By Lemma \ref{lemma:commute-main} (4) either  
  (a) $P \tobax{\epsilon} P' $ 
  or (b) $P \tor{} R'' \toban R'$.
  In case (a) 
  $P \tobaxm{\epsilon}\tor{} R'$
  we conclude (3) ($R=R'$), in case (b) we conclude (2)
  ($R=R''$).
\end{proof}

\hide{For $A{-}$ we have $\cuti{x:A}{Q}{{\cut{y:{A}}{\fwd{x}{y}}{P}}}
\equiv \cuti{x:\dual A}{\cuti{y:{\dual A}}{P}{\fwd{y}{x}}}{Q} $
which 
reduces by [fwdB] as 
$
\cuti{\ou y:\dual{A} \; [\nilm] \; z:A}{P}{\cuti{\ou w\; [\nilm]\; x}{\fwd{z}{w}}{Q}} \to_a \cuti{\ou y\; [\nilm]\; x}{P}{Q}
\equiv \cut{x:A \; [\nilm]\; \ou y:\dual A}{Q}{P}
$.}

\hide{
2. The actions enabled on $P^\dagger$
and $P$ are the same, so $P$ imitates, in one step, the
$\CLL$ reduction corresponding to the sequential application of
the given positive and negative rule in $\Blang$. 
}

\hide{

3.  Assume
$P^\dagger \tobxz{} P' \tobnx{m} Q$ (a). 
Let us call \emph{main cut} the cut $m$ in which the last
reduction step $P' \tobnx{m} Q$ acts.
To enable this last (negative) step in a reduction sequence starting from $P^\dagger$, there must exist the matching
prior positive reduction step in the main cut within $P^\dagger \tobxz{} P'$.
Iterating Lemma~\ref{lemma:comut}(1), 
we commute reductions in (a) so that  $P^\dagger \tobxz{} C \tobxz{} P' \tobn  Q$ (b) for some $C$,
where all reductions in $P^\dagger \tobxz{m}  C$ and $P' \tobnx{m}  Q$ occur in the main cut $m$ and no reduction in $C \tobxz{} P'$ occurs in the main cut.
By iterating Lemma~\ref{lemma:comut}(2), 
we can commute reductions in (b) so that (c)
$P^\dagger \tobxz{} C' \tobn D \tobxz{} Q$ and
all reductions in $P^\dagger \tobxz{m} \; C'$ and
$C' \tobnx{m} D$
are on the main cut.
By iterating Lemma~\ref{lemma:comut}(3), 
we we can commute reductions in (c) so that (d)
$P^\dagger  \tobx{m} \tobnx{m}  Q' \tobxz{} Q$.
Hence by picking $R$ such that $R^\dagger = Q'$,
by 2 we conclude
$P \to R$, and $R^\dagger \tobxz{} \; Q$.

4. Similar to 1 for the [fwdB] reduction case.
}
\cllb*
\begin{proof}
1. Iterating Lemma~\ref{lemma:sim-trivial}. 
2.
We first check (A) if $P^\dagger \tobanm{} \tobn Q$ then either there is $R$ such that
$P\to R$ and $R^\dagger \tobanm{} Q$ or
there is $R$ such that
$P\to R$ and $R^\dagger \tobanm{} \tobn Q$. 
Subproof: Assume $P^\dagger \tobanm{} \tobn Q$.
By Lemma~\ref{lemma:simul}  either (2)
$ P^\dagger \tor{} R'$ and $R' \tobanm{}  {\;}  Q$ for some $R'$, or (3)
$ P^\dagger \tobaxm{\epsilon} \tor{} R' $ and $R' \tobanm{} Q$ for some $R'$
((1) cannot apply, since in $P^\dagger$ all queues are empty).
In case (a) we have $P \to R$ and $R' \tobanm{} Q$ with $R'=R^\dagger$ for
some $R$ since all cuts are empty in $R'$.
In case (b) $ P^\dagger \tobaxm{\epsilon} P' \tor{} R' $ and $R' \tobanm{} Q$.
Then $P \Rightarrow Q$ by [fwd] and $Q \to R'$,
for some $Q$ where $Q^\dagger = P'$. So $P\Rightarrow R'$ and 
 $R'=R^\dagger$ for
some $R$ since all cuts are empty in $R'$. To conclude (2), we iterate (A).

\end{proof}

\subsection{Proofs of Section~\ref{sec:soundcore}: Correctness of the core SAM}

\begin{lemma}[Sanity]
\label{lemma:sanity}

\noindent
1. Let $(P,H) \Mapsto (P',H')$. Then if $\fn{P}\subseteq dom(H)$ then
 $\fn{P'}\subseteq dom(H')$.

\noindent

\end{lemma}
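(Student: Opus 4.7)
The plan is to proceed by case analysis on the SAM transition rule used to derive $(P,H)\Mapsto(P',H')$. However, as stated the lemma is not directly amenable to a transition-by-transition induction because several rules --- notably \text{\rm[Sfwd]}, \text{\rm[S$\bot$]}, \text{\rm[S$\parl_\pm$]}, \text{\rm[S$\with$]}, and \text{\rm[S$-$]} --- promote into the running slot a process $Q$ (or the body $R$ of a closure) that was previously suspended in the heap, and the hypothesis $\fn{P}\subseteq dom(H)$ tells us nothing about such $Q$ or $R$. First I would therefore strengthen the statement by carrying along an auxiliary heap well-scopedness invariant: for every $\srecs{x}{q}{Q}{y}\in H$, $\fn{Q}\subseteq\{x,y\}\cup dom(H)$, and for each value occurring in $q$, the tokens $\checkm$ and $\labl{l}$ are name-free while every closure $\clos(z,R)\in q$ satisfies $\fn{R}\subseteq\{z\}\cup dom(H)$. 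Call a state in which this invariant holds (and the running process is scoped by the heap) \emph{well-scoped}.

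Next I would show that well-scopedness is preserved by every transition. The routine cases are those that merely enqueue tokens or labels (\text{\rm[S$\one$]}, \text{\rm[S$\oplus$]}), or that push a closure (\text{\rm[S$\otimes$]}, \text{\rm[S$!$]}): the closure $\clos(z,R)$ is extracted from the currently running process so $\fn{R}\subseteq\{z\}\cup\fn{\text{pre-state process}}\subseteq\{z\}\cup dom(H)=\{z\}\cup dom(H')$. For \text{\rm[SCut]} the freshly allocated endpoints $x$ (and its dual) enter $dom(H')$, which absorbs the only names potentially escaping $\fn{\cuti{x}{P}{Q}}$. For \text{\rm[S$-$]} the running and suspended processes simply swap positions inside the same record, so the invariant transfers across. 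For \text{\rm[S$\bot$]}, \text{\rm[S$\with$]} and \text{\rm[Sfwd]} the newly running process is the suspended $P$ (resp.\ $Q$), and the invariant on the pre-state heap gives directly $\fn{P}\subseteq\{x,y\}\cup dom(H)$; one then checks that the endpoints that $P$ may still refer to remain in $dom(H')$ (in \text{\rm[S$\bot$]} the consumed record uses only the $\checkm$-terminated session, which $P$ no longer mentions; in \text{\rm[Sfwd]} the freed $x,y$ are replaced by $z,w$ which $P$ and $Q$ can reach through the merged record).

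The subtler cases are \text{\rm[S$\parl_+$]} and \text{\rm[S$\parl_-$]}, where a closure is popped from a queue and a fresh session record with endpoints $w,z$ is installed, while the bookkeeping combinator $\srecss{x}{q}{P}{y}$ may additionally flip the read/write endpoints of the old record when $q$ becomes empty. Here I would argue that $\fn{R}\subseteq\{z\}\cup dom(H)$ by the pre-state invariant applied to the closure, and $\fn{Q}\subseteq\{w\}\cup dom(H)$ by the hypothesis on the running process (after accounting for $y$ being subsumed by the flipped record). Both processes are then scoped by $dom(H')=dom(H)\cup\{w,z\}$. The endpoint swap inside $\srecss{\cdot}{\cdot}{\cdot}{\cdot}$ only renames the record's handle, leaving $dom(H')$ unchanged. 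The original statement (1) of the lemma then follows as the corollary in which one ignores the extra invariant clause.

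The main obstacle will be formulating the invariant tightly enough to discharge the \text{\rm[S$\parl_\pm$]} case: one has to track carefully that the closure body's one ``extra'' free name is exactly the fresh endpoint installed in $H'$, and that the endpoint-flipping shorthand $\srecss{\cdot}{\cdot}{\cdot}{\cdot}$ does not introduce names outside of $dom(H)$. Once the invariant is set up correctly, each case is a direct set-theoretic check.
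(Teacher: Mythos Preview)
The paper states this lemma without proof; it appears as an unargued auxiliary remark in the appendix. Your diagnosis is correct: as literally stated, the single-step claim fails for rules such as \text{\rm[S$-$]} that promote a heap-resident process $Q$ whose free names are unconstrained by the hypothesis $\fn{P}\subseteq dom(H)$. Strengthening to a global well-scopedness invariant on all suspended processes and closure bodies, and then doing a rule-by-rule check, is the right fix. This is precisely what the paper later makes explicit for the environment-based SAM in Section~\ref{sec:sam-mix}, where the analogous property is packaged as the state ``closure'' definition and shown invariant under $\Mapsto$; your proposal is effectively the core-SAM instance of that argument.

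There is, however, a genuine gap in your invariant. Well-scopedness alone does not discharge the cases where endpoints are \emph{removed} from $dom(H)$: in \text{\rm[S$\bot$]} the record's endpoints $x,y$ vanish from $H'$, and in \text{\rm[Sfwd]} the fused endpoints $x,y$ vanish. To conclude $\fn{P'}\subseteq dom(H')$ you must know that the continuation (resp.\ the surviving suspended processes $P,Q$) does not mention those particular names. Your prose gestures at this (``which $P$ no longer mentions''), but the invariant you wrote down does not entail it: nothing in ``$\fn{Q}\subseteq dom(H)$'' prevents $Q$ from holding both endpoints of a session, or the continuation of $\pbot{y}{P}$ from still mentioning $y$. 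You need an additional linearity/disjointness clause---each endpoint in $dom(H)$ is held by exactly one of the running process, a suspended process, or a closure body---which is what typing provides and what the paper tacitly assumes (all states in play arise as $enc(P)$ for well-typed $P$). With that clause added, your plan goes through.
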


Lemma~\ref{app:soundstep} (2) below expresses a useful invariant of session records in 
machine states encoding well-typed processes:
The right (reader) endpoint is of negative type. The left (writer) endpoint is of positive type unless $P=\zero$ or $x\in\fn{P}$.

\begin{lemma}[Soundness]\label{app:soundstep} Let $P\vdashB\emptyset$. 

\noindent
1. If $enc(P) \Mapsto D \expa C$  then there
is $Q$ such that $P \to\cup\equiv Q$ and
$C = enc(Q)$. 

\noindent
2. Let $P(H)$ denote property: 
If $\srecs{x\ass A}{q}{P}{y\ass B} \in H$ then $B-$ and if ($x \in \fn{P}$ or $P=\zero$)
then $A-$ else $A+$.
 
Then if $enc(P)=(\mathcal A,H) \Mapsto (Q,H')$ and $P(H)$ then $P(H')$.
\end{lemma}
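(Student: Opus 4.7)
My plan is to proceed by case analysis on which SAM transition rule is applied from state $enc(P) = (\mathcal A, H)$. Since $enc$ maximally applies the cut-expansion clauses of Definition~\ref{def:encode} until an action appears at the head, we have $P \equiv^\mathsf{B} E[\mathcal A]$ for some cut context $E[-]$ whose buffered cuts correspond exactly to the session records of $H$ (with the polarity marker $\ou{\cdot}$ placed on the endpoint indicated as the write endpoint in the record). The correctness of $enc$ on the target state $\mathcal C$ will then reduce to finding a matching $\Blang$ reduction or congruence step.

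For the purely positive rules [S$\one$] (already shown in the excerpt), [S$\otimes$], [S$\oplus$], and [S$!$], the SAM step corresponds directly to the positive reduction rule of the same name in Fig.~\ref{fig:redB}: each appends the appropriate value to the queue of the cut identified via $\E(x)$, while the continuation $Q$ becomes the next running process and so sits at the head of the rebuilt encoding of $Q$. For [SCut], the cut already has an empty queue, so $Q = P$ up to $\equiv^\mathsf{B}$: the new session record produced by the SAM is exactly what the encoding rule for a cut with positive $A$ would create next. For the forwarder rule [Sfwd], the two session records $\srecs{z}{q_1}{Q}{x}$ and $\srecs{y}{q_2}{P}{w}$ correspond to nested buffered cuts around $\fwd{x}{y}$ with $x$ a read endpoint and $y$ a write endpoint; this is precisely the left-hand side of [fwdp] in Fig.~\ref{fig:redB}, and the resulting merged record matches the reduct.

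The subtler cases are the negative rules [S$-$], [S$\bot$], [S$\parl_{\pm}$], and [S$\with$]. For [S$-$], no $\Blang$ reduction fires: the action $\mathcal A^-(x)$ simply gets captured inside the session record holding $x$ as its write endpoint, and the previously suspended $Q$ becomes the running process. This is justified by structural congruence: moving the action into the cut, i.e., a [CC]/[comm] use of Fig.~\ref{fig:equivB}, leaves the encoding unchanged. For [S$\bot$], [S$\parl_{\pm}$], and [S$\with$], the queue is nonempty at its head with the expected value, and the SAM step matches exactly the corresponding negative $\Blang$ rule ([$\bot$], [$\parl$], [$\with$]). The main obstacle here is the auxiliary operation $\srecss{x}{q}{P}{y}$ which swaps endpoint roles when the queue becomes empty: I must check that this rearrangement coincides with the polarity-setting abbreviation $\cuti{\cdot[\nilm]\cdot}{\cdot}{\cdot}^{\mathsf p}$ used in the corresponding $\Blang$ rules, and that part~2's invariant is what makes the swap unambiguous. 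Specifically, when $q$ empties after popping the closure/label, the reader $Q$ has no further reads on $x$, so by the invariant the remaining type is positive and $Q$ now legitimately takes the write role.

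For part~2, the verification is a routine transition-based induction. The only rules creating new records are [SCut], [S$\parl_{\pm}$], and [S$\oplus$]-style input of sessions; in each case the freshly allocated record gets a positive write endpoint $A^+$ and the inert or suspended counterpart, so the invariant holds. Rules that merely update an existing queue preserve the annotation. Rule [S$-$] writes the running process back into an existing record but only on its \emph{write} endpoint $x$, and since $x \in \fn{\mathcal A^-(x)}$ the second disjunct of $P(H)$ applies, permitting the (now) negative left-hand type. Rule [S$\one$] replaces the stored process by $\zero$, again matching the $P = \zero$ clause. The swap operation $\srecss{}{}{}{}$ is consistent because, as noted above, a just-emptied queue forces the next behaviour of the reader to be positive; dually, the writer suspended on the other side must then be waiting to perform a negative action (rule [S$-$] having suspended it), so after the swap the invariant is restored with the roles exchanged.
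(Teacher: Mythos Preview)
Your plan follows the same approach as the paper's proof: case analysis on the SAM rule fired from $enc(P)=(\mathcal A,H)$, matching each positive, negative, and forward rule to the homonymous $\Blang$ reduction and matching [S$-$] to a pure $\equiv^\mathsf{B}$ step, with the invariant of part~2 verified alongside each case. A few minor slips do not affect the argument's structure: this lemma concerns the core SAM (no exponentials, no environments), so [S$!$] and the reference to $\E(x)$ are out of scope here; [SCut] cannot be the single $\Mapsto$ step because $enc(P)$ already exposes an action at the head; and [S$\oplus$] does not allocate a new record.
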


\begin{proof} 

(1) Let $enc(P) \Mapsto D \expa C$.
Let $enc(P) = (\mathcal{A},H)$ for some action $\mathcal{A}$.
Then  $P \equiv E[\mathcal{A}]$
for some cut context $E[-]$. 
We consider each reduction $\Mapsto$.

(Case of [Sfwd]) 
Let $enc(P) =(\fwd{x}{y}, H[\srecs{z\ass A}{q_1}{P_1}{x}][\srecs{y}{q_2}{P_2}{w}]) \Mapsto D$ where $D=(P_2, H[\srecs{z}{q_2 @ q_1}{P_1}{w}]) \expa C$.
By (2) since $P_1(z)$, then $A^-$.
Therefore $P \equiv E[\fwd{x}{y}] \equiv E'[\cuti{\ou z \; [q_1]\; x]}{P_1}{
\cuti{\ou y \; [q_1]\; w]}{\fwd{x}{y}}{P_2}
}]$.
Then $P\to Q \equiv E'[\cuti{\ou z \; [q_2@q_1]\; w]}{P_1}{P_2}]$ and 
 $enc(Q) = C$.

(Case of [S$\one$]) 
Let $enc(P) =(\pone{x}, H[\srecs{x}{q}{R}{y}]) \Mapsto 
(R, H[\srecs{x}{q @ \checkmark}{\zero}{y}]) = D \expa C$.
Therefore, $P \equiv E[\pone{x}] \equiv E'[\cuti{\ou x \; [q]\; y]}{\pone{x}}{R}]$
and $P\to Q = E'[\cuti{\ou x \; [q@\checkmark]\; y]}{\zero}{R}]$
and $enc(Q) = C$.

(Case of [S$\bot$]) 
Let $enc(P) = (\pbot{y}{R}, {H}[\srecs{ x}{\checkmark}{\zero}{y}])  \Mapsto D$ and    $D= (R, {H})$.

Hence, $P \equiv E[\pbot{y}{R}] \equiv E'[\cuti{\ou x:\emptyset \; [\checkmark]\; y]}{\zero}
{\pbot{y}{R}}]
\to 
Q = E'[R]$   and 
$D \expa C$
and $enc(Q) = C$.

(Case of [S$-$]) 
Let  $ enc(P) = (\mathcal A^-(x), 
\mathcal{H}[\srecs{x\ass A}{{q}}{R}{y}]) \Mapsto D$ with

$D=
(R, H [\srecs{x}{{q}}{\mathcal A^-(x)}{y}])$.
By typing $A-$ and $x\in\fn{\mathcal A^-(x)}$. Hence (2).

Hence, $P \equiv E[A^-(x)] \equiv $

$E'[\cuti{\ou x \; [q]\; y]}{A^-(x)} {R}] 
\equiv  Q$ and
$D \expa C$
and $enc(Q) = C$.

(Case of [S$\otimes$]) 
Let $ enc(P) = ( \potimes{x}{z}{R}{U}, H[\srecs{x}{q}{S}{y}])  \Mapsto D$ 

where $D= (U, H[\srecs{x}{q @ \clos(z,R)}{S}{y}])$.

So $P \equiv E[\potimes{x}{z}{R}{U}] \equiv
E'[\cuti{\ou x \; [q]\; y]}{\potimes{x}{z}{R}{U}}{S}]
\to\\
E'[\cuti{\ou x \; [q @ \clos(z,R)]\; y]}{U}
{S}] 
\equiv Q$ 
and
$D \expa C$
and $enc(Q) = C$.

(Case of [S$\parl$]) 
Let $ enc(P) = ( \pparl{y}{w\ass -}{U}, H[\srecs{x}{\clos(z,R)@q }{S}{y}]) \Mapsto D$ 
where $D= (R, H[\srecs{z}{\nilm}{U}{w}][\srecss{x}{q }{S}{y}])$ and $S(x)$ with $x:A$ and
$A-$ or $S=\zero$.

\noindent
So, $P \equiv E[\pparl{y}{w\ass -}{U}] \equiv
E'[\cuti{\ou x \; [ \clos(z,R)@ q]\; y]}{\pparl{y}{w\ass -}{U}}
{S}] \to Q $.
If  $q\neq \nilm$ then
$Q = E'[ \cuti{\ou x\;[q]\;y}{S}{\cuti{ \ou z\;[\nilm]\; w}{R}{U}}]$.
%
%
If $q= \nilm$, then
$Q = E'[ \cuti{ x\;[\nilm]\;\ou y}{S}{\cuti{ \ou z\;[\nilm]\; w}{R}{U}} ]$.
In both cases, $B \expa C$
and $enc(Q) = C$. 

(Case of [S$\oplus$]) 
Let $ enc(P) = ( \labl{l};R, H[\srecs{x}{q}{S}{y}]) \Mapsto D =
(R, H[\srecs{x}{q @ \labl{l}}{S}{y}])$.
So $P \equiv E[\labl{l};R] \equiv E'[\cuti{\ou x \; [q]\; y]}{\labl{l};R}{S}]
\to 
E'[\cuti{\ou x \; [q @ \labl{l}]\; y]}{R}
{S}]\equiv Q$, and
$D \expa C$
and $enc(Q) = C$.

(Case of [S$\with$])  Similar to [S$\parl$].

\end{proof}

\ready*
\begin{proof}
The property trivially holds for $(P,\emptyset) $ for
$\fn{P}=\emptyset$ and $H=\emptyset$.
We proceed by transition induction, assuming that  
$S=(P,H) $ is ready, and $\mathcal S \Mapsto \mathcal 
S'= (P,H')$, we check that $S'$ is ready.
 
 (Case of [Sfwd])
 Let $R = \fwd{x}{y}$
and
$S = (R,H') \Mapsto  S' =
( P, H'')$, where
$H'=H[\srecs{z}{q_1}{Q}{x} ][\srecs{y}{q_2}{P}{w} ]$
and
$H''=H[\srecs{z}{q_2@q_1}{Q}{w} ]$.
By i.h, $R$ is $H'$-ready and $H'$ is ready.
Thus $z:A$ is negative or void , $Q(z)$ is
$H'$-ready, and
$P(w)$ is $H',\{w\}$-ready.
Then $Q(z)$ is $H''$-ready, and
$H''$ is ready. Then
$P(w)$ is $H''$-ready and $S'$ is ready.

(Case of [SCut])
Let $R = \cuti{\ou x:T \; [q] \; y:B}{P}{Q}$
and
$S = (R,H) \Mapsto  S' =
( P, H')$, where
$H'=H[\srecs{x}{\nilm}{Q}{y} ]$ and $T$ positive.
By i.h., $R$ is $H$-ready hence $P(x)$ is $H,\{x\}$-ready
and $Q(y)$ is $H,\{y\}$-ready. Then $P(x)$ is $H'$-ready, and $H'$ is ready. We conclude that $S'$ is ready.

(Case of [S$\one$])
Let $R = \pone{x}$
and 
$S = (R, H') \Mapsto 
(Q, H'') = S'$
where 
$H'=H[\srecs{x:A}{q}{Q}{y}]$
and
$H''=H[\srecs{ x:\emptyset}{q @ {\checkmark}}{\zero}{y}]$.
By the i.h, $H'$ is ready and $Q(y)$ is $H',\{y\}$-ready.
Hence $Q(y)$ is $H''$-ready, since $x:\emptyset$. 
Clearly, $H''$ is ready, since $\zero(x)$ is $H''$-ready.
We conclude $S'$ ready.

(Case of [S$\bot$])
Let $R = \pbot{y}{U}$, $H'= H[\srecs{x}{{\checkmark}}{\zero}{y}]$
and  $S = (R, 
H') \Mapsto 
(U, H) =  S'$.
By the i.h, 
$R$ is $H'$-ready and $H'$ is ready.
Therefore $U$ is $H$-ready and $H'$-ready.
We conclude $\mathcal S'$ ready.

(Case of [S$-$]) 
Let $R=\mathcal A^-(x)$ and $S = (R,H') \Mapsto (Q,H'') = S'$
where $H'=H[\srecs{x{:}A}{{q}}{Q}{y}]$ and $H''=H [\srecs{x}{{q}}{\mathcal A^-(x)}{y}]$ with $A-$. By i.h., $\mathcal A^-(x)$ is $H'$-ready and $H'$ is ready. Hence, $Q(y)$ is 
$H',\{y\}$-ready. We have $\mathcal A^-(x)$ is $H''$-ready ($x$ positive endpoint).
Thus 
$H''$ is ready, $Q$ is $H''$-ready (since $A$ is negative), and $S'$ is ready.

(Case of [S$\otimes$]) 
Let $R=\potimes{x}{z}{V}{U}$ and $S=(R,H') \Mapsto (U,H'') = S'$
where $H'=H[\srecs{x}{q}{Q}{y}]$ and $H''=H[\srecs{x}{q @ \clos(z,V)}{Q}{y}]$.
By i.h, $H'$ is ready and $R$ is $H'$-ready. 
$V$ is $H',\{z\}$-ready and  $U$ is $H'$-ready.
Then $V$ is $H'',\{z\}$-ready and  $U$ is $H''$-ready.
Also $Q(y)$ is $H',\{y\}$-ready, and $Q(y)$ is $H'',\{y\}$-ready.
We conclude $H''$-ready and thus $S'$ ready.

(Case of [S$\parl$]) 
Let $R=\pparl{y}{w:-}{U}$ and $S=(R,H') \Mapsto (V,H'') = S'$
where $H'=H[\srecs{x}{\clos(z,V)@q }{Q}{y}]$ and $H''= H[\srecs{z}{\nilm}{U}{w}][\srecss{x}{q }{Q}{y}]$.
By i.h, $V$ is $H'$-ready since $z$ is positive, and thus $H''$-ready.
By typing we have $Q(x)$ and thus $Q(x)$ is $H'$-ready.
Also $R$ is $H'$-ready, so $U$ is $H',\{w\}$-ready. 

If $q\neq \nilm$ then
 $H''= H[\srecs{z}{\nilm}{U}{w}][\srecs{x}{q }{Q}{y}]$.
 Then $U$ is $H'',\{w\}$-ready,
   $Q(x)$ is $H''$-ready, $H''$
is ready, and $S'$ is ready.

If $q= \nilm$ then $H''= H[\srecs{z}{\nilm}{U}{w}][\srecs{y}{\nilm }{Q}{x}]$. Then $U$ is $H'',\{w\}$-ready
, $Q(x)$ is $H'',\{x\}$-ready ($x$ changed polarity),
$H''$ is ready, and $S'$ is ready.
The case $w-$ is handled in the same way.

(Case of [S$\oplus$])  
Let $R=\labl{l}\ x;U$ and $S=(R,H') \Mapsto (U,H'') = S'$
where $H'=H[\srecs{x}{q}{Q}{y}]$ and $H''=H[\srecs{x}{q @ \labl{l}}{Q}{y}]$.
By i.h, $H'$ is ready and $R$ is $H'$-ready. Therefore
 $U$ is $H''$-ready.
We conclude that $S'$ is ready.

(Case of [S$\with$]) 
Let $R=\pcasem{y}{\ell}{L}{P_\ell}$ and $S=(R,H') \Mapsto (P_{\labl{l}},H'') = S'$
where $H'=H[\srecs{x}{\labl{l}@q }{Q}{y}]$ and $H''= H[\srecss{x}{q }{Q}{y}]$.
By i.h, $H'$ is ready, $R$ is $H'$-ready, so $Q(x)$ is $H'$-ready, and $P_{\labl{l}}$ is $H'$-ready.

If $q\neq \nilm$ then
 $H''= H[\srecs{x}{q }{Q}{y}]$
is ready and $P_{\labl{l}}$ is $H''$-ready since $Q(x)$ is 
$H'$-ready,  so $S'$ is ready.

If $q= \nilm$ then $H''= H[\srecs{y}{\nilm }{Q}{x}]$. Notice that $Q$ is $H'',\{x\}$-ready  ($x$ changed polarity).
Then $H''$ is ready.
 $P_{\labl{l}}$ is $H''$-ready, so $S'$ is ready.
\end{proof}

\samprogress*
\begin{proof}
Since $P$ is live then $enc(P) =  S = ({\mathcal A},H)$ for some action $\mathcal A$.

(Case of [fwdB]) $\mathcal A = \fwd{x}{y}$.
Wlog, assume that $x{:}A$ is negative, so $y{:}\dual A$ is positive.
 By typing, $x$ and $y$ must be endpoints of different cuts (and session records).
Case $H=H'[\srecs{z}{q_1}{U(z)}{x}][\srecs{y}{q_2}{V(w)}{w}]$.
We have
$S\Mapsto S'$ 
by [Sfwd].
Case $H=H'[\srecs{x}{q_1}{U(z)}{z}][\srecs{y}{q_2}{V(w)}{w}]$.
Then $\fwd{x}{y} = \mathcal A^-(x)$, and  $S\Mapsto S'$ by [S$-$].
\hide{
---

Wlog, assume that $x{:}A$ is positive, so $y{:}\dual A$ is negative.
 By typing, $x$ and $y$ must be endpoints of different cuts (and session records).
Case $H=H'[\srecs{x}{q_1}{U}{z}][\srecs{w}{q_2}{V}{y}]$.
We have
$S\Mapsto S'$ 
by [Sfwd].
Case $H=H'[\srecs{x}{q_1}{U}{z}][\srecs{y}{q_2}{V}{w}]$.
Then $\fwd{x}{y} = \mathcal A^-(y)$, and  $S\Mapsto S'$ by [S$-$].
}

(Case of [$\one$]) 
$\mathcal A = \pone{x}$ so $S\Mapsto S'$ 
by [S$\one$].

(Case of [$\bot$]) 
$\mathcal A = \pbot{y}{R'}$. 
If $H=H'[\srecs{y{:}A}{q}{Q}{x}]$,
then $S\Mapsto S'$ by [S$-$].
Otherwise, let $H=H'[\srecs{x\ass A}{q}{\mathcal A}{y}]$. 
By Lemma~\ref{lemma:ready}, $\mathcal A$ is $H'$-ready,
so $A$ is negative or void.  By Lemma~\ref{lemma:queue-non-empty}  $q\neq\nilm$. By Lemma~\ref{lemma:queue-full}, $q=\checkmark$. So $S\Mapsto S'$ by [S$\bot$].  

(Case of [$\otimes$])  $\mathcal A = \potimes{x}{z}{R}{Q}{}$ so $S\Mapsto S'$ by [S$\otimes$].

(Case of [$\parl$])  $\mathcal A = \pparl{y}{w}{U}$.
If $H=H'[\srecs{y{:}A}{q}{Q}{x}]$,
then $S\Mapsto S'$ by [S$-$].
Otherwise, $H=H'[\srecs{x:A}{q}{Q}{y}]$.
By Lemma~\ref{lemma:ready}, $\mathcal A$ is $H'$-ready,
so $A$ is negative or void.  By Lemma~\ref{lemma:queue-non-empty}  $q\neq\nilm$. By Lemma~\ref{lemma:queue-prf} or Lemma~\ref{lemma:queue-full}, we must have $q=\clos(z,R)@q'$.
So $S\Mapsto S'$ by [S$\parl$].

(Case of [$\oplus$])  By [S$\oplus$].

(Case of [$\with$])  
$\mathcal A = \pcasem{y}{\ell}{L}{P_\ell}$.
If $H=H'[\srecs{y{:}A}{q}{Q}{x}]$,
then $S\Mapsto S'$ by [S$-$].
Otherwise, $H=H'[\srecs{x{:}A}{q}{Q}{y}]$.
By Lemma~\ref{lemma:ready}, $\mathcal A$ is $H'$-ready,
so $A$ is negative or void.  By Lemma~\ref{lemma:queue-non-empty}  $q\neq\nilm$.
By Lemma~\ref{lemma:queue-prf} or Lemma~\ref{lemma:queue-full}, we must have $q=\labl{l}@q'$ so $S\Mapsto S'$ by [S$\with$].
\end{proof}

\subsection{Proofs and Additional Definitions of
  Section~\ref{sec:sam-mix}: The SAM for full $\CLL$}

To alleviate the proofs we assume that only the unrestricted names are recorded in environment, for the linear names we use names as heap locations, and $\alpha$-conversion as needed to generate fresh names.
\begin{definition}[Encode]
\label{def:encodeexp}
Given $P\vdashB\emptyset;\emptyset$
we define $enc(P)=C$ as $enc(\emptyset,P,\emptyset) \stackrel{\mathsf{cut}*}{\Mapsto} C$ where $enc(\E,P,H) \expa C$ is defined by the rules
$$
\begin{array}{clll}
\displaystyle
\frac{
enc(P(x),H[\srecs{ x\ass A}{q}{Q}{y\ass B}])
\expa  C}{
enc(\cuti{\ou x\ass A [q]\; y\ass B]}{P}{Q}, H)\expa  C}
& \mbox{\rm ($A+$ )}
\vspace{6pt}\\
\displaystyle
\frac{
enc(Q(y),H[\srecs{ x\ass A}{q}{P}{y\ass B}] )
\expa  C
}{
enc(\cuti{\ou x\ass A\; [q]\; y\ass B ]}{P}{Q}, H)\expa  C}
\quad \quad & \mbox{\rm ($A-$ or $P=\zero$ )}
\vspace{6pt}\\
\displaystyle
\frac{
enc(\E\subs{\closB(y,\E,R)}{x}, P),H )
\expa  C
}{
enc(\E,\cutBi{x}{y}{R}{P}, H)\expa  C}
\quad \quad & 
\vspace{6pt}\\
enc(\mathcal{A},H) 
\expa 
(\mathcal{A},H) & \mbox{\rm ($A \in \mathcal A$)}
\end{array}
$$
\end{definition}

\begin{lemma}[Soundness-!]\label{app:soundstep} Let $P\vdashB\emptyset;\emptyset$. 

\noindent
1. If $enc(P) \Mapsto D \expa C$  then there
is $Q$ such that $P \to\cup\equiv Q$ and
$C = enc(Q)$. 

\noindent
2. Let $P(H)$ denote property: 
If $\srecs{x\ass A}{q}{P}{y\ass B} \in H$ then $B-$ and if ($x \in \fn{P}$ or $P=\zero$)
then $A-$ else $A+$.
 
Then if $enc(P)=(\E,\mathcal A,H) \Mapsto (\E,Q,H')$ and $P(H)$ then $P(H')$.
\end{lemma}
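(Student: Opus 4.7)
The plan is to extend the proof of Lemma~\ref{app:soundstep} (core SAM soundness) by case analysis on the SAM transition, adding the cases for the exponential rules. For the non-exponential rules, the previous arguments transfer essentially verbatim; the only adaptation is to keep track of the environment $\E$ in each configuration and to appeal to the closedness invariant (stated by the preceding Lemma on $(P,\emptyset,\emptyset) \stackrel{*}{\Mapsto} S$ being closed) so that every free name encountered during a step is correctly resolved to a session-record reference or to a stored exponential closure. Concretely, whenever the original proof invokes the fact that $P \equiv E[\mathcal{A}]$ for some cut context $E[-]$ built out of linear cuts, we now allow $E[-]$ to contain also $\cutPi$ nodes; the extended $enc$ in Definition~\ref{def:encodeexp} explicitly absorbs each such cut into the environment, so the algebraic identity $\cuti{\ldots}{P}{Q} \equiv \cutBi{x}{y}{R}{Q'}$ simulated by a SAM step still lands in the image of $enc$.

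The first genuinely new case is \textrm{[S$!$]}. Here $enc(P) = (\E,\pbang{x}{z}{Q},H[\srecs{a}{q}{\F,R}{b}])$ with $a = \E(x)$, and the SAM steps to $(\F,R,H[\srecs{a}{q @ \clos(z,\E,Q)}{\emptyset,\zero}{b}])$. By unrolling $enc$, $P \equiv E'[\cuti{\ou x\;[q]\; b}{\pbang{x}{z}{Q}}{R}]$, which reduces in $\Blang$ by rule $[!]$ in Fig.~\ref{fig:redB} to $E'[\cuti{\ou x\;[q@\closB(z,Q)]\;b}{\zero}{R}]$. Since the sender terminates ($\zero$) and by part (2) we may legally context-switch to the reader endpoint, the resulting configuration is precisely $enc$ of the reduced process. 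The case \textrm{[S$?$]} is dual: the SAM consumes the single queued exponential closure and binds it in the environment of the continuation $Q$, which matches rule $[?]$ of $\Blang$ producing a $\cutPi$, and the definitional clause for $\cutPi$ in Definition~\ref{def:encodeexp} yields the required equality $\mathcal C = enc(Q)$ on the nose. Finally, \textrm{[Scall+]} and \textrm{[Scall-]} interpret $\pcopy{y}{w}{Q}$ by looking up the stored $\clos(z,\F,R)$ in $\E$ and creating a fresh linear cut between a duplicated instance of $R$ and $Q$; this matches the $\CLL$ rule $[\mathrm{call}]$ (composed with one copying step of the exponential cut), and the branching on the polarity of $w$ is handled exactly as in the non-exponential parallel rules $[S\parl_{\pm}]$.

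For part (2), I proceed by induction on the transition step, exactly as for the core SAM but including the three new cases. In \textrm{[S$!$]} the modified record $\srecs{a}{q@\clos(z,\E,Q)}{\emptyset,\zero}{b}$ satisfies the invariant because the stored suspended code is $\zero$ (so $A$ must be negative or void, which is consistent with the typing $\emptyset$ assigned to the write endpoint after emitting the exponential message), and the other endpoint $b$ remains of negative type because its type is unchanged. \textrm{[S$?$]} deletes the record entirely, which trivially preserves the invariant on the remaining heap. \textrm{[Scall$\pm$]} only inserts a freshly allocated record with an empty queue where the roles of writer and reader are assigned according to the polarity of the communicated session, identically to the pattern already verified for \textrm{[SCut]}.

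The main obstacle is not any individual case but rather the bookkeeping around the environment and the closure invariant: one must check that every rule that allocates a fresh closure (\textrm{[S$\otimes$]}, \textrm{[S$!$]}, \textrm{[S$\parl$]}, \textrm{[Scall$\pm$]}) stores an environment that is $H'$-closed in the new heap, and that every rule that retrieves a closure (\textrm{[S$\parl$]}, \textrm{[S$?$]}, \textrm{[Scall$\pm$]}) produces a process that is $\E'$-closed under the updated environment. Once this closedness discipline is in place, Lemma~\ref{lemma:queue-prf} and Lemma~\ref{lemma:queue-full} (which already accommodate exponential closures in queues via the typing judgment $\Gamma;\vdash \closB(z,R){:}?\dual A$) provide exactly the typing information needed to identify the $\Blang$ reduct to which each SAM step corresponds, yielding $P \to \cup \equiv Q$ with $\mathcal C = enc(Q)$ as required.
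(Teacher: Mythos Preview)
Your proposal is correct and follows essentially the same approach as the paper: a case analysis on the SAM transition, extending the core soundness argument with the three exponential cases \textrm{[S$!$]}, \textrm{[S$?$]}, and \textrm{[Scall$\pm$]}, each matched to the corresponding $\Blang$ reduction rule. The paper's proof is terser and, as noted just before Definition~\ref{def:encodeexp}, sidesteps most of the environment bookkeeping by recording only unrestricted names in $\E$ and treating linear names as heap locations via $\alpha$-conversion; your additional discussion of the closedness invariant is sound but more than the paper actually invokes.
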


\begin{proof} 

(1) Let $enc(P) \Mapsto D \expa C$.
Let $enc(P) = (\mathcal{A},H)$ for some action $\mathcal{A}$.
Then  $P \equiv E[\mathcal{A}]$
for some cut context $E[-]$. 
We consider each reduction $\Mapsto$.

(Case of [S$!$]) 
Let $enc(P) =(\E,\pbang{x}{z}{U}, H[\srecs{x}{q}{R}{y}]) \Mapsto D$ where
$D = (R, H[\srecs{x}{q @ \closB(z,\E,U)}{\zero}{y}])  \expa C$.

Therefore, $P \equiv E[\pbang{x}{z}{U}] \equiv E'[\cuti{\ou x \; [q]\; y]}{
\pbang{x}{z}{U}}{R}]$
and $P\to Q = E'[\cuti{\ou x \; [q@\closB(z,\E,U)]\; y]}{\zero}{R}]$
and $enc(Q) = C$.

(Case of [S$?$]) 
Let $enc(P) = (\E,\pwhy{y}{R}, {H}[\srecs{ x}{\closB(z,\E,U)}{\zero}{y}])  \Mapsto D$ and    $D= (\E'\subs{y}{\closB(z,\E,U)}, R, {H}) \expa C$.

Hence, $P \equiv E[\pwhy{y}{R}] \equiv E'[\cuti{\ou x:\emptyset \; [\closB(z,\E,U)]\; y]}{\zero}
{\pwhy{y}{R}}]
\to 
Q = E'[R]$   and 
$D \expa C$
and $enc(Q) = C$.

(Case of [Scall]) 
Let $ enc(P) = (\E,\pcopy{y}{w\ass -}{U}, H) \Mapsto D$ 
where $D= (R, H[\srecs{z}{\nilm}{U}{w}]) \expa C$.
Let $\E(y) = \closB(z,\E,R)$.

\noindent
So, $P \equiv E[\pcopy{y}{w\ass -}{U}] \equiv
E'[\cutBi{y}{z}{R}{\pcopy{y}{w\ass -}{U}}
] \to Q $.

Then
$Q = E'[\cutBi{x}{y}{P}{\cuti{z [\nilm] w}{R}{{U}}} ]$
and $enc(Q) = C$. 
\end{proof}

\begin{lemma}[Readiness-!]\label{lemma:ready-fullsam}
Let $P\vdash\emptyset;\emptyset$ and $(\emptyset,P,\emptyset) \stackrel{*}{\Mapsto} S$.
Then $\mathcal S$ is ready.
\end{lemma}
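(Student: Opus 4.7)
The plan is to proceed exactly as in the proof of Lemma~\ref{lemma:ready} for the core SAM, i.e., by transition induction on $(\emptyset,P,\emptyset) \stackrel{*}{\Mapsto} S$. The base case is immediate: since $P \vdash \emptyset;\emptyset$ we have $\fn{P}=\emptyset$, the heap is empty, and the environment is empty, so the readiness conditions are vacuously satisfied.

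For the inductive step, first I would lift the notion of readiness (Def.~\ref{def:ready}) to environment-based configurations in the natural way: a process $P$ is $(\E,H,N)$-ready if for every $y \in \fn{P}\setminus N$ the endpoint $\E(y)$ (if it points into $H$) has negative or void type, and for every $y$ with $\E(y) = \closB(z,\F,R)$ the body $R$ is $(\F,H,\{z\})$-ready; a heap $H$ is ready if for every record $\srecs{x}{q}{\G,Q}{y}\in H$ the conditions (1)--(4) of Def.~\ref{def:ready} hold, now relativised to the captured environment $\G$ (resp.\ $\F$ for each linear/exponential closure stored in $q$). This lifting is essentially syntactic and is designed so that for all cases already handled by the core SAM, the argument carries over verbatim, modulo the bookkeeping of environments introduced by Fig.~\ref{fig:final-sam}.

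The new cases to handle are the three exponential rules. For [S$!$], the running process $\pbang{x}{z}{Q}$ is replaced by $\zero$ on the write endpoint of the record for $x$, and $\clos(z,\E,Q)$ is appended to the queue. By IH $\pbang{x}{z}{Q}$ is $(\E,H)$-ready, and from the typing rule [T$!$] we know $Q$ depends only on $z$ and on unrestricted names; hence the fresh closure satisfies condition (4) of the lifted heap-readiness predicate, and the resulting state is ready. For [S$?$], the exponential closure $\clos(z,\F,R)$ is moved from a queue of the form $\srecs{a}{\clos(z,\F,R)}{\emptyset,\zero}{b}$ into $\E$, and the continuation $Q$ is scheduled. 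By IH both $\F,R$ and $Q$ are already ready in $H$; since the heap record is discarded but its ready contents are preserved in the extended environment $\E'$, readiness is maintained. For [Scall$\pm$], the closure $\clos(z,\F,R)$ is fetched from $\E$ (which is ready by IH), a fresh session record is allocated, and one of $Q$, $R$ is scheduled with the appropriate polarity. The resulting record's contents are ready because they were already ready in $\E$, and the write-bias discipline of Fig.~\ref{fig:final-sam} selects the writer endpoint exactly as in [SCut], so conditions (1)--(2) hold for the new record.

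The main technical obstacle is showing that the lifted readiness predicate for environments really does interact soundly with [S$?$] and [Scall$\pm$]: because an exponential closure may be called arbitrarily many times and each call spawns a fresh session record, we must ensure that the ``captured'' ready\-ness invariant travels with the closure. This reduces to checking that when a closure $\clos(z,\F,R)$ is placed into an environment by [S$?$], every subsequent transition only extends $H$ to some $H' \supseteq H$ with the old records preserved (up to the updates handled by [S$\parl$], [S$\with$]), so $\F$ remains $H'$-closed and $R$ remains $(\F,H',\{z\})$-ready. This monotonicity is a small strengthening of Lemma~\ref{lemma:sanity} and suffices to make the inductive argument go through uniformly.
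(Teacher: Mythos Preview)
Your overall plan (transition induction, lifting readiness to environment-based states, reusing the core-SAM cases verbatim) is the same as the paper's. However, the ``main technical obstacle'' you identify is handled differently, and your proposed resolution has a gap.

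You propose to show that an exponential closure $\closB(z,\F,R)$ stays ready by arguing that subsequent transitions only extend the heap, so that $\F$ remains $H'$-closed and $R$ remains $(\F,H',\{z\})$-ready. But the heap is \emph{not} monotone: rules [S$\bot$] and [Sfwd] deallocate or merge session records, so ``$H' \supseteq H$ with old records preserved'' fails. The paper sidesteps this entirely by exploiting typing: by [T$!$] the body $R$ of an exponential satisfies $R \vdashB z{:}A;\Gamma$, i.e.\ its \emph{only} free linear name is $z$. Hence $R$ is $(H,\{z\})$-ready for \emph{every} heap $H$, and readiness persists trivially across all later transitions --- no monotonicity argument is needed. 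You actually state this observation in your [S$!$] case (``$Q$ depends only on $z$ and on unrestricted names''), but then abandon it for [Scall$\pm$] and revert to the monotonicity route; you should instead reuse it there.

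Two smaller points. First, you omit the case for the exponential cut (the $\expa$ step that places $\closB(y,\E,R)$ into the environment); the paper treats this as a separate [SCut!] case, which is immediate since $x$ is unrestricted and does not affect linear readiness. Second, the paper avoids your full environment-lifted readiness predicate by the simplifying convention (stated just before Def.~\ref{def:encodeexp}) that environments record only unrestricted names, with linear names used directly as heap references; under this convention the original Def.~\ref{def:ready} applies essentially unchanged.
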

\begin{proof}
The property trivially holds for $(\emptyset,P,\emptyset) $ for
$\fn{P}=\emptyset$ and $H=\emptyset$.
We proceed by transition induction, assuming that  
$S=(\E,P,H) $ is ready, and $\mathcal S \Mapsto \mathcal 
S'= (\E',P,H')$, we check that $S'$ is ready.

(Case of [SCut!])
Let $R = \cutBi{x}{y}{R}{Q}$
and
$S = (\E,Q,H) \Mapsto  S' =
( \E' P, H)$, where
$E'=\E\subs{\closB(y,\E,R)}{x}$.
By i.h., $P$ is $H$-ready but then $Q$ is $H$ since $x$ is unrestricted.
We conclude that $S'$ is ready. $R$ is also $H,\{y\}$-ready, since by typing
the only free linear  name in $R$ is $y$.

(Case of [S$!$])
Let $R = \pbang{x}{z}{U}$
and 
$S = (\E, R, H') \Mapsto (\G, Q, H'') = S'$
where 
$H'=H[\srecs{x:!A}{q}{\G,Q}{y}]$
and
$H''=H[\srecs{ x:\emptyset}{q @ \closB(z,\E,U)}{\emptyset,\zero}{y}]$.

By the i.h, $H'$ is ready and $U(z)$ is $H',\{z\}$-ready.

Hence $Q(y)$ is $H''$-ready, since $x:\emptyset$. 

Clearly, $H''$ is ready, since $\zero(x)$ is $H''$-ready.
We conclude $S'$ ready.

(Case of [S$?$])
Let $R = \pwhy{y}{Q}$, $H'= H[\srecs{x}{{\closB(z,\E,U)}}{\emptyset,\zero}{y}]$
and  $S = (\E,R, 
H') \Mapsto 
(\E\subs{\closB(z,\E,U)}{y}, Q, H) =  S'$.
By the i.h, 
$R$ is $H'$-ready and $H'$ is ready.
Therefore $Q$ is $H$-ready and $H'$-ready.
Also $U$ is $H',\{z\}$-ready (no linear names).
We conclude $ S'$ ready.

(Case of [Scall]) 
Let $R=\pcopy{y}{w:-}{U}$ and $S=(\E,R,H) \Mapsto (\F,V,H') = S'$
where $\E(y)= \closB(z,\F,V)$ and 
 $H'= H[\srecs{z}{\nilm}{\E,U}{w}]$.

By i.h, $U$ is $H,\{w\}$-ready and 
$V$ is $H$-ready since $z$ is positive.
Hence $U$ is $H',\{w\}$-ready, so $H'$ is ready,
and $V$ is $H'$-ready.
$U$ is $H',\{w\}$-ready. 
We conclude $ S'$ ready.

\end{proof}

\begin{theorem}[Progress-!]
Let $P\vdash\emptyset;\emptyset$ and $P$ live. Then $enc(P) = S \Mapsto S'$.
\end{theorem}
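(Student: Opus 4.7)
The plan is to follow the same structure as the proof of Theorem~\ref{samprogress}, extending it with the new cases introduced for the exponentials and for the environment-based machine. Since $P$ is live, by construction of $enc$ (Definition~\ref{def:encodeexp}) the sequence $(\emptyset,P,\emptyset) \expa \cdots$ exhausts every outer cut and exponential cut, yielding a configuration $enc(P) = (\E, \mathcal{A}, H)$ where $\mathcal{A}$ is some action of $\CLL$. The proof proceeds by case analysis on $\mathcal{A}$, appealing to Lemma~\ref{lemma:ready-fullsam} (readiness of reachable states) and to the typed queue inversion lemmas (Lemma~\ref{lemma:queue-prf}, Lemma~\ref{lemma:queue-full}, and Lemma~\ref{lemma:queue-non-empty}) whenever I need to argue that the relevant queue has the expected shape.

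First I would dispatch the cases already covered by Theorem~\ref{samprogress} (those for $\fwd{}{}$, $\one$, $\bot$, $\otimes$, $\parl$, $\oplus$, $\with$, $\mathsf{cut}$): the reasoning is literally the same, except that free names are now resolved through $\E$ rather than being identified with heap references. The only substantive novelty is in the three new action shapes. For $\mathcal{A} = \pbang{x}{z}{U}$, by typing $x$ is of positive type $!A$, so by the invariant in Lemma~\ref{app:soundstep}(2) $x$ must denote the write endpoint of some record $\srecs{x}{q}{\G, Q}{y} \in H$, and transition [S$!$] applies unconditionally. For $\mathcal{A} = \pcopy{y}{w}{Q}$, the name $y$ is an unrestricted (exponential) name, so it does not correspond to a heap reference; by closure of the state (proved along with Lemma~\ref{lemma:ready-fullsam}), $\E(y)$ is defined and, since $y$ has type $A$ in the exponential context, $\E(y)$ must be of the form $\closB(z,\F,R)$, hence rule [Scall$+$] or [Scall$-$] applies depending on the polarity of $w$.

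The key case, and the one I expect to be the main obstacle, is $\mathcal{A} = \pwhy{y}{R}$. Here $y\ass ?A$ is negative, so I branch on whether $\E(y)$ points to the write or the read endpoint of a heap record. If $y$ is a write endpoint, rule [S$-$] applies immediately (as with other negative actions on write endpoints). If $y$ is a read endpoint $\srecs{x}{q}{\emptyset,\zero}{y}$, I must establish that $q$ has the exact form prescribed by rule [S$?$], namely $q = \closB(z,\F,R)$ with an inert process on the write side. For this, I use readiness of $(\E,\mathcal{A},H)$, which forces the left-hand type of the record to be void (since $\mathcal{A}$ is not $\zero$ and $x \notin \fn{\mathcal{A}}$), combined with Lemma~\ref{lemma:queue-non-empty} to conclude $q \neq \nilm$, and finally Lemma~\ref{lemma:queue-full} to identify the last queue element as a $\closB$-closure. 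Since $y$ is precisely the read endpoint and its type $?A$ corresponds to consuming exactly this closure, I moreover need that this closure is at the \emph{head} of the queue; this follows because positive prefixes of the session type at $y$ have already been consumed by earlier [S$-$]/[S$?$]/[S$\with$]/[S$\parl$] transitions, a property maintained by the very same readiness invariant established for each step in Lemma~\ref{lemma:ready-fullsam}. With this in hand, rule [S$?$] applies and $S \Mapsto S'$, completing the case analysis.
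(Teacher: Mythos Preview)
Your overall strategy matches the paper's exactly: reduce to an action state via $enc$, reuse the case analysis of Theorem~\ref{samprogress} for the multiplicative/additive actions, and add the three exponential cases. The paper in fact only spells out the [$?$] case and leaves the rest implicit, so your coverage of [$!$] and [Scall] is fine.

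Where your argument drifts from the paper is in the details of the [$?$] case. First, readiness does \emph{not} by itself force the write-endpoint type to be void: Definition~\ref{def:ready} only yields ``negative or void'', and your parenthetical ``(since $\mathcal{A}$ is not $\zero$ and $x \notin \fn{\mathcal{A}}$)'' refers to the running process, whereas the invariant you seem to have in mind (Lemma~\ref{app:soundstep}(2)) concerns the \emph{suspended} process in the record. The way the paper closes this is purely by typing: if the write-endpoint type $A$ were negative, Lemma~\ref{lemma:queue-prf} would give $B = \mathbf{E}_k;\dual{A}$, but $B = {?}C$ has no $\parl/\with$ prefix, so $k=0$ and $A = {!}\dual{C}$, which is positive---contradiction. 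Hence $A$ is void, and only then does Lemma~\ref{lemma:queue-full} apply.

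Second, your operational argument that the $\closB$ closure sits at the \emph{head} of the queue (``positive prefixes \dots\ have already been consumed by earlier transitions'') is unnecessary and would be hard to make rigorous as stated. Once you know the write endpoint is void and the reader type is $B = {?}C$, Lemma~\ref{lemma:queue-full} gives $B = \mathbf{E}_{k-1};{?}C$ with $k=1$, i.e.\ the queue is exactly the single element $\closB(z,\F,R)$. No appeal to the execution history is needed; the paper simply writes ``By Lemma~\ref{lemma:queue-full}, we must have $q=\closB(z,\G,R)$'' and applies [S$?$].
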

\begin{proof}
Since $P$ is live then $enc(P) =  S = (\E,{\mathcal A},H)$ for some action $\mathcal A$. The interesting case is:

(Case of [$?$])  $\mathcal A = \pwhy{y}{U}$.
If $H=H'[\srecs{y{:}A}{q}{\F,Q}{x}]$,
then $S\Mapsto S'$ by [S$-$].
Otherwise, $H=H'[\srecs{x:A}{q}{Q}{y}]$.
By Lemma~\ref{lemma:ready-fullsam}, $\mathcal A$ is $H'$-ready,
so $A$ is negative or void.  By Lemma~\ref{lemma:queue-non-empty}  $q\neq\nilm$. By Lemma~\ref{lemma:queue-full}, we must have $q=\closB(z,\G,R)$.
So $S\Mapsto S'$ by [S$?$].
\end{proof}

\subsection{Proofs of Section~\ref{ref:mix}: Concurrent Semantics of
  Cut and Mix}
\label{app:mix}

Here we sketch the basic structure of progress for the SAM with 
concurrent cut and sequential threads.  First we define the encoding
of $\Blang$ processes with annotated concurrent cuts (such annotation
is silent for any purpose other than the concurrent
execution strategy in the SAM). For simplicity we address core SAM (no exponentials) and concurrent cuts.
The proof can be easily extended to any other pair of dual types types,
and already allow the concurrent non-deterministic sychronisation of
different threads.
\begin{figure}
$$
\begin{array}{clll}
\displaystyle
\frac{
enc(P,H)
\expa  (A,H')
\quad
enc^c(T,H') \expa  (T',H'')
}{
enc^c(\{P\}\uplus T,H') \expa  (\{A\}\uplus T',H'')}
\quad \quad & \rm{[thr]}
\vspace{6pt}\\
\displaystyle
\frac{
enc^c(T,H) 
\expa  C
}{
enc^c(\{\zero\}\uplus T,H) \expa  C
}
\quad \quad & \rm{[\zero]}
\vspace{6pt}\\
\displaystyle
\frac{
enc^c(\{P,Q\}\uplus T,H) 
\expa  C
}{
enc^c(\mixP \{\mix{P}{Q}\} \uplus T,H) \expa  C
}
\quad \quad & \rm{[mix]}
\vspace{6pt}\\
\displaystyle
\frac{
enc^c(\{P,Q\}\uplus T,H[\srecc{x}{q}{y}]) 
\expa  C
}{
enc^c(\cutp{x:A^+ [q] y:B}{P}{Q}\uplus T,H) \expa  C
}
\quad \quad & \rm{[pcut]} 
\end{array}
$$
\caption{Encoding of annotated $\Blang$  of into concurrent SAM}
\end{figure}
\begin{figure}
$$
\begin{array}{ll}
(\pone{x}, H[\srecc{x}{q}{y}]) \Mapsto 
(\zero, H[\srecc{x}{q @ {\checkmark}}{y}])\quad\quad\;&  \text{\rm[S$\one$c]} 
\vspace{6pt}\\
(\pbot{y}{ P}, 
H[\srecs{x}{{\checkmark}}{y}]) \Mapsto 
(P, H) &  \text{\rm[S$\bot$c]}
\vspace{6pt}\\
(\potimes{x}{z}{R}{Q}, H[\srecc{x}{q}{y}]) \Mapsto 
(Q, H[\srecc{x}{q @ \clos(z,R)}{y}]) & \text{\rm[S$\otimes$c]}
\vspace{6pt}\\
(\pparl{y}{w{:}-}{Q}, H[\srecc{x}{\clos(z,R)@q }{y}{}]) \Mapsto 
(R, H[\srecs{z}{\nilm}{ Q}{w}][\srecss{x}{q }{y}{y}])
& \text{\rm[S$\parl-$c]}
\vspace{6pt}\\
(\fwd{x}{y}, H[\srecc{z}{q_1}{x}[\srecc{y}{q_2}{w}]]) \Mapsto 
(\zero, H[\srecc{z}{q_2@q_1}{w}])\quad\quad\;&  \text{\rm[SfwdPc]} 
\\
\\
\srecss{a}{q}{}{b} \deff \mbox{if }(q = \nilm)\mbox{ then }
\srecc{b}{q}{}{a}\mbox{ else }
\srecc{a}{q}{}{b}
\end{array}
$$
\caption{Transition rules for concurrent actions (sample)}
\end{figure}
\begin{figure}
$$
\begin{array}{c}
\displaystyle
\frac{
(P, H) \Mapsto (P', H') 
}{
(P\uplus T, H) \Mapsto (P'\uplus T, H') 
}\; \text{\rm[Srun]}
\quad
\displaystyle
\frac{
(T, H) \Mapsto (T', H') 
}{
(\zero\uplus T, H) \Mapsto (T', H') 
}\quad \text{\rm[S$\zero$p]}
\vspace{6pt}\\
(\cutp{x:A^+}{P}{Q}\uplus T, H) \Mapsto 
(\{P,Q\subs{y}{x}\}\uplus T\}, H[\srecc{x}{\nilm}{y}] ) \quad \text{\rm[SCutp]}
\vspace{6pt}\\
((\mix{P}{Q})\uplus T, H) \Mapsto 
(\{P,Q\}\uplus T\}, H[\srecc{x}{\nilm}{y}] ) \quad \text{\rm[SMixp]}
\end{array}
$$
\caption{Transition rules for concurrent actions}
\end{figure}

\noindent 
We show that SAM execution 
on well-typed processes simulates $\Blang$ reduction. 
\samcsound*

\begin{proof}
Induction of $enc(P) \expa B$.
\end{proof}

To prove progress along the lines of Theorem~\ref{theorem:progress}
we introduce a simplified notion of observation (cf. Fig.~\ref{fig:obs}) for 
our scenario of concurrent
actions on heap allocated concurrent sessions, as follows. 

\hide{
$\obs{\pone{x}}{x}^H$ iff $\srecc{x}{q}{y}\in H$.

$\obs{\pbot{y}{Q}}{y}^H$ iff $\srecc{x}{q}{y}\in H$.
}

\vspace{6pt}
$\obs{P}{x}^H$ iff $\obs{P}{x}$ and ($\srecc{x}{q}{y}\in H$ or $\srecc{x}{q}{y}\in H$).

\begin{lemma}[Liveness-c]
\label{livenessc} Let  $P\vdash \Delta$ and $enc^c(P,H) \expa (T, H')$.
For every non-empty $S\subseteq T$
 there is $P_i\in S$ such that (a) $(P_i,H')\Mapsto C$ or (b) 
  $\obs{P_i}{y}^{H'}$.
\end{lemma}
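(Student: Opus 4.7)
The claim reduces to the uniform per-thread property that every $P_i \in T$ satisfies (a) or (b), since the subset quantifier can always be instantiated to singletons $S = \{P_i\}$. The plan is to prove this uniform progress statement by leveraging the sequential SAM progress theorem (Theorem~\ref{samprogress}) applied thread-locally, together with the direct structure of the concurrent transition rules.

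First, I would strengthen the induction by establishing, by induction on the derivation of $enc^c(P, H) \expa (T, H')$, a structural invariant on the resulting state: the heap $H'$ partitions into sequential session records of the form $\srecs{x}{q}{Q}{y}$, each uniquely owned by one thread $P_i$ and satisfying the readiness condition of Definition~\ref{def:ready}, together with concurrent records $\srecc{x{:}A}{q}{y{:}B}$ introduced by [SCutp] or [SMixp], with $B$ negative and such that the thread holding $x$ interacts only via positive actions while the thread holding $y$ interacts only via negative actions (as enforced by the typing of $\cutPc$ and $\mixD$, together with the restriction that concurrent channels can only be forwarded to concurrent channels). A straightforward adaptation of Lemma~\ref{lemma:ready} shows this invariant is established by the encoding.

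With the invariant in place I fix any $P_i \in T$ and case split. If $P_i = \zero$, rule [S$\zero$p] gives (a). Otherwise, $P_i$ is live; running the sequential SAM locally on $P_i$ against the sequential fragment of $H'$, Theorem~\ref{samprogress} together with the readiness invariant says that either a sequential step applies---so [Srun] yields (a)---or execution reaches a top-level action $\mathcal{A}$ acting on an endpoint of some concurrent record $\srecc{x}{q}{y} \in H'$ (the only remaining way sequential execution can halt). If $\mathcal{A}$ is a positive action (a close, send, selection, or a forward on two concurrent endpoints), the corresponding concurrent rule ([S$\one$c], [S$\otimes$c], [S$\oplus$c], [SfwdPc], \ldots) directly applies, giving (a). If $\mathcal{A}$ is a negative action on the read endpoint $y$, it either steps when the head of $q$ matches (case (a)) or has $\obs{P_i}{y}$ with $y$ an endpoint of a concurrent record of $H'$, so $\obs{P_i}{y}^{H'}$ holds, giving (b). The polarity invariant rules out the ill-formed combinations of a negative action at $x$ or a positive action at $y$ on a concurrent record.

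The main obstacle is the precise characterization of local sequential ``stuckness'': one must show that whenever the per-thread sequential SAM fails to progress on $P_i$ against its own sequential records, this is exactly because $P_i$ has reached an action on a concurrent endpoint. This mirrors the proof of Theorem~\ref{samprogress}, but must be carried out against the strengthened invariant; the key observation is that sequential records are never shared across thread boundaries and the concurrent rules [SCutp], [SMixp], [Srun] preserve the per-thread readiness property, so the sequential progress argument transfers verbatim to the restriction of $H'$ owned by the thread under consideration.
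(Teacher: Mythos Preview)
Your reduction of the lemma to the uniform per-thread statement is logically valid: the $\forall S$ quantifier over non-empty subsets is indeed equivalent to ``every $P_i\in T$ satisfies (a) or (b)'', and your thread-local argument (sequential progress on the thread's own records, positive concurrent actions always fire, negative concurrent actions either fire or satisfy $\obs{P_i}{y}^{H'}$) establishes that uniform statement. So for the lemma \emph{as literally stated}, your proof is correct and considerably simpler than the paper's.

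However, your approach and the paper's are doing different things, and the difference matters downstream. The paper's proof does \emph{not} treat the statement as a per-thread property. Instead it proceeds by induction on the $enc^c$ derivation and, in the [cutp] case, instantiates the inductive hypothesis on the subsets $U_Q$ and $U_P$ of threads arising from each side of the concurrent cut. When a $Q$-side thread is observable on the \emph{internal} endpoint $y$ introduced by this cut (empty queue), the paper switches to the $P$-side and argues that some $P$-side thread must be able to perform a positive action on $x$ (or be observable on a yet-more-external endpoint). In other words, the paper's argument is implicitly establishing the stronger invariant that case (b) can always be discharged onto an endpoint \emph{external to the scope of the current encoding}, i.e.\ ultimately onto a name in $\Delta$. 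That is exactly what is needed to derive Theorem~\ref{samcprogress} (Progress-c) via ``$\fn{P}=\emptyset$ so (b) is impossible'': under your reading, (b) can hold for internal concurrent endpoints even when $P$ is closed, so the step from Liveness-c to Progress-c would fail. Your proof wins on simplicity for the stated lemma; the paper's subset-structured induction is what actually yields deadlock-freedom for the concurrent SAM.
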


\begin{proof}
By induction on the derivation of $enc(P,H) \expa (T, H)$.

(Case of [thr]) 
By Theorem~\ref{samprogress} (Progress), we have $(P,H)\Mapsto (A,H')$ 
for any action on a sequential cut. Otherwise $P$ is an action in a 
concurrent cut, and we have (b). By i.h., the property holds for $T',H''$.
Hence it holds for $(\{A\}\uplus T',H'')$.

(Case of [mix]) Let $enc^c(\{P,Q\}\uplus T,H[\srecc{x}{q}{y}]) \Mapsto (T',H')$

By i.h. (a) holds for $T',H'$ and we conclude.

(Case of [cutp]) Let $R = \cutp{\ou x\ass A\; [q] \;y\ass B}{P}{Q}\uplus T$ and $enc(R,H) \Mapsto (T',H')$.
Let $U_Q=T\downarrow \fn{Q}$. By i.h., there is $Q_i\in U_P$ 
such that (a) $(Q_i,H')\Mapsto $ 
or (b)  
$\obs{Q_i}{z}^{H'}$.
In case (a) we conclude (a). Otherwise assume (b).
If $z\neq y$ then
(a) holds for $T',H'$. 
If $z=y$ then (by typing) $B$ is negative
and $Q_i = {\mathcal A}^{B}(y)$ is a negative 
action on $y$ of type $B$ and $\srecc{w}{q}{y}\in H$.

If $q\neq\nilm$ then we have $(Q_i, H) \Mapsto$ by [S$B$c] (we denote 
by [S$B$c] the negative reduction rule associated to the negative type $B$).

Otherwise $q=\nilm$ and $A = \dual B$ so $P\neq\zero$.

Let $U_P=T\downarrow \fn{P}$. 
By i.h., there is $P_i\in U_P$ 
such that $(P_i,H)\Mapsto $ 
or (b) 
and $\obs{P_j}{u}^{H'}$.
In case (a) we conclude. Otherwise assume (b).
If $u\neq x$ then
(a) holds for $T',H'$. 
So $u=x=w$. Now, if $A$ is negative or void, then by Lemma~\ref{lemma:queue-non-empty} $q\neq\nilm$, contradiction. So $A$ is positive and $P_j={\mathcal A}^A(x)$ is a positive
action on $x$ of type $A$. So we
have $(P_j, H'[\srecc{x}{q}{y}]) \Mapsto (\zero,H'[\srecc{x}{q@c}{y}])$ by [S$A$c]
and $(Q_i, H'[\srecss{x}{q}{}{y}])  \Mapsto C$ by [S$B$c].
\end{proof}

\hide{
\begin{proof}
By induction on the derivation of $enc(P,H) \expa (T, H)$.

(Case of [thr]) 
By Theorem~\ref{samprogress} (Progress), we have $(P,H)\Mapsto (A,H')$ 
for any action on a sequential cut. Otherwise $P$ is an action in a 
concurrent cut, and we have (b). By i.h., the property holds for $T',H''$.
Hence it holds for $(\{A\}\uplus T',H'')$.

(Case of [mix]) Let $enc^c(\{P,Q\}\uplus T,H[\srecc{x}{q}{y}]) \Mapsto (T',H')$

By i.h. (a) holds for $T',H'$ and we conclude.

(Case of [cutp]) Let $R = \cutp{x:\one [q] y:\bot}{P}{Q}\uplus T$ and $enc(R,H) \Mapsto (T',H')$.
Let $U_Q=T\downarrow \fn{Q}$. By i.h., there is $Q_i\in U_P$ 
such that (a) $(Q_i,H')\Mapsto $ 
or (b)  
$\obs{Q_i}{z}^{H'}$.
In case (a) we conclude (a). Otherwise assume (b).
If $z\neq y$ then
(a) holds for $T',H'$. 
If $z=y$ then (by typing) $Q_i = \pbot{z}{Q'}$ and $\srecc{w}{q}{y}\in H$.

If $q\neq\nilm$ then we  have $(Q_i, H) \Mapsto$ by [S$\bot$c].

If $q=\nilm$ then $A=\dual B$ so $P\neq\zero$.

let $U_P=T\downarrow \fn{P}$. By i.h., there is $P_i\in U_P$ 
such that $(P_i,H)\Mapsto $ 
or (b) 
and $\obs{P_j}{u}^{H'}$.
In case (a) we conclude. Otherwise assume (b).
If $u\neq x$ then
(a) holds for $T',H'$. 
If $u=x=w$ then (by typing) $P_j = \pone{z}$ and then we
have $(P_j, H'[\srecc{x}{\nilm}{y}]) \Mapsto (\zero,H'[\srecc{x}{\checkmark}{y}]$ by [S$\one$c]
and $(Q_i, H'[\srecc{x}{\nilm}{y}]) \Mapsto C$ by [S$\bot$c].
\end{proof}
We conclude:
\samcprogress*
\begin{proof}
By Lemma~\ref{livenessc} since $\fn{P}=\emptyset$.
\end{proof}

---}


\end{document}
